\let\emptyset\varnothing
\newcommand{\Te}{\mathcal{T}}
\newcommand{\Le}{\mathcal{L}}
\newcommand{\Se}{\mathcal{S}}
\newcommand{\Me}{\mathcal{M}}
\newcommand{\Ne}{\mathcal{N}}
\newcommand{\Pe}{\mathcal{P}}
\newcommand{\Ke}{\mathcal{K}}
\newcommand{\Ze}{\mathcal{Z}}
\begin{document}



\markboth{Valdez-Cabrera \& Willis}{Distances between Extension Spaces}

\title{\vspace{-1cm}Geometry of the space of phylogenetic trees with non-identical leaves}


\author{MAR\'IA A. VALDEZ-CABRERA}
\affil{Department of Biostatistics, University of Washington,\\ Seattle, WA, U.S.A.
\email{mariavc@uw.edu}}

\author{\and AMY D. WILLIS}
\affil{Department of Biostatistics, University of Washington,\\ Seattle, WA, U.S.A.
\email{adwillis@uw.edu}}

\maketitle






\newboolean{DEBUG}
\setboolean{DEBUG}{false}

\newrgbcolor{amycolor}{.5 .1 .99}
\ifthenelse {\boolean{DEBUG}}
{\newcommand{\amy}[1]{{\amycolor{[\@Amy: #1]}}}}
{\newcommand{\amy}[1]{}}

\newrgbcolor{mariacolor}{.392 .585 .93}
\ifthenelse {\boolean{DEBUG}}
{\newcommand{\maria}[1]{{\mariacolor{[\@María: #1]}}}}
{\newcommand{\maria}[1]{}}

\ifthenelse {\boolean{DEBUG}}
{\newcommand{\mariadraft}[1]{{\mariacolor{#1}}}}
{\newcommand{\mariadraft}[1]{#1}}

\begin{abstract}
Phylogenetic trees summarize evolutionary relationships. The Billera-Holmes-Vogtmann (BHV) space for comparing phylogenetic trees has many elegant mathematical properties, but it does not encompass trees with differing leaf sets. To overcome this, we introduce Towering space: a complete metric space that extends BHV space to trees with non-identical leaf sets. Towering space is a structured collection of BHV spaces connected via pruning and regrafting operations. We study the geometry of paths in Towering space and present an algorithm for computing metric distances. By addressing a major limitation of BHV space, Towering space facilitates the analysis of modern phylogenetic datasets such as multi-domain gene trees.
\end{abstract}



\section{Introduction}
\label{sec:Introduction}

Phylogenetic trees represent evolutionary relationships between organisms. While a single phylogenetic tree can describe the shared evolution of large, complex eukaryotes, the genes of simple organisms may meaningfully differ in their evolutionary histories. This is in part because 
genes are more commonly transferred, duplicated, gained, or lost in bacteria, archaea, and viruses than in eukaryotes \citep{Jain.etal1999, Koonin.etal2001}. In these organisms, different genes have distinct evolutionary histories, and a collection of ``gene trees'' offers a more complete summary of phylogenetic relationships than a single tree.  

Unfortunately, analyzing collections of phylogenetic trees remains challenging in practice. 
A major advancement in mathematical phylogenetics was the introduction of the BHV tree space \citep{BILLERA2001}, a metric space for phylogenetic trees that accounts for differences between trees with respect to both their topologies and their branch lengths. Geodesics in BHV space are smooth, with trees along the path deforming in a locally continuous manner. The polynomial-time algorithm of \cite{OwenMegan2011} cemented the impact of BHV space, facilitating the development of BHV-based tools for estimating means, variances, principal paths and densities  \citep{blow, bbb, willis2018uncertainty, brown2020mean, nye11, nyepca, weyenberg2014kdetrees}, constructing confidence sets \citep{willis2019confidence}, and clustering \citep{gori2016clustering}. 
Unfortunately, BHV space suffers a major limitation that precludes its applicability to modern datasets: the BHV distance is only defined between trees with the same leaf set. If an organism is missing even one gene (such as due to gene loss or data incompleteness), either the gene tree or the organism must be discarded from a BHV-based analysis. 
To address this shortcoming, we introduce a metric space for phylogenetic trees with non-identical leaf sets. Our proposal is closely connected to BHV space, and shares its property that distances account for both topological and edge length differences between trees.

Previous attempts to extend BHV space to accommodate trees with non-identical leaf sets considered combinatorial methods to transition between BHV tree spaces \citep{ren2017combinatorial}, and constructed representative sets within the largest possible BHV space for all trees \citep{GrindstaffGillian2019RoPL}. However, these approaches yield dissimilarity measures, not metrics.  
Our approach yields a metric, and paths have a biological interpretation reflecting gene gain and loss, in addition to sequence divergence. 
Because our proposed paths traverse nested BHV spaces of differing dimension, we call it the \textbf{Towering space}.  

While most phylogenetic tree distances are defined only for trees with the same leaf set, our proposal is by no means the first distance that accommodates non-identical leaf sets. The Robinson-Foulds distance --- a well-known and simple distance that reflects only topological differences --- has multiple extensions to accommodate non-identical leaf sets (reviewed in \cite{LiWanlin2024Copt}). However, to our knowledge, no (proper) metric has been defined for trees with non-identical leaf sets that accounts for differences in both topology and branch lengths. Our work fills this gap. 

Our paper is structured as follows: Section \ref{sec:Background} introduces notation and summarizes key elements of BHV space. Section \ref{sec:PreliminaryStructures} discusses the geometry of connections between BHV spaces via leaf prunings and regraftings. Section \ref{sec:BuildingUp} formally defines Towering space and describes the geometry of geodesics, which may traverse multiple BHV spaces. We conclude the paper in Section \ref{sec:discussion} with a discussion of the implications of the results, applications to broader biological studies, and future work. Proofs of all results can be found in the Supplementary Material.

\section{Background}
\label{sec:Background}

\subsection{Preliminaries}

Phylogenetic trees are labeled tree-graphs that describe the shared evolutionary history of a set of organisms. The branching of the tree reflects the structure of the organisms' evolutionary relationships, and branch lengths represent the distance between divergence events. Our proposal is equivalent for trees with and without a ``root'' organism. For generality, we proceed by considering unrooted trees. 

\begin{definition}
    \label{Def:PhyloTree}
    A \textbf{phylogenetic tree} $T$ on a set of organisms $\mathcal{L}$ is a connected edge-weighted acyclic graph whose \textbf{leaves} (nodes of degree one) are labeled by the elements of $\mathcal{L}$, and all interior nodes have degree greater or equal to 3. All edge-weights are non-negative values referred to as \textbf{lengths}. The branching pattern of the graph gives the \textbf{topology} of the tree. 
\end{definition}

Removing an edge from a tree splits its leaves into two groups, and such a split of leaves uniquely identifies the edge in the tree.  
Throughout we use the notation $\mathcal{G} \big| \mathcal{L} \setminus \mathcal{G}$, with $\mathcal{G} \subset \mathcal{L}$, to represent the \textbf{split} on $\mathcal{L}$ that separates the leaves in $\mathcal{G}$ from $\mathcal{L} \setminus \mathcal{G}$. Edges that induce the same split on topologically distinct trees are considered to be the same edge. Therefore, we use the terms edge and split interchangeably. Edges connecting to leaves are called the \textbf{external} edges, which correspond to splits $|\mathcal{G}| = 1$ or $|\mathcal{L}|-1$, while all others, with $2 \leq |\mathcal{G}| \leq |\mathcal{L}| - 2$, are the \textbf{internal} edges. The topology of a tree is fully defined by all its internal splits \citep[Theorem 2.34]{AlgStatCompBio}. We use $\mathcal{S}(T)$ to refer to the set of internal splits in $T$ (equivalently, the topology of $T$), 
while $\mathcal{H}(T)$ refers to the set of external splits and $\mathcal{P}(T) = \mathcal{S}(T) \cup \mathcal{H}(T)$ the entire set of edges. 
Only certain combinations of splits can be present on a tree. We say two splits are \textbf{compatible} \citep[Section 3.2]{BILLERA2001} if they can simultaneously be present on a tree. 


\subsection{The Billera-Holmes-Vogtmann tree space}

\citet{BILLERA2001} introduced the BHV tree space $(\mathcal{T}^{\mathcal{N}}, d_{\text{BHV}})$, where $\mathcal{T}^{\mathcal{N}}$ denotes the set of all trees with leaf set $\mathcal{N}$, and $d_{\text{BHV}}: \mathcal{T}^{\mathcal{N}} \times \mathcal{T}^{\mathcal{N}} \mapsto \mathbb{R}_{\geq 0}$ is a distance function. This metric space is Hadamard \citep{bacak2014computing}, 
which guarantees continuous paths connecting any pair of trees, as well as a unique path achieving the minimum length, called the \textit{geodesic}. The space we define in Section \ref{sec:BuildingUp} is comprised of connected BHV spaces. 

Consider all possible topologies for trees with leaf set $\mathcal{N}$ of cardinality $n = |\mathcal{N}|$. Equivalently, consider all possible subsets of internal splits on $\Ne$ that are pairwise compatible. For one such subset $S$, assign an order to its elements ($S = \{s_1, s_2, \hdots, s_m\}$) in any consistent way; for example, a lexicographic order can be assigned to the leaves and the edges may inherit this order. A tree with the topology given by $S$ may be represented by a $(n+m)-$dimensional non-negative vector, where the first $n$ coordinates represent the lengths of the external edges and the last $m$ coordinates represent the lengths of the internal branches in the given order (Figure \ref{fig:OneTopology}). The vectors representing possible trees with topology $S$ form an orthant in $\mathbb{R}^{n+m}$, called the \textbf{topology orthant} and denoted by $\mathcal{O}(S)$  \citep[Section 2.1]{OwenMegan2011}. 

\begin{figure}
    \centering
    \subfigure[]{\includegraphics[width = 0.4\textwidth]{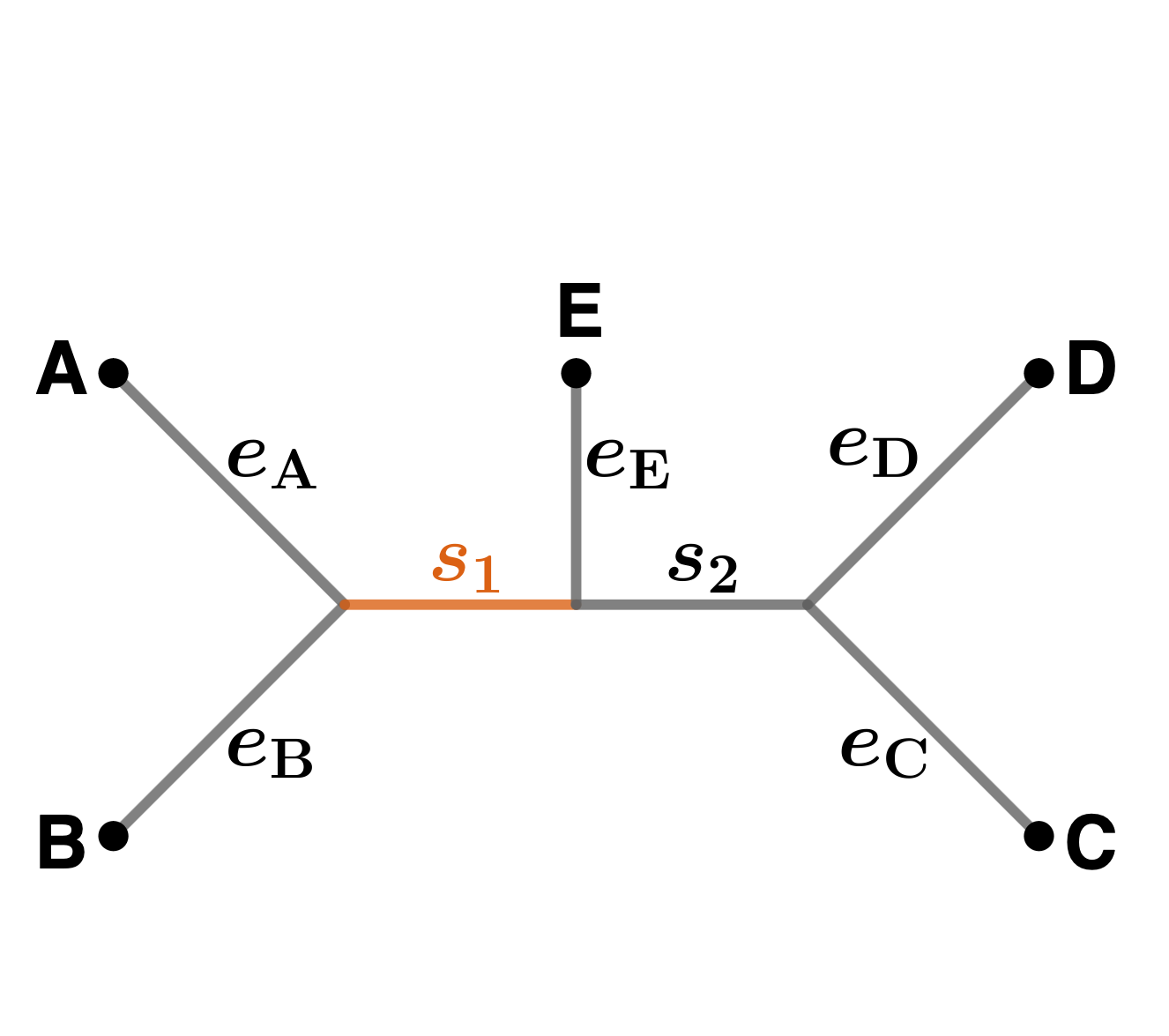}
    \label{fig:OneTopologyA}}
     \subfigure[]{\includegraphics[width = 0.46\textwidth]{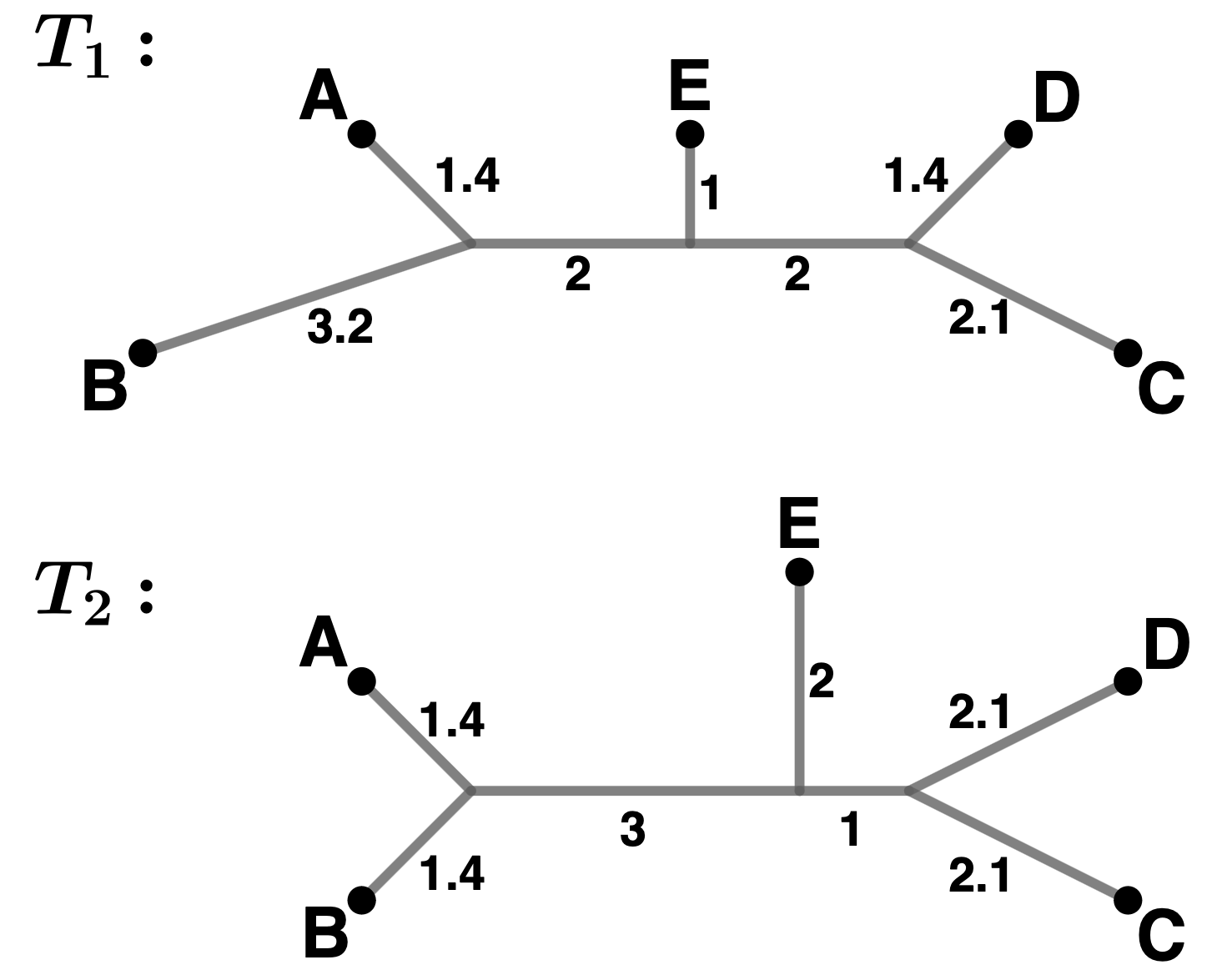}
     \label{fig:OneTopologyB}}
     	
    \caption{(a) A phylogenetic tree with leaf set $\mathcal{N} = \{A, B, C, D, E\}$ and internal split $s_1 = \{A,B\} \big| \{C,D,E\}$ highlighted in orange. This topology can be summarized by 7-dimensional vectors $(e_A, e_B, e_C, e_D, e_E, s_1, s_2)$. (b) Two different trees with the same topology. In its topology orthant, $T_1$ corresponds to the vector $(1.4, 3.2, 2.1, 1.4, 1, 2, 2)$ while $T_2$ corresponds to the vector $(1.4, 1.4, 2.1, 2.1, 2, 3, 1)$.}
    \label{fig:OneTopology}
\end{figure}

A tree $T$ with an internal edge $s \in \mathcal{S}(T)$ of length zero represents the same graph as the tree without $s$ among its internal edges, provided all other edges remain unchanged (Figure \ref{fig:OrthantFacesA}). These two trees are therefore considered to be the same. Furthermore, since two additional orthants contain this tree, these three orthants are connected at this point. More generally, given a set of internal splits $S$ with $m$ elements, any topology orthant $\mathcal{O}(S')$ corresponding to a proper subset $S' \subset S$ of these internal edges can be viewed as a \textbf{face} of the larger topology orthant $\mathcal{O}(S)$  \citep[Section 2]{BILLERA2001}, corresponding to the trees with length zero in all edges in the difference $S \setminus S'$. 
Two topology orthants $\mathcal{O}(S_1)$ and $\mathcal{O}(S_2)$ such that $S := S_1 \cap S_2$ are glued along the face corresponding to the topology orthant $\mathcal{O}(S)$ (Figure \ref{fig:OrthantFacesB}). The only topology orthants that are not faces of another topology orthant are those of maximum dimension $2n-3$, corresponding to the topologies of binary trees.

\begin{remark}
	\label{Remark:dropInternalEdgesZero}
	For any phylogenetic tree, there is a unique representation of its topology corresponding to the subset of internal edges $S$ with positive lengths in the tree; i.e. dropping any internal edge of length zero from the set of internal edges. Henceforth, when referring to a tree $T$, $S(T)$ contains strictly positive edges unless indicated otherwise. We use $\mathcal{O}(T)$ to refer to the lowest-dimensional orthant containing $T$.
\end{remark}

A \textit{continuous} path can be traced between any two trees, where a path is a piecewise connected curve, each piece fully contained within a single orthant, and contiguous pieces connected through common faces of these orthants. Since each piece is a curve in a $\mathbb{R}^{n+m}$ subspace, the lengths of the edges of the trees along these pieces change smoothly. Meanwhile, the transition from an orthant to the next requires some of these lengths to reduce to zero, at which point those zero-length edges are replaced by others. Thus, a path in BHV space represents how to gradually morph one tree into another by slowly modifying the edge lengths and topology. BHV space is a complete geodesic space with non-positive curvature  \citep[Lemma 4.1]{BILLERA2001}, and thus the shortest path between any two trees (the \textbf{geodesic}) is unique. The \textbf{BHV distance} between the two trees $d_{\text{BHV}}(T_1, T_2)$ is the length of the geodesic that connects them, as this geodesic describes the most efficient method to transform $T_1$ into $T_2$. 


\subsection{Geodesics in BHV space}
\label{sec:OwenProvan}

The geometrical properties of BHV space enabled the construction of an algorithm to compute distances that is polynomial-time in the number of leaves $n$ \citep{OwenMegan2011}. This algorithm results in a closed-form expression for the length of a geodesic. Here, we describe the elements of this expression, as they will be relevant when discussing distances in  Towering space. 

\begin{definition}
    \label{Def:PathSpace}
     \citep[Definition 3.3]{owen1} For trees $T_1$ and $T_2$, consider the sets of internal splits $S_1 = \mathcal{S}(T_1)$, $S_2 = \mathcal{S}(T_2)$ and the set of ``common'' internal splits $C \subset S_1 \cup S_2$ such that $s \in C$ implies $s$ is pairwise compatible with all edges in $S_1 \cup S_2$. For a sequence of subsets $S_1 = G_0 \supset G_1 \supset \hdots \supset G_{k-1} \supset G_{k} = C$ and $C = F_0 \subset F_1 \subset \hdots \subset F_{k-1} \subset F_{k} = S_2$ such that $G_i \cup F_i$ is a set of pairwise compatible splits, denote by $O_i = \mathcal{O}(G_i \cup F_i)$ for all $i \in \{0,\hdots, k\}$. The sequence of orthants,
\begin{equation*}
   \mathcal{O}(T_1) = O_0  \rightarrow O_1 \rightarrow \hdots \rightarrow O_k = \mathcal{O}(T_2),
\end{equation*}
is a \textbf{path space} from $T_1$ to $T_2$. 
\end{definition}


\begin{definition}
    \label{Def:Support}
     \citep[Section 2.3]{OwenMegan2011} Consider a path space $O_0 \rightarrow \hdots \rightarrow O_k$. When transitioning from $O_{i-1}$ to $O_{i}$, denote by $A_i = G_{i-1} \setminus G_{i}$ the set of dropped edges and by $B_i = F_{i} \setminus F_{i-1}$ the set of added edges. For ordered sets $\mathcal{A} = \left\{A_1, \hdots, A_k\right\}$ and $\mathcal{B} = \left\{B_1, \hdots, B_{k}\right\}$, the pair $(\mathcal{A}, \mathcal{B})$ gives the \textbf{support} of the path space. Each $(A_i,B_i)$ is also called a \textbf{support pair}.
\end{definition}

\begin{figure}
    \centering
    \subfigure[]{\includegraphics[width = 0.45\textwidth]{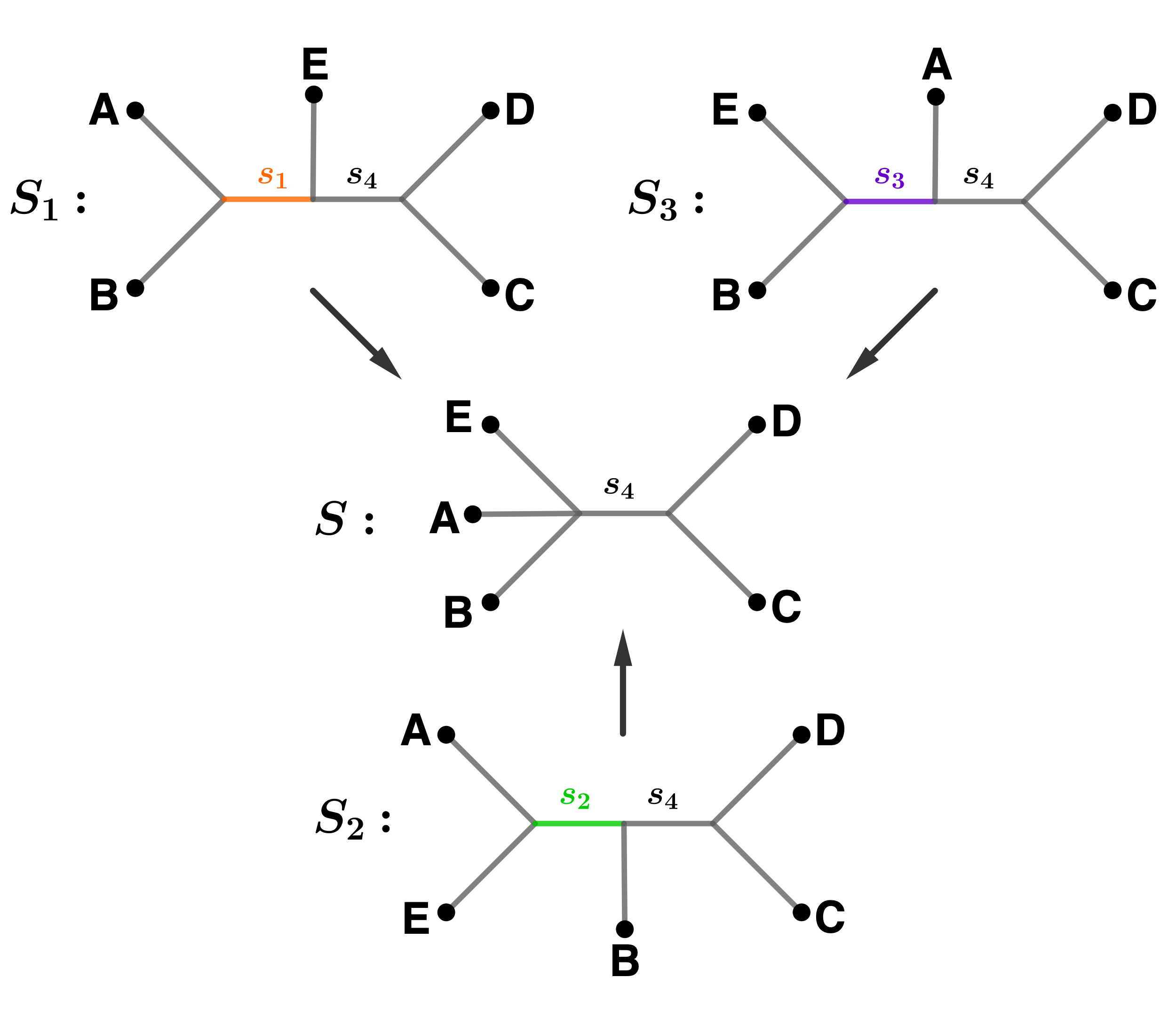}
    \label{fig:OrthantFacesA}}
   \subfigure[]{\includegraphics[width = 0.49\textwidth]{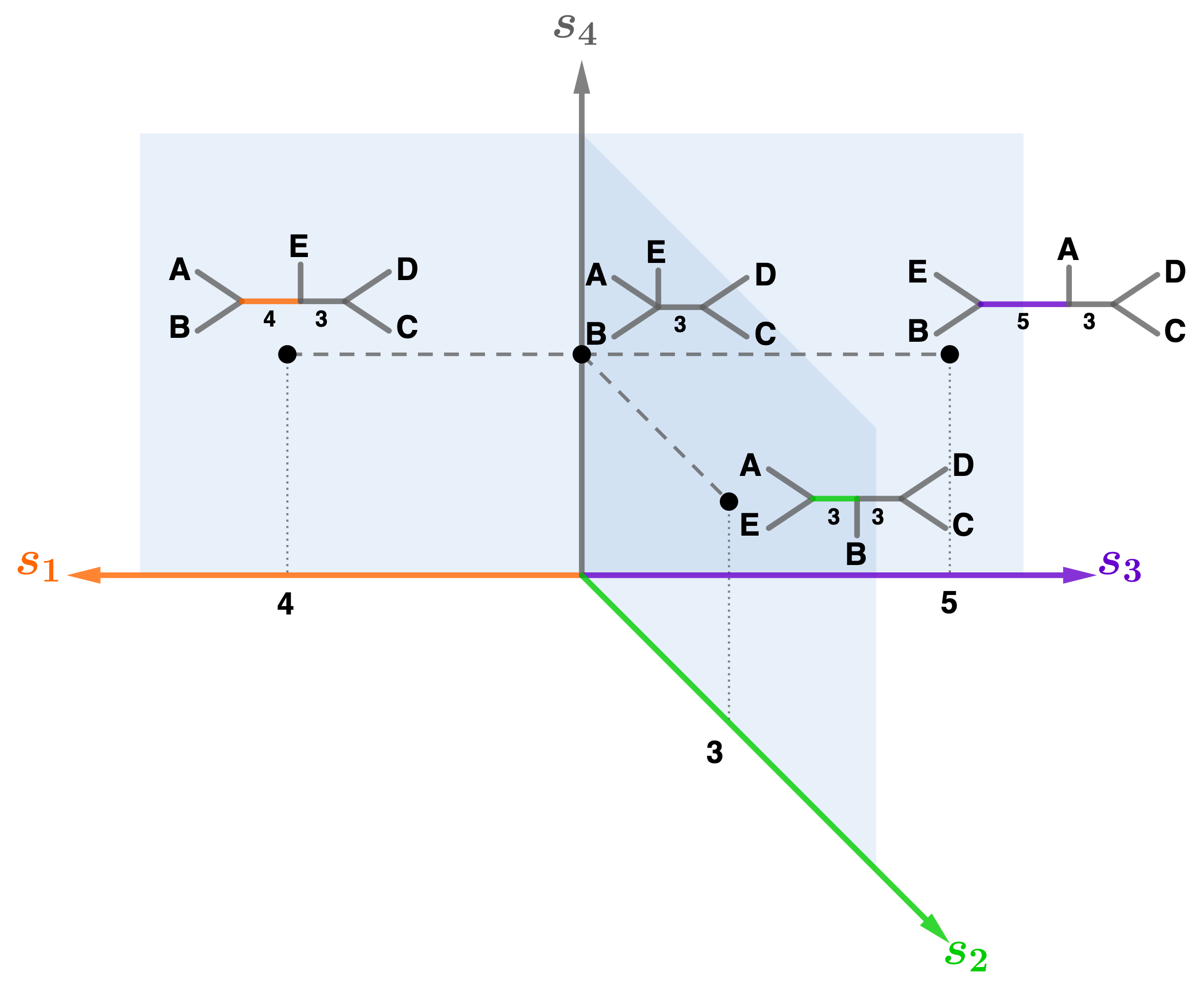}
   \label{fig:OrthantFacesB}}
    \caption{(a) Topologies $S_1$, $S_2$ and $S_3$ share topology $S$ as a face. The topology $S_i$ is composed of internal edges $S_i = \{s_i, s_4\}$, while the topology $S$ only has $S = \{s_4\}$; thus $\mathcal{O}(S)$ is a face of $\mathcal{O}(S_i)$ for $i=1, 2, 3$. (b) The topology orthants $\mathcal{O}(S_1)$, $\mathcal{O}(S_2)$ and $\mathcal{O}(S_3)$ are ``glued'' at $\mathcal{O}(S)$. The dimensions arising from external edges are not shown.}
    \label{fig:OrthantFaces}
\end{figure}

A path space describes an efficient way to move from the topology orthant of $T_1$ to the topology orthant of $T_2$ through connected orthants. By \citet[Proposition 4.1]{BILLERA2001}, we know the geodesic will be fully contained in a path space. 
Given the support for that path space, the length of the geodesic can be expressed by the $L^2$ norm of each $A_i$ and $B_i$ in the support and the difference in length of the common edges, including the external edges $H = \mathcal{H}(T_1) = \mathcal{H}(T_2)$. Denoting by $|p|_{T}$ the length of the edge $p$ in tree $T$ and by $||P||_{T} = \sqrt{\sum_{p \in P} |p|_T^{2}}$ the $L^2$ norm of the lengths in $T$ of all edges in the set of splits $P$, the length of the geodesic can be written in these terms by

\begin{equation}
\label{eq:FirstDistance}
        d_{\text{BHV}}(T_1, T_2) = \sqrt{\sum_{i =1}^{k}\left(||A_i||_{T_1} + ||B_i||_{T_2}\right)^2 + \sum_{s\in K} \left(|s|_{T_1} - |s|_{T_2}\right)^2} 
\end{equation}
where $K = C \cup H$  (see details in \citet[Section 4]{OwenMegan2011} and \citet[Theorem 1.2]{MillerOwenProvan}). 

We conclude by noting that once 
the best way to group the uncommon edges between the two trees through the support pairs is found, the distance is fully determined. The length of each edge in both trees contributes to this distance depending on whether they are a common or an uncommon edge. 
As we will see in Section \ref{sec:ShortPathsHighBHVspaces}, the mechanisms that determine the distance in our metric space will follow a similar pattern.

\section{Transitions between BHV spaces}
\label{sec:PreliminaryStructures}

Although BHV space offers a practical and intuitive metric for comparing trees, it is only defined when trees have exactly the same leaves. This is a significant limitation for many important applications, including the study of genes shared by distantly related organisms and the origins of complex, multicellular life. \citet{ren2017combinatorial} addressed the restriction of BHV space to trees with identical leaves by introducing structured subsets in a BHV space formed by adding leaves to trees with fewer leaves and studied their combinatorial properties. 
Building on this, \citet{GrindstaffGillian2019RoPL} introduced Extension spaces, subspaces of BHV space that represent trees with fewer leaves, and proposed a dissimilarity measure between trees based on these spaces. More recently, \citet{VW_Extensions} developed a linear-optimization-based algorithm to compute shortest paths between Extension spaces, extending the dissimilarity concept to arbitrary tree pairs. However, this dissimilarity is not a distance, precluding the development of sophisticated analysis tools that rely on the triangle inequality, including (Fr\'echet) averages and principal paths. 

In this section, we propose a new approach to connecting BHV spaces. Our approach overcomes limitations of prior work, including that it leads to a true distance (rather than a dissimilarity measure) with favorable geometric properties. This new metric also supports straightforward, BHV-like computations, avoiding  gradient-based optimization procedures like those in \citet{VW_Extensions}.

\subsection{Maps between topologies}
 
Removing and adding leaves from a tree is a common operation in phylogenetics and combinatorics,  and forms the basis for the subtree prune-and-regraft (SPR) distance between tree topologies \citep{HeinJotun1996Otco}. 
As in \citet{ren2017combinatorial, GrindstaffGillian2019RoPL}, we consider the 
\textit{tree dimensionality reduction map} (TDR) from \citet[Definition 4.1]{zairis2016genomic}, which removes leaves from a tree graph (Figure \ref{fig:Prunable_Projection}, right). The removal results in edges connected by two-degree nodes that must be merged to give a valid tree. 
In the TDR map, the lengths of edges resulting from merging are defined to be the sum of the adjacent edges, or equivalently, the $L^1$ norm of the lengths of these edges. 
Using a norm to define the new lengths results in favorable properties, including consistent lengths under sequential leaf removals (a result of commutativity and associativity of norms). 
 
We employ the modified version of the TDR map given in \citet[Definition 2.4]{GrindstaffGillian2019RoPL}, which presents the TDR map as a function on edges rather than trees. As we will see later (Definition \ref{Def:LeafPrune}), we then employ the $L^2$ norm to assign lengths to merged edges, instead of the $L^1$ norm. The advantageous geometrical properties of $L^2$ merging will become apparent in Section \ref{sec:BuildingUp}. 

\begin{definition}
	\label{Def:TDRmap_splits}
     For $\Le' \subseteq \Le$ and a split $s = \mathcal{G} \big| \Le\setminus\mathcal{G}$ on $\mathcal{L}$, the \textbf{tree dimensionality reduction} (TDR) map is the projection function $\Psi_{\Le'}$ that returns the split on $\Le'$ that separates all leaves of $\mathcal{L}'$ in $\mathcal{G}$ from those not in $\mathcal{G}$; that is, $\Psi_{\Le'}(s) = \Le'\cap\mathcal{G} \big| \Le' \setminus \mathcal{G}$. 
\end{definition}

\begin{remark}
    If either $\mathcal{G}\cap\mathcal{L}' = \emptyset$ or $\mathcal{L}' \setminus \mathcal{G} = \emptyset$, the projection $\Psi_{\mathcal{L}'}(s)$ is not a valid edge. This happens when this split is not part of any path inside the tree connecting leaves in $\mathcal{L}'$. In these cases we write $\Psi_{\mathcal{L}'}(s) = \emptyset$.
\end{remark}

Our proposed leaf pruning operation differs slightly from the literature in that we restrict the set of trees on which this operation can be performed. Similar to dropping zero-length \textit{internal} edges when transitioning between \textit{topology orthants} in BHV space, only zero-length \textit{external} edges can be dropped to transition between BHV spaces. The trees where leaf pruning may be performed are defined below.   

\begin{definition} 
	\label{Def:prunable_All}
	Given a subset of leaves $\Me \subset \Le$ and $T \in \Te^{\Le}$, the leaves in $\Me$ are \textbf{prunable} from $T$ if each external edge to a leaf in $\Me$ is of length zero, and every internal edge of $T$ maps to an internal split of the leaves $\Le' = \Le \setminus \Me$ under the TDR map. That is, for every $s \in \Se(T)$, $\Psi_{\Le'}(s) = \Le_1 \big| \Le' \setminus \Le_1$ such that $|\Le_1|, |\Le \setminus \Le_1| \geq 2$.

If $\Me$ is not  prunable from $T$, there is a set of edges (external edges of positive length and internal edges) preventing $\Me$ from being prunable from $T$. We denote this set as $P^{\downarrow \Me}(T)$:
	\begin{equation*}
	\label{Eq:Trimmable_ProjectedEdges}
	P^{\downarrow \Me}(T) = \left\{p \in \Pe(T) \mid \Psi_{\Le'}(p) = \emptyset \text{ or } \Psi_{\Le'}(p) = \{\ell\} \big| \Le' \setminus \{\ell\} \text{ for some leaf } \ell \right\}.
	\end{equation*}
Of note, all edges that belong to $P^{\downarrow \Me}(T)$ are of the form $\Me' \big| \Le \setminus \Me'$ and $\Me' \cup \{\ell\} \big| \Le \setminus \left(\Me' \cup \{\ell\}\right)$ for some non-empty subset $\Me'\subseteq \Me$. 

\end{definition}
	
\begin{figure}
    \centering
    	\includegraphics[width = 0.9\textwidth]{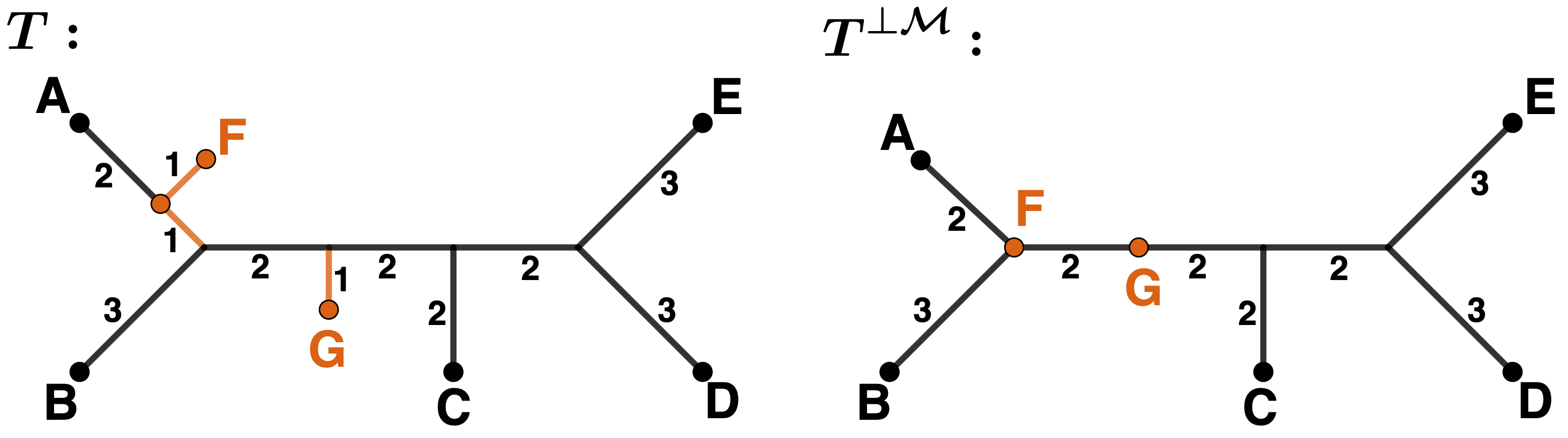}
    \caption{The projection of $T$ (left) with leaf set $\Le = \{A,B,C,D,E,F,G\}$ onto the $\Me$-trimmable space for $\Me = \{F,G\}$ (right). The edges in $P^{\downarrow \Me}(T)$ preventing $\Me$ from being prunable from $T$ are shown in orange. However, the same set is prunable from $T^{\perp \Me}$. The distance between the two trees is $\sqrt{3}$. When $\{F,G\}$ is pruned from $T^{\perp \Me}$, the edge length of $\{A, B\} | \{C, D, E\} $ is the $L^2$ norm of the adjacent edges, $\sqrt{2^2 + 2^2}$.}
    \label{fig:Prunable_Projection}
\end{figure}

This definition ensures that when leaves are removed, all internal edges map to internal edges. This restriction aids in maintaining a strictly positive distance between \textit{non-deformed trees} (trees with strictly positive edges) that are clearly distinct from each other. Later, we will use leaf prunings and regraftings as zero-distance operations. That is, two trees connected through these operations will be at distance zero from each other. If a leaf could be removed from or regrafted onto an external edge, then leaves could be just ``swapped.'' For example, without the additional restriction, all trees in Figure \ref{fig:SeveralJumps} would be at distance zero, which is clearly undesirable. 

\begin{figure}
    \centering
    	\includegraphics[width = 0.95\textwidth]{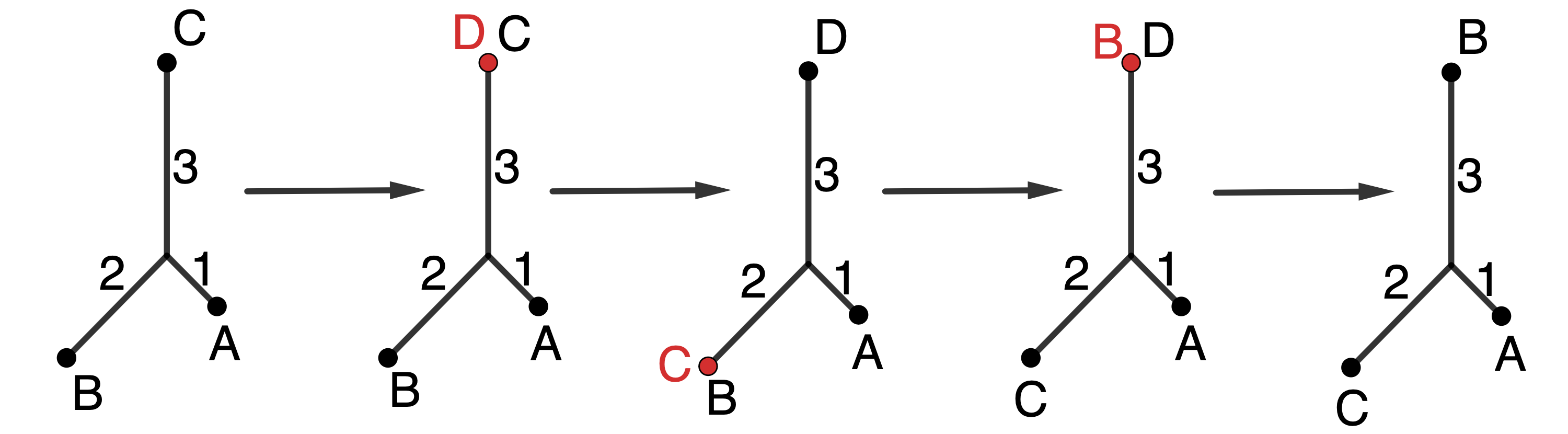}
    \caption{We impose the restriction that leaves can only be pruned from internal, not external, edges. Without this restriction, trees that are clearly distinct would be at distance zero, as in this example. A new leaf $D$ is first regrafted directly next to leaf $C$, then $C$ is pruned and regrafted next to $B$, which is later pruned and regrafted next to $D$, then $D$ is finally pruned. The resultant tree is clearly distinct from the initial tree, but all transitions were zero distance. 
	}
    \label{fig:SeveralJumps}
\end{figure}

Given two BHV spaces $\Te^{\Le}$ and $\Te^{\Le'}$, we say $\Te^{\Le}$ is a \textbf{higher level} than $\Te^{\Le'}$ (and $\Te^{\Le'}$ is a \textbf{lower level} than $\Te^{\Le}$) whenever $\Le'\subset \Le$. 
As trees with prunable leaves offer junctions between BHV levels, they comprise an important subspace, which we now consider, followed by some convenient properties.

 \begin{definition}
	\label{Def:Trimmable_subspace}
	Consider a subset of leaves $\Me \subset \Le$ and BHV space $\Te^{\Le}$. The \textbf{$\Me$-trimmable subspace}, denoted by $\Ze_{\Me}^{\Le}$, is the subset of trees in $\Te^{\Le}$ where $\Me$ is  prunable. 
\end{definition}

\begin{lemma}
	\label{lemma:Trimmable_convex}
	For any BHV space $\Te^{\Le}$ and a subset of leaves $\Me \subset \Le$, 
	$\Ze_{\Me}^{\Le}$ is a convex and closed subspace. 
\end{lemma}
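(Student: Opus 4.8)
The plan is to reduce the statement to a single-split claim and then take an intersection. First I would reformulate membership in $\Ze_{\Me}^{\Le}$ purely in terms of edge lengths. By Definition \ref{Def:prunable_All}, $\Me$ is prunable from $T$ exactly when $P^{\downarrow\Me}(T)=\emptyset$, and the remark following that definition identifies the only edges that can lie in $P^{\downarrow\Me}(T)$ as the splits of the two forms $\Me'\big|\Le\setminus\Me'$ and $\Me'\cup\{\ell\}\big|\Le\setminus(\Me'\cup\{\ell\})$ with $\emptyset\neq\Me'\subseteq\Me$. (The converse is immediate: any split of these forms projects under $\Psi_{\Le'}$ to $\emptyset$ or to an external split of $\Le'$, so it is forbidden whenever it has positive length.) These forms depend only on $\Me$ and $\Le$, not on $T$, so collecting them into a fixed set $\mathcal{F}=\mathcal{F}(\Me,\Le)$ of splits gives
\[
\Ze_{\Me}^{\Le}=\Big\{\,T\in\Te^{\Le}:\ |f|_T=0\ \text{ for all } f\in\mathcal{F}\,\Big\}=\bigcap_{f\in\mathcal{F}} Z_f,\qquad Z_f:=\{T\in\Te^{\Le}:|f|_T=0\}.
\]
Since an arbitrary intersection of convex (respectively closed) subsets of a uniquely geodesic space is convex (respectively closed), it suffices to prove each $Z_f$ is convex and closed.

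For closedness of $Z_f$ I would show that the single-split length map $\ell_f\colon T\mapsto|f|_T$ is $1$-Lipschitz with respect to $d_{\text{BHV}}$, whence $Z_f=\ell_f^{-1}(\{0\})$ is closed. This follows from the distance formula \eqref{eq:FirstDistance}: for any $T_1,T_2$, if $f$ is a common edge (an element of $K$) it contributes the summand $(|f|_{T_1}-|f|_{T_2})^2$; if $f$ is not common then it has positive length in at most one of $T_1,T_2$, and when it does it lies in a single support set $A_i$ or $B_i$, whose summand is at least that length squared while the other length is zero. In every case $\big||f|_{T_1}-|f|_{T_2}\big|\le d_{\text{BHV}}(T_1,T_2)$.

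For convexity of $Z_f$ I would take $T_1,T_2\in Z_f$ and follow the (unique) geodesic $\gamma$ between them, splitting by the type of $f$. If $f$ is an external split it is a common edge for every pair of trees, and along the Owen--Provan geodesic common edges vary by linear interpolation; since $|f|_{T_1}=|f|_{T_2}=0$ its length stays zero, so $\gamma\subseteq Z_f$. If $f$ is an internal split, then $|f|_{T_1}=|f|_{T_2}=0$ together with Remark \ref{Remark:dropInternalEdgesZero} gives $f\notin\Se(T_1)\cup\Se(T_2)$. By \citet[Proposition 4.1]{BILLERA2001} the geodesic lies in a path space $\mathcal{O}(T_1)=O_0\to\cdots\to O_k=\mathcal{O}(T_2)$ with $O_i=\mathcal{O}(G_i\cup F_i)$, $G_i\subseteq\Se(T_1)$ and $F_i\subseteq\Se(T_2)$; hence every internal split carried with positive length by any tree on $\gamma$ belongs to $\Se(T_1)\cup\Se(T_2)$, which excludes $f$. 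Thus $|f|_{\gamma(t)}=0$ for all $t$ and again $\gamma\subseteq Z_f$.

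The main obstacle is the internal-split case of convexity, namely ruling out the \emph{transient} appearance of a forbidden topology in the interior of the geodesic. The resolution is exactly the path-space structure of Definition \ref{Def:PathSpace}: a BHV geodesic only ever uses internal splits already present in one of its two endpoints, so no forbidden split can be created en route. By contrast the condition that the external edges of $\Me$ vanish is the easy, linear part, handled by linear interpolation of common edges. Assembling the two cases yields convexity of each $Z_f$, and together with the $1$-Lipschitz bound for closedness the intersection $\Ze_{\Me}^{\Le}=\bigcap_{f\in\mathcal{F}}Z_f$ is convex and closed.
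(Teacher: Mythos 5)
Your proof is correct and follows essentially the same route as the paper: closedness comes from the fact that individual edge lengths vary continuously (your $1$-Lipschitz bound is the quantitative version of the paper's ``small ball around a non-member stays outside'' argument), and convexity comes from the two facts that common/external edges interpolate linearly along the Owen--Provan geodesic and that, by the path-space structure, no internal split outside $\Se(T_1)\cup\Se(T_2)$ can appear with positive length along the geodesic. Your repackaging of membership in $\Ze_{\Me}^{\Le}$ as the intersection $\bigcap_{f\in\mathcal{F}}Z_f$ over the fixed family of forbidden splits is a clean organizational touch but not a substantively different argument.
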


Later on, efficient moves towards trees in $\Ze_{\Me}^{\Le}$ will be useful for finding short paths between trees in different BHV levels. The following result concerns the projection of a tree onto $\Ze_{\Me}^{\Le}$ (i.e. the point in the subspace that is closest to the tree), and follows directly from \citet[Proposition 2.6]{sturm}.

\begin{corollary}
	\label{cor:ProjectionGood}
	For every tree $T \in \Te^{\Le}$ there exists a unique tree  $T^{\perp \Me} \in \Ze_{\Me}^{\Le}$ such that $d_{\text{BHV}}(T,T^{\perp \Me}) = \inf_{t \in \Ze_{\Me}^{\Le}} d_{\text{BHV}}(T, t)$. Moreover,  $d_{\text{BHV}}^2(T,t) \geq d_{\text{BHV}}^2(T,T^{\perp \Me}) + d_{\text{BHV}}^2(T^{\perp \Me}, t)$ for every $t \in \Ze_{\Me}^{\Le}$. 
\end{corollary}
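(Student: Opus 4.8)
The plan is to recognize this corollary as a direct instance of the metric projection theorem for Hadamard (complete CAT(0)) spaces, which is exactly the content of \citet[Proposition 2.6]{sturm}. That result guarantees, for any nonempty closed convex subset $C$ of a Hadamard space $(M,d)$ and any point $x \in M$, a unique nearest point $\pi_C(x) \in C$, together with the comparison inequality $d^2(x,z) \geq d^2(x,\pi_C(x)) + d^2(\pi_C(x),z)$ for all $z \in C$. So the entire task reduces to checking that the hypotheses of that proposition hold in our setting and then reading off the conclusion with $M = \Te^{\Le}$, $C = \Ze_{\Me}^{\Le}$, $x = T$, and $\pi_C(T) = T^{\perp \Me}$.

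First I would verify that the ambient space is Hadamard. This is already recorded: $(\Te^{\Le}, d_{\text{BHV}})$ is a complete geodesic space of non-positive curvature \citep[Lemma 4.1]{BILLERA2001}, i.e.\ a complete CAT(0) space, which is precisely the notion of a global NPC (Hadamard) space used by \citet{sturm}. Completeness is the hypothesis that matters for the existence half of the statement, since it is what prevents the infimum $\inf_{t \in \Ze_{\Me}^{\Le}} d_{\text{BHV}}(T,t)$ from being approached but never attained. Second, I would invoke Lemma \ref{lemma:Trimmable_convex}, which establishes that $\Ze_{\Me}^{\Le}$ is convex and closed in $\Te^{\Le}$; these are the other two structural hypotheses Sturm requires. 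Third, I would confirm that $\Ze_{\Me}^{\Le} \neq \emptyset$, since the proposition requires a nonempty target set: this is immediate because any tree whose external edges to leaves in $\Me$ have length zero and whose internal edges all project to internal splits on $\Le \setminus \Me$ satisfies Definition \ref{Def:prunable_All} and hence lies in $\Ze_{\Me}^{\Le}$.

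With the three hypotheses in hand, applying \citet[Proposition 2.6]{sturm} yields existence and uniqueness of the minimizer $T^{\perp \Me} = \pi_{\Ze_{\Me}^{\Le}}(T)$ attaining $\inf_{t \in \Ze_{\Me}^{\Le}} d_{\text{BHV}}(T,t)$, and simultaneously the inequality $d_{\text{BHV}}^2(T,t) \geq d_{\text{BHV}}^2(T,T^{\perp \Me}) + d_{\text{BHV}}^2(T^{\perp \Me},t)$ for every $t \in \Ze_{\Me}^{\Le}$, which is exactly the claimed statement.

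Since the proof is an application of a known theorem, there is no deep obstacle; the only point demanding care is confirming that our notion of a closed convex subspace coincides with the hypotheses Sturm imposes (nonempty, geodesically convex, metrically closed). I would therefore make the nonemptiness check explicit, as it is the one hypothesis not already packaged in Lemma \ref{lemma:Trimmable_convex} or in the cited CAT(0) properties of BHV space.
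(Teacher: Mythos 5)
Your proposal matches the paper's argument exactly: the corollary is stated as following directly from \citet[Proposition 2.6]{sturm}, applied to the closed convex subspace $\Ze_{\Me}^{\Le}$ (Lemma \ref{lemma:Trimmable_convex}) inside the Hadamard space $(\Te^{\Le}, d_{\text{BHV}})$. Your additional explicit check of nonemptiness is a small but sensible refinement of the same route.
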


The projection of any tree $T \in \Te^{\Le}$ onto the $\Me$-trimmable subspace can be found directly by focusing on the edges in $P^{\downarrow \Me}(T)$ (see Figure \ref{fig:Prunable_Projection}). We now make this connection explicit and show how to compute the distance between trees and their projections. 

\begin{lemma}
	\label{lemma:ProjectionCharacterize}
	For $T \in \Te^{\Le}$ and a subset of leaves $\Me \subset \Le$, consider the set of edges $P^{\downarrow \Me}(T)$. The projection of $T$ onto $\Ze_{\Me}^{\Le}$ is the tree $T^{\perp \Me}$ with topology given by internal edges 
    $$\Se(T^{\perp \Me}) = \left\{s \in \Se(T) \mid s \notin P^{\downarrow \Me}(T) \right\},$$
    with the lengths of these edges coinciding with the corresponding lengths in $T$; i.e., $|s|_{T^{\perp \Me}} = |s|_{T}$ for every $s \in \Se(T^{\perp \Me})$. Similarly, all external edges not in $P^{\downarrow \Me}(T)$ are of the same length as the external edges in $T$, but all external edges $P^{\downarrow \Me}(T)$ are of length zero.  Furthermore, $d_{\text{BHV}}(T, T^{\perp \Me}) = ||P^{\downarrow \Me}(T)||_{T}.$
\end{lemma}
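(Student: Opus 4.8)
The plan is to verify that the candidate tree $T^{\perp \Me}$ described in the statement genuinely lies in $\Ze_{\Me}^{\Le}$ and is the unique nearest point, using the characterization of projections onto closed convex subspaces in an Hadamard space (Corollary \ref{cor:ProjectionGood} and \citet[Proposition 2.6]{sturm}). First I would confirm that $T^{\perp \Me}$ is well-defined as a tree: dropping all edges of $P^{\downarrow \Me}(T)$ (setting external edges in this set to length zero and removing internal edges in this set) leaves a set of internal splits that are still pairwise compatible, since they form a subset of $\Se(T)$, so $T^{\perp \Me}$ sits in a face of $\mathcal{O}(T)$. Next I would check membership $T^{\perp \Me} \in \Ze_{\Me}^{\Le}$, i.e.\ that $\Me$ is prunable from $T^{\perp \Me}$. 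By Definition \ref{Def:prunable_All} this requires every external edge to a leaf in $\Me$ to have length zero (true by construction, since those edges are exactly the $|\mathcal{G}|=1$ members of $P^{\downarrow \Me}(T)$) and every remaining internal edge $s \in \Se(T^{\perp \Me})$ to satisfy $\Psi_{\Le'}(s) = \Le_1 \big| \Le' \setminus \Le_1$ with both sides of size at least $2$. This last point is exactly where the definition of $P^{\downarrow \Me}(T)$ does its work: the internal edges we retained are precisely those \emph{not} in $P^{\downarrow \Me}(T)$, hence those whose TDR image is neither empty nor an external split of $\Le'$.

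The heart of the argument is the optimality/uniqueness claim. I would invoke the characterization that in an Hadamard space the projection $\pi(T)$ onto a closed convex set $\Ze$ is the unique point satisfying the obtuse-angle (variational) inequality against every other point of $\Ze$ --- this is the content of Corollary \ref{cor:ProjectionGood}, and it already guarantees existence and uniqueness of \emph{some} minimizer. So it suffices to show the specific tree in the statement \emph{is} that minimizer. My preferred route is to exhibit $T^{\perp \Me}$ as the endpoint of an explicit geodesic from $T$ that meets $\Ze_{\Me}^{\Le}$ orthogonally, and verify the Pythagorean inequality $d_{\text{BHV}}^2(T,t) \geq d_{\text{BHV}}^2(T, T^{\perp \Me}) + d_{\text{BHV}}^2(T^{\perp \Me}, t)$ for all $t \in \Ze_{\Me}^{\Le}$ directly; by the cited uniqueness, any point satisfying this against all $t$ must be the projection. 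The key structural observation is that $T$ and $T^{\perp \Me}$ share a common orthant (namely $\mathcal{O}(T)$ contains $T^{\perp\Me}$ in one of its faces), so the geodesic between them is a straight segment, and its squared length is exactly $\sum_{p \in P^{\downarrow \Me}(T)} |p|_T^2 = \|P^{\downarrow \Me}(T)\|_T^2$, since those are precisely the coordinates zeroed out. This simultaneously establishes the distance formula.

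For the inequality against a general $t \in \Ze_{\Me}^{\Le}$, I would use the path-space/geodesic length formula \eqref{eq:FirstDistance}. The edges in $P^{\downarrow \Me}(T)$ are all of the form $\Me'\big|\Le\setminus\Me'$ or $\Me'\cup\{\ell\}\big|\Le\setminus(\Me'\cup\{\ell\})$ for some $\emptyset \neq \Me' \subseteq \Me$ (the final observation of Definition \ref{Def:prunable_All}); crucially, \emph{no} tree $t$ in $\Ze_{\Me}^{\Le}$ carries any edge of this form with positive length, because prunability forces exactly those edges to be absent or zero. Hence in computing $d_{\text{BHV}}(T,t)$ the edges of $P^{\downarrow \Me}(T)$ are all ``uncommon'' edges of $T$ that must be dropped, and they contribute their full squared length to the $\sum_i \|A_i\|_{T}^2$ terms, while the transformation of the remaining edges of $T$ into $t$ is exactly the transformation of $T^{\perp\Me}$ into $t$. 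Separating these two contributions yields the Pythagorean relation with equality-of-first-term, and thus the inequality.

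The step I expect to be the main obstacle is making the last decomposition rigorous: showing that an \emph{optimal} path space from $T$ to $t$ can always be arranged so that the $P^{\downarrow \Me}(T)$ edges are dropped ``first'' and do not interact with the edges shared by $T^{\perp \Me}$ and $t$. This requires checking that the edges of $P^{\downarrow \Me}(T)$ are incompatible with (or irrelevant to) the edges built up toward $t$ in a way that lets them occupy their own support pairs, so the cross terms in the squared-length formula genuinely vanish. I would handle this by arguing compatibility-theoretically from the explicit form $\Me'\big|\Le\setminus\Me'$ of these edges versus the internal edges surviving in $T^{\perp\Me}$, rather than by manipulating the Owen--Provan optimization directly; the orthogonality guaranteed by Corollary \ref{cor:ProjectionGood} is the safety net that lets me conclude once the explicit minimizer is identified.
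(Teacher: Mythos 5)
Your construction of the candidate tree, the verification that it lies in $\Ze_{\Me}^{\Le}$, and the observation that $T$ and $T^{\perp\Me}$ share the orthant $\mathcal{O}(T)$ so that $d_{\text{BHV}}(T,T^{\perp\Me})=\|P^{\downarrow\Me}(T)\|_T$ all match the paper's proof. Where you diverge is the optimality step. The paper does something much lighter than what you propose: it only shows that $T^{\perp\Me}$ is a \emph{minimizer}, by noting that every $t\in\Ze_{\Me}^{\Le}$ must carry all edges of $P^{\downarrow\Me}(T)$ at length zero (prunability forbids them), and the lengths of a pairwise-compatible set of splits form a $1$-Lipschitz map into $\ell^2$, so any $t'$ with $d_{\text{BHV}}(T,t')<\|P^{\downarrow\Me}(T)\|_T$ still has some edge of $P^{\downarrow\Me}(T)$ positive and hence lies outside the subspace. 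Uniqueness and the Pythagorean inequality then come for free from Corollary~\ref{cor:ProjectionGood}; they need not be re-derived for the explicit candidate. Your main route instead tries to establish the Pythagorean inequality directly by surgery on an optimal path space, and the specific sub-claim you lean on --- that the $P^{\downarrow\Me}(T)$ edges can be arranged to ``occupy their own support pairs'' so that ``the cross terms genuinely vanish'' --- is not correct as stated: in an optimal support these edges will in general be grouped into the same $A_i$ as other dropped edges, and the terms $(\|A_i\|_{T}+\|B_i\|_{t})^2$ do not split additively; one only gets an inequality, not the clean decomposition. This is exactly the obstacle you flagged, and it is avoidable: your own ``safety net'' (identify the minimizer, then cite the projection corollary) is the paper's entire argument, so I would promote that from a fallback to the main line and drop the support-pair decomposition altogether.
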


\subsection{Pruning and regrafting} \label{subsec:prune_regraft}

Having introduced a mapping between tree topologies in different BHV spaces and the subspaces where leaves may be removed, we can now formally define leaf pruning and regrafting. These operations connect trees in $\Ze_{\Le \setminus \Le'}^{\Le}$ and $\Te^{\Le'}$ such that BHV distances across different BHV levels remain consistent and have nice geometric properties. To this end, the last step in the pruning operation assigns new edge lengths by using the $L^2$ norm of the lengths of corresponding edges in the original tree. 

\begin{definition}
	\label{Def:LeafPrune}
	Consider $T\in \Te^{\Le}$ with a  prunable subset of leaves $\Me$, and denote $\Le' = \Le \setminus \Me$. The \textbf{pruning of $\Me$ from $T$} is the operation $\psi(T, \Me) = T'$ where $T'\in \Te^{\Le'}$ is such that: 
	\begin{itemize}
		\item The topology of $T'$ is given by the TDR map applied to the internal edges in $T$; i.e., 
		$$\Se(T') = \Psi_{\Le'} \left(\Se(T)\right) = \left\{ \Psi_{\Le'}(s) \mid s \in \Se(T)\right\}.$$
		\item The length of each edge $p$ of $T'$ is the $L^2$ norm of the edges in the pre-image of $\Pe(T)$ under the TDR map, i.e.,
		$$|p|_{T'} := \sqrt{\sum_{q \in \Pe(T): \Psi_{\Le'}(q) = p} |q|^2_{T}}.$$ 
	\end{itemize}
\end{definition}

Pruning provides a precise method for removing leaves from a tree, and returns a unique, unambiguous tree. In contrast, adding leaves to a tree can be performed in several ways. Therefore, regrafting is not a function, and instead we must consider the set of trees produced by all possible regraftings. These sets, which we call sprouting spaces, play an important role in constructing distances in  Towering space. We refer to leaf prunings and leaf regraftings by the more general term \textit{leaf operations}.

\begin{definition}
	\label{Def:theSprouting}
	Given $T \in \Te^{\Le'}$ and a superset of leaves $\Le \supseteq \Le'$, \textbf{the sprouting space} of $T$ in $\Te^{\Le}$, denoted by $\Lambda^{\Le}(T)$, is the set of all the trees in $\Te^{\Le}$ from which $\Me = \Le \setminus \Le'$ are  prunable and that map to $T$ under the pruning of $\Me$. That is, 
		$\Lambda^{\Le}(T) = \left\{ t \in \Ze_{\Me}^{\Le} \mid \psi(t,\Me) = T \right\}$. 
\end{definition}

Sprouting spaces are closely related to the Extension spaces of \cite{GrindstaffGillian2019RoPL}: while Extension spaces are subsets of trees in a maximal BHV space that map onto the same tree under the TDR map (as originally defined in \citet{zairis2016genomic}), sprouting spaces are trees in a higher-dimensional BHV space that map onto the same tree after a pruning operation. Thus, we can consider sprouting spaces as a modified version of Extension spaces that permit traversal of BHV spaces of different dimension. 


\section{The Towering Space}
\label{sec:BuildingUp}

We now construct a space on a union of BHV spaces, and to study its properties. We begin by defining equivalence classes, which span trees in different BHV levels, and use them to construct a distance metric, thereby extending the BHV metric space $(\Te^{\Le}, d_{\text{BHV}})$ to trees with differing leaf sets. We then explore geometrical results arising from the properties of these equivalence classes. We conclude this section with a characterization of geodesics in  Towering space, and a preliminary algorithm to compute geodesics.

\subsection{Towering distances}

For a set of $n$ leaves $\Ne$, consider all possible subsets $\Le \subset \Ne$ with cardinality $|\Le| \geq 3$, and their corresponding BHV spaces $\Te^{\Le}$. Towering space contains all trees in these BHV spaces, that is, all trees in $\Te^{\mathbf{P}(\Ne)} = \bigcup_{\Le \in \mathbf{P}(\Ne)} \Te^{\Le}$, where $\mathbf{P}(\Ne) = \left\{\Le \subseteq \Ne \big| |\Le| \geq 3\right\}$. 
By applying leaf prunings and regraftings, we can traverse BHV levels through trees that we consider equivalent. 

\begin{definition}
	\label{Def:EquivalenceClasses}
	Two trees $T_1 \in \Te^{\Le_1}$ and $T_2 \in \Te^{\Le_2}$ belong to the same equivalence class, denoted $T_1 \simeq T_2$, if there is a finite number of prunings and regraftings that can be applied sequentially to $T_1$ that result in $T_2$. 
\end{definition}

Armed with this equivalence relation, we now construct the metric of Towering space by refining a preliminary metric on $\Te^{\mathbf{P}(\Ne)}$ to a strictly finite-valued quotient pseudometric \citep[I.5, Definition 5.19]{Bridson:1999ky}. The preliminary metric is as follows:
\begin{equation*}
    d^{*}(T_1, T_2) = \begin{cases}
        d_{\text{BHV}} (T_1, T_2) & \text{ if } \Le(T_1) = \Le(T_2),\\
        \infty & \text{ otherwise.}
    \end{cases}
\end{equation*}

\begin{definition}
	\label{Def:ToweringDistance}
	The \textbf{Towering distance} between trees $T_1, T_2 \in \Te^{\mathbf{P}(\Ne)}$ is the quotient pseudometric of $d^{*}$ on $\Te^{\mathbf{P}(\Ne)}/\simeq$, equivalently, the size of the smallest possible path between $T_1$ and $T_2$, allowing for a finite number of distance-zero leaf operations between trees in an equivalence class: 
	\begin{equation}
\label{Eq:ToweringDistance}
		d(T_1,T_2) = \inf\left\{\left.\sum_{i=1}^{k} d_{\text{BHV}}(t_i,t'_i) \right| T_1 \simeq t_1, t'_i \simeq t_{i+1} \forall i=1,\hdots, k-1, t'_k \simeq T_2\right\}, 
\end{equation} with the infimum taken over all finite sequences $\{t_i\}_{i=1}^{k}, \{t'_{i}\}_{i=1}^{k}$. 
\end{definition}

By definition, $d(\cdot, \cdot)$ is a pseudometric, upholding symmetry and the triangle inequality. Moreover, it correctly distinguishes non-deformed trees, meaning that if $T$ is a tree with strictly positive edge lengths, then  $d(T,T') = 0$ if and only if $T \simeq T'$. This in turn implies $T' \in \Lambda^{\Le'}(T)$ for some set of leaves $\Le'$. We can further refine $d$ into a proper metric function on $\Te^{\mathbf{P}(\Ne)}/\simeq$ by defining $T_1 \simeq T_2$ whenever $d(T_1,T_2) = 0$, thus guaranteeing that distinct trees are differentiated by the Towering space metric (see Supplementary Section 2). 



\subsection{Geometry of Towering geodesics}
\label{subSect:FirstMajorResults}
\label{sec:ShortPathsLowBHVspaces}

Our first result is a consequence of the reverse triangle inequality $||x|| - ||y|| \leq ||x-y||$ 
and the $L^2$ norm for combining edge lengths following leaf prunings. It implies that 
shorter paths cannot be obtained in higher BHV levels that involve leaves not originally present in the trees. This is a highly desirable property, as the addition of arbitrary extraneous leaves should not improve distances between trees. 

\begin{theorem}
	\label{thm:NoUpperShortcuts}
	Given two trees $T_1,T_2 \in \Te^{\Le'}$ for a subset of leaves $\Le' \subseteq \Le$, then $d_{\text{BHV}}(T_1, T_2) \leq d_{\text{BHV}}(T^{\uparrow}_1,T^{\uparrow}_2)$ for any trees $T^{\uparrow}_1 \in \Lambda^{\Le}(T_1)$ and $T^{\uparrow}_2 \in \Lambda^{\Le}(T_2).$
\end{theorem}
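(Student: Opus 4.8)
The plan is to show that the whole statement reduces to a single Lipschitz estimate: the pruning map $\psi(\cdot, \Me)$, with $\Me = \Le \setminus \Le'$, is distance non-increasing on the trimmable subspace $\Ze_{\Me}^{\Le}$. Indeed, by Definition \ref{Def:theSprouting} the lifts satisfy $T^{\uparrow}_1, T^{\uparrow}_2 \in \Ze_{\Me}^{\Le}$ and $\psi(T^{\uparrow}_1, \Me) = T_1$, $\psi(T^{\uparrow}_2, \Me) = T_2$. So once I know that $d_{\text{BHV}}(\psi(t,\Me), \psi(t',\Me)) \le d_{\text{BHV}}(t,t')$ for all $t, t' \in \Ze_{\Me}^{\Le}$, the theorem follows immediately by taking $t = T^{\uparrow}_1$ and $t' = T^{\uparrow}_2$.

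First I would prove the estimate inside a single orthant. Fix a topology orthant $\mathcal{O}(S)$ of $\Te^{\Le}$ and restrict attention to $\mathcal{O}(S) \cap \Ze_{\Me}^{\Le}$. Because $\Me$ is prunable on these trees, every internal split in $S$ maps under $\Psi_{\Le'}$ to a genuine internal split of $\Le'$ (Definition \ref{Def:prunable_All}), and the external edges to leaves of $\Me$ have length zero; consequently any two trees $t, t'$ sharing the orthant are sent into the common output orthant $\mathcal{O}(\Psi_{\Le'}(S))$, inside which the BHV distance is Euclidean. Writing the edge-length vectors of $t$ and $t'$ as $u$ and $v$, the pruning map (Definition \ref{Def:LeafPrune}) partitions the coordinates into the preimage groups $I_p = \{q : \Psi_{\Le'}(q) = p\}$ and returns the $L^2$ norms $\|u_{I_p}\|$ and $\|v_{I_p}\|$. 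The reverse triangle inequality applied group-by-group, $\big|\,\|u_{I_p}\| - \|v_{I_p}\|\,\big| \le \|u_{I_p} - v_{I_p}\|$, then gives $d_{\text{BHV}}(\psi(t,\Me), \psi(t',\Me))^2 = \sum_p (\|u_{I_p}\| - \|v_{I_p}\|)^2 \le \sum_p \|u_{I_p} - v_{I_p}\|^2 \le \|u - v\|^2 = d_{\text{BHV}}(t,t')^2$, so $\psi(\cdot,\Me)$ is $1$-Lipschitz on each orthant.

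To globalize, I would prune along the geodesic. By Lemma \ref{lemma:Trimmable_convex}, $\Ze_{\Me}^{\Le}$ is convex, so the unique BHV geodesic $\gamma$ from $T^{\uparrow}_1$ to $T^{\uparrow}_2$ lies entirely in $\Ze_{\Me}^{\Le}$; hence $\psi(\cdot, \Me)$ is defined all along $\gamma$, and $\psi \circ \gamma$ is a path from $T_1$ to $T_2$ in $\Te^{\Le'}$. Since $\gamma$ is piecewise contained in single orthants and $\psi$ is $1$-Lipschitz on each such piece by the previous step, the arc length of the image does not increase on any piece; summing over the finitely many pieces yields $\mathrm{length}(\psi \circ \gamma) \le \mathrm{length}(\gamma) = d_{\text{BHV}}(T^{\uparrow}_1, T^{\uparrow}_2)$. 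As the geodesic realizes the shortest path between $T_1$ and $T_2$, I conclude $d_{\text{BHV}}(T_1, T_2) \le \mathrm{length}(\psi\circ\gamma) \le d_{\text{BHV}}(T^{\uparrow}_1, T^{\uparrow}_2)$.

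The main obstacle I anticipate is the bookkeeping at orthant transitions: I must confirm that $\psi(\cdot, \Me)$ is continuous across the shared faces where $\gamma$ switches orthants (so that $\psi \circ \gamma$ is genuinely a path and the per-piece length bounds add up), and that the grouping $I_p$ — which can change from one orthant to the next — is fixed within each piece so that the within-orthant estimate applies verbatim. Everything else collapses to the reverse triangle inequality combined with the $L^2$ merging rule of Definition \ref{Def:LeafPrune}, exactly as flagged before the statement. A more computational alternative is to push the Owen--Provan support of $\gamma$ through $\Psi_{\Le'}$ and bound the resulting terms of \eqref{eq:FirstDistance} group-wise; this reaches the same inequality but requires a delicate reconstruction of a valid support for $(T_1, T_2)$ when dropped, common, and external edges of $\Te^{\Le}$ collide under pruning, which the geometric argument sidesteps.
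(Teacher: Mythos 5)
Your proposal is correct and follows essentially the same route as the paper's proof: both confine the upper-level geodesic to the convex trimmable subspace $\Ze_{\Me}^{\Le}$ (Lemma \ref{lemma:Trimmable_convex}), break it at topology-orthant boundaries, and apply the reverse triangle inequality group-by-group under the $L^2$ merging rule to show the pruned path in $\Te^{\Le'}$ is no longer. Your write-up is in fact somewhat more explicit than the paper's, which compresses the within-orthant Lipschitz computation into a one-line appeal to the reverse triangle inequality.
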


A consequence of this is that when a set of leaves is prunable from two trees, a shorter geodesic can be found in the lower BHV level resulting from this prune. 

\begin{corollary}
	\label{cor:M_trim_immediately}
	Given trees $T_1, T_2$ in the same $\Me$-trimmable space $\mathcal{Z}_{\Me}^{\Le}$, the length of the geodesic between them is lower-bounded by the length of the geodesic between their respective prunings of $\Me$; that is, $d_{\text{BHV}}(T_1,T_2) \geq d_{\text{BHV}}(\psi(T_1,\Me),\psi(T_2,\Me))$.
\end{corollary}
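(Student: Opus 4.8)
The plan is to derive this directly from Theorem \ref{thm:NoUpperShortcuts} by making the right identification of trees, rather than repeating any norm estimates. Set $\Le' = \Le \setminus \Me$, so that $\Le' \subseteq \Le$, and record the two prunings $U_1 := \psi(T_1,\Me)$ and $U_2 := \psi(T_2,\Me)$, both of which live in $\Te^{\Le'}$. These are well-defined because $T_1, T_2 \in \Ze_{\Me}^{\Le}$ means precisely that $\Me$ is prunable from each tree, so the pruning operation of Definition \ref{Def:LeafPrune} applies. The statement to be proved is then exactly $d_{\text{BHV}}(U_1, U_2) \leq d_{\text{BHV}}(T_1, T_2)$.

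The key observation is that each original tree lies in the sprouting space of its own pruning. By Definition \ref{Def:theSprouting}, the sprouting space $\Lambda^{\Le}(U_i)$ consists of all trees $t \in \Ze_{\Me}^{\Le}$ with $\psi(t,\Me) = U_i$. Since $T_i \in \Ze_{\Me}^{\Le}$ by hypothesis and $\psi(T_i,\Me) = U_i$ by the very definition of $U_i$, we obtain $T_i \in \Lambda^{\Le}(U_i)$ for $i = 1,2$. Thus $T_1$ and $T_2$ are admissible choices of sproutings $T_1^{\uparrow} \in \Lambda^{\Le}(U_1)$ and $T_2^{\uparrow} \in \Lambda^{\Le}(U_2)$ in the sense required by the theorem.

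Applying Theorem \ref{thm:NoUpperShortcuts} to the lower-level pair $U_1, U_2 \in \Te^{\Le'}$ with these particular sproutings yields $d_{\text{BHV}}(U_1, U_2) \leq d_{\text{BHV}}(T_1, T_2)$, which is the claimed inequality after unwinding $U_i = \psi(T_i,\Me)$. There is no substantive obstacle here: the only thing to verify carefully is the membership $T_i \in \Lambda^{\Le}(\psi(T_i,\Me))$, and this is immediate from the definitions once one notes that $T_1, T_2$ being in the same $\Me$-trimmable space guarantees both that the prunings exist and that the trees themselves qualify as sproutings of those prunings. The corollary is therefore a specialization of the theorem to the case where the higher-level trees are taken to be the very trees one started with.
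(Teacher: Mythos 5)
Your proof is correct and matches the paper's (implicit) argument: the corollary is presented as an immediate consequence of Theorem \ref{thm:NoUpperShortcuts}, obtained exactly as you describe by observing that $T_i \in \Lambda^{\Le}(\psi(T_i,\Me))$ and specializing the theorem to these sproutings. No gaps.
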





The construction of the short paths through low BHV levels will involve leaf operations performed at optimal locations. The following results describe the path length obtained through a single optimized prune-and-regraft operation (see Figure \ref{fig:ShortPaths}(a)). 

\begin{theorem}
	\label{Thm:First_GeodesicBest}
	Consider $T_1 \in \Te^{\Le}$ and $T_2 \in \Te^{\Le'}$ with $\Le = \Le' \sqcup \Me$. Define $T'_1 = \psi(T_1^{\perp \Me},\Me)$, the pruning of $\Me$ from the projection of $T_1$ onto $\Ze_{\Me}^{\Le}$. Then:
	\begin{enumerate}[i.]
		\item  The BHV distance between $T_1$ and $t \in \Lambda^{\Le}(T_2)$ is lower-bounded by the $L^2$ norm of the distances between $T_1$ and the $\Me$-trimmable subspace, and between $T_2$ and $T'_1$; that is, 
		\begin{equation}
		 \label{Eq:DirectDistance}
		d_{\text{BHV}}(T_1,t) \geq \sqrt{||P^{\downarrow \Me}(T_1)||^2 + d_{\text{BHV}}^2 (T'_1,T_2)} \text{ for every } t \in \Lambda^{\Le}(T_2)    	
		\end{equation}
		\item There exists at least one tree in $\Lambda^{\Le}(T_2)$ that achieves this bound. 
	\end{enumerate}
\end{theorem}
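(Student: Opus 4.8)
The plan is to reduce part (i) to the two projection-type inequalities already available, and to establish part (ii) by exhibiting a single tree together with a path whose length matches the lower bound, so that part (i) itself certifies optimality.

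For part (i), the key observation is that every $t \in \Lambda^{\Le}(T_2)$ lies in $\Ze_{\Me}^{\Le}$ by the definition of the sprouting space. I would therefore apply the projection inequality of Corollary \ref{cor:ProjectionGood} to the pair $(T_1,t)$, obtaining $d_{\text{BHV}}^2(T_1,t) \ge d_{\text{BHV}}^2(T_1, T_1^{\perp \Me}) + d_{\text{BHV}}^2(T_1^{\perp \Me}, t)$. The first term equals $\|P^{\downarrow \Me}(T_1)\|^2$ by Lemma \ref{lemma:ProjectionCharacterize}. For the second, both $T_1^{\perp \Me}$ and $t$ lie in $\Ze_{\Me}^{\Le}$ and satisfy $\psi(T_1^{\perp \Me},\Me) = T'_1$ and $\psi(t,\Me) = T_2$, so Corollary \ref{cor:M_trim_immediately} gives $d_{\text{BHV}}(T_1^{\perp \Me}, t) \ge d_{\text{BHV}}(T'_1, T_2)$. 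Substituting both bounds and taking square roots yields \eqref{Eq:DirectDistance}.

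For part (ii), writing $a = \|P^{\downarrow \Me}(T_1)\|$ and $b = d_{\text{BHV}}(T'_1,T_2)$, I would exhibit a tree $t^{\ast} \in \Lambda^{\Le}(T_2)$ together with a path from $T_1$ to $t^{\ast}$ of length exactly $\sqrt{a^2+b^2}$; part (i) then forces this path to be a geodesic and the bound to be attained. The tree $t^{\ast}$ is obtained by lifting the BHV geodesic $\gamma$ from $T'_1$ to $T_2$ in $\Te^{\Le'}$ to a path $\tilde\gamma$ in $\Ze_{\Me}^{\Le}$ starting at $T_1^{\perp \Me}$: at each tree along $\gamma$ I re-attach the leaves of $\Me$ with zero-length external edges exactly as they sit in $T_1^{\perp \Me}$, splitting any edge of $\gamma$ into nested copies in the same $L^2$ proportions used by the TDR map at $T_1^{\perp \Me}$. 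By construction $\psi(\tilde\gamma(s),\Me)=\gamma(s)$ for all $s$, and because the re-attachments contribute only zero-length externals and proportionally split internal edges, $\tilde\gamma$ has the same length $b$ as $\gamma$; its endpoint is $t^{\ast}$, with $\psi(t^{\ast},\Me)=T_2$. By Corollary \ref{cor:M_trim_immediately} any path in $\Ze_{\Me}^{\Le}$ between the endpoints of $\tilde\gamma$ has length at least $b$, so $\tilde\gamma$ is in fact a geodesic of $\Te^{\Le}$.

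The length computation is then the geometric heart of the argument. The edges by which $T_1$ exceeds $T_1^{\perp \Me}$ are exactly those in $P^{\downarrow \Me}(T_1)$, and by Definition \ref{Def:prunable_All} each such split is either supported entirely on $\Me$ or localizes a subset of $\Me$ beside a single leaf $\ell \in \Le'$; in either case it remains compatible with every topology traversed by $\tilde\gamma$, since restricting it to $\Le'$ yields a trivial or external split. Consequently the $P^{\downarrow \Me}(T_1)$ coordinates can be re-introduced at arbitrary lengths all along $\tilde\gamma$, exhibiting an isometric embedding of the metric product $[0,a] \times \tilde\gamma$ into $\Te^{\Le}$ with $T_1$ and $t^{\ast}$ at opposite corners. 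The diagonal of this flat product has length $\sqrt{a^2+b^2}$, so $d_{\text{BHV}}(T_1,t^{\ast}) \le \sqrt{a^2+b^2}$, which together with part (i) gives equality. I expect the main obstacle to be the rigorous construction of the lift $\tilde\gamma$ across topology transitions — tracking how the fixed $\Me$-attachment and the $L^2$ edge-splitting behave as internal edges of $\gamma$ are dropped and added — and the accompanying verification that the reintroduced $P^{\downarrow \Me}(T_1)$ edges stay mutually compatible with the evolving $\Le'$-topology in every orthant, so that the product embeds isometrically and the two deformations decouple into orthogonal, Pythagorean contributions.
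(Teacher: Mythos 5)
Your part (i) is correct and is essentially the paper's argument: Corollary \ref{cor:ProjectionGood} applied to $(T_1,t)$, followed by the monotonicity of pruning (the paper cites Theorem \ref{thm:NoUpperShortcuts} directly; your use of Corollary \ref{cor:M_trim_immediately} is equivalent). The high-level strategy for part (ii) --- build one explicit $t^{*}\in\Lambda^{\Le}(T_2)$ and a path whose length decouples into the Pythagorean sum of $\|P^{\downarrow\Me}(T_1)\|$ and $d_{\text{BHV}}(T'_1,T_2)$ --- is also the paper's strategy, and your treatment of common edges (proportional $L^2$ splitting, so that $\sum_k(|q^p_k|_{T_1^{\perp\Me}}-|q^p_k|_{T^{*}})^2=(|p|_{T'_1}-|p|_{T_2})^2$) is exactly the computation the paper carries out.

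However, part (ii) has a genuine gap at precisely the point you flag as ``the main obstacle.'' Your lift $\tilde\gamma$ re-attaches $\Me$ ``exactly as it sits in $T_1^{\perp\Me}$,'' but this prescription is only meaningful for portions of $\Me$ attached next to a surviving leaf of $\Le'$ or inside an edge of $\Se(T'_1)\cap\Se(T_2)$. A subset of $\Me$ hanging off an \emph{uncommon} internal edge $p\in\Se(T'_1)\setminus\Se(T_2)$ has no home once $p$ collapses along $\gamma$: there is nothing to split proportionally, and some new placement must be chosen. The paper resolves this by partitioning $\Me$ into independent maximal sets (Definition \ref{Def:independentSets}) and, for each set attached to an uncommon edge $p_i$, regrafting it onto a specific edge $p'_i$ in the matching set $B$ of a support pair $(A,B)$ of the Owen--Provan geodesic from $T'_1$ to $T_2$, then verifying that the lifted support pairs still satisfy the path-space compatibility and norm conditions so that their contribution to the distance is unchanged. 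Without this choice and verification, neither the existence of $t^{*}$ nor the length claim for $\tilde\gamma$ is established. Relatedly, your compatibility argument is too weak: that a split $s\in P^{\downarrow\Me}(T_1)$ restricts to a trivial or external split on $\Le'$ does not make $s$ compatible with the lifted splits along $\tilde\gamma$, since those lifted splits also partition the leaves of $\Me$; for instance $\{m_1,m_2\}\,\big|\,\Le\setminus\{m_1,m_2\}$ is incompatible with any lifted split separating $m_1$ from $m_2$. Compatibility holds only because the construction keeps each independent maximal set regrafted together at a single location, which is exactly the structural bookkeeping your proposal omits. (A minor further caution: you cannot appeal to Lemma \ref{lemma:LiftingGeodesics} to get the isometric lift, since in the paper that lemma is deduced \emph{from} this theorem.)
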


While this result describes an efficient way to go from $T_1$ to $T_2$ (through pruning a tree in $\Lambda^{\Le}(T_2)$), 
the following two results further establish that this path is in fact the most efficient overall when the only transitions between levels are prunings.

\begin{lemma}
	\label{lemma:DirectPruning}
	Given $T_1 \in \Te^{\Le}$ and $T_2 \in \Te^{\Le'}$ such that $\Le' \subset \Le$, consider all paths from $T_1$ to $T_2$ where the only transition between BHV spaces is the leaf pruning of $\Me = \Le \setminus \Le'$.
	The lengths of these paths are lower bounded by the length of the direct path from $T_1$ to $T_2$ by pruning $\Me$ at a tree in $\Lambda^{\Le}(T_2)$, with length given by \eqref{Eq:DirectDistance}.
\end{lemma}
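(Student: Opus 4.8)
The plan is to reduce an arbitrary admissible path to a canonical three-part form and then bound it below using Theorem~\ref{Thm:First_GeodesicBest} together with an elementary scalar inequality. First I would argue that any path of the prescribed type---one whose sole transition between levels is the pruning of $\Me$---can, without increasing its length, be taken to consist of a single BHV geodesic in $\Te^{\Le}$ from $T_1$ to some $t \in \Ze^{\Le}_{\Me}$, then the zero-distance pruning $\psi(t,\Me)$, then a single BHV geodesic in $\Te^{\Le'}$ from $\psi(t,\Me)$ to $T_2$. Indeed, a nontrivial $\simeq$-identification within a fixed level requires a regraft-then-prune (or prune-then-regraft) cycle, i.e. at least two level transitions none of which is the prune of $\Me$; the single-transition hypothesis forbids this, so all within-level identifications are trivial. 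Consecutive same-level segments therefore share endpoints and merge into one geodesic by the triangle inequality in the (geodesic) BHV space. Hence any such path has length $d_{\text{BHV}}(T_1,t) + d_{\text{BHV}}(\psi(t,\Me),T_2)$ for some $t \in \Ze^{\Le}_{\Me}$, and it remains to bound this below by the direct-path length.

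Next, write $T_2' = \psi(t,\Me) \in \Te^{\Le'}$; by Definition~\ref{Def:theSprouting} we have $t \in \Lambda^{\Le}(T_2')$, so Theorem~\ref{Thm:First_GeodesicBest} applied with $T_2'$ in the role of $T_2$ bounds the first leg:
$$d_{\text{BHV}}(T_1,t) \;\ge\; \sqrt{||P^{\downarrow \Me}(T_1)||^2 + d_{\text{BHV}}^2(T_1',T_2')},$$
where $T_1' = \psi(T_1^{\perp \Me},\Me)$ depends only on $T_1$ and $\Me$, not on the target. Abbreviating $a = ||P^{\downarrow \Me}(T_1)||$, $x = d_{\text{BHV}}(T_1',T_2')$ and $y = d_{\text{BHV}}(T_2',T_2)$, the total length is at least $\sqrt{a^2 + x^2} + y$, while the triangle inequality in $\Te^{\Le'}$ gives $d_{\text{BHV}}(T_1',T_2) \le x + y$.

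To close the argument I would invoke the elementary fact that $z \mapsto \sqrt{a^2 + z^2}$ is $1$-Lipschitz, so that $\sqrt{a^2 + (x+y)^2} \le \sqrt{a^2 + x^2} + y$; combined with the monotonicity of $z \mapsto \sqrt{a^2 + z^2}$ and the triangle bound above, this yields
$$\sqrt{a^2 + x^2} + y \;\ge\; \sqrt{a^2 + (x+y)^2} \;\ge\; \sqrt{a^2 + d_{\text{BHV}}^2(T_1',T_2)}.$$
The right-hand side is exactly the minimum of \eqref{Eq:DirectDistance} over $\Lambda^{\Le}(T_2)$, i.e. the length of the direct path that prunes $\Me$ at the optimal sprouting tree furnished by part~(ii) of Theorem~\ref{Thm:First_GeodesicBest}. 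I expect the main obstacle to be the reduction step: carefully justifying that an arbitrary admissible path collapses to the canonical prune-once form without shortening (in particular ruling out detours that re-enter $\Te^{\Le}$), since once the path is in this form the remaining estimates are a routine application of Theorem~\ref{Thm:First_GeodesicBest}, the triangle inequality, and the Lipschitz bound.
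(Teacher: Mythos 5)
Your proposal is correct and follows essentially the same route as the paper: both decompose the path into an upper-level leg into $\Lambda^{\Le}(X)$ (your $T_2'$ playing the role of the paper's $X$), bound that leg from below via Theorem~\ref{Thm:First_GeodesicBest}(i), and then combine with the triangle inequality $d_{\text{BHV}}(T_1',X)+d_{\text{BHV}}(X,T_2)\ge d_{\text{BHV}}(T_1',T_2)$ to reach \eqref{Eq:DirectDistance}. Your final step via the $1$-Lipschitz bound on $z\mapsto\sqrt{a^2+z^2}$ is just a cleaner packaging of the paper's square-expansion algebra, and your explicit reduction to the canonical prune-once form makes precise what the paper assumes implicitly.
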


\begin{lemma}
	\label{lemma:OnePruningRatherThanTwo}
	Consider $T_1 \in \Te^{\Le}$ and $T_2 \in \Te^{\Le'}$ where $\Le' \subset \Le$, and partition $\Me = \Le \setminus \Le'$ into $\Me = \Me_1 \sqcup \Me_2$. Consider all the paths from $T_1$ to $T_2$ where only two transitions are performed: pruning $\Me_1$, followed by pruning $\Me_2$ at some point later. All of these paths are at least as long as the direct path from $T_1$ to $T_2$, where $\Me$ is pruned in a tree in $\Lambda^{\Le}(T_2)$, with length given by \eqref{Eq:DirectDistance}. 
\end{lemma}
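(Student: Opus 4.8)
The plan is to prove that any admissible path has length at least $d_{\text{BHV}}(T_1,\Lambda^{\Le}(T_2))$, the BHV distance from $T_1$ to the sprouting space of $T_2$ in $\Te^{\Le}$, since by Theorem~\ref{Thm:First_GeodesicBest} (combining parts (i) and (ii)) this distance equals the direct-path length given by \eqref{Eq:DirectDistance}. First I would record the shape of an admissible path: writing $\Le_1 := \Le \setminus \Me_1$, it decomposes as a geodesic $T_1 \to t_1$ inside $\Te^{\Le}$ with $t_1 \in \Ze_{\Me_1}^{\Le}$ (length $a$), the prune $s_1 = \psi(t_1,\Me_1) \in \Te^{\Le_1}$, a geodesic $s_1 \to t_2$ inside $\Te^{\Le_1}$ with $t_2 \in \Ze_{\Me_2}^{\Le_1}$ (length $b$), the prune $s_2 = \psi(t_2,\Me_2) \in \Te^{\Le'}$, and a final geodesic $s_2 \to T_2$ inside $\Te^{\Le'}$ (length $c$), so that the total length is $L = a+b+c$.

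The core of the argument is to peel off the two prunings one at a time using Lemma~\ref{lemma:DirectPruning}. The sub-path $s_1 \to t_2 \to s_2 \to T_2$ is a path from $s_1 \in \Te^{\Le_1}$ to $T_2 \in \Te^{\Le'}$ whose only level transition is the pruning of $\Me_2 = \Le_1 \setminus \Le'$, so Lemma~\ref{lemma:DirectPruning} gives $b+c \ge d_{\text{BHV}}(s_1,\Lambda^{\Le_1}(T_2))$, and by Theorem~\ref{Thm:First_GeodesicBest}(ii) this infimum is attained at some $r^\ast \in \Lambda^{\Le_1}(T_2)$, i.e.\ $d_{\text{BHV}}(s_1,r^\ast) = d_{\text{BHV}}(s_1,\Lambda^{\Le_1}(T_2))$. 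Hence $L \ge a + d_{\text{BHV}}(s_1,r^\ast)$. Now $T_1 \to t_1 \to s_1 \to r^\ast$ (taking the geodesic $s_1 \to r^\ast$) is itself a path whose only transition is the pruning of $\Me_1$, landing at the fixed target $r^\ast \in \Te^{\Le_1}$; a second application of Lemma~\ref{lemma:DirectPruning}, now to the pair $(T_1,r^\ast)$, yields $a + d_{\text{BHV}}(s_1,r^\ast) \ge d_{\text{BHV}}(T_1,\Lambda^{\Le}(r^\ast))$. Chaining these inequalities gives $L \ge d_{\text{BHV}}(T_1,\Lambda^{\Le}(r^\ast))$.

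It remains to compare the two sprouting spaces, and this is where the real work lies: I claim $\Lambda^{\Le}(r^\ast) \subseteq \Lambda^{\Le}(T_2)$, whence $d_{\text{BHV}}(T_1,\Lambda^{\Le}(r^\ast)) \ge d_{\text{BHV}}(T_1,\Lambda^{\Le}(T_2))$ (the distance to a subset is at least the distance to the whole set), which finishes the proof. The inclusion rests on two facts. The first is the consistency of sequential prunings: because $\Psi_{\Le'} = \Psi_{\Le'} \circ \Psi_{\Le_1}$ and the $L^2$ edge-merge of Definition~\ref{Def:LeafPrune} is associative, $\psi(t,\Me) = \psi(\psi(t,\Me_1),\Me_2)$ for every $\Me$-prunable $t$, so $\psi(t,\Me_1) = r^\ast$ forces $\psi(t,\Me) = \psi(r^\ast,\Me_2) = T_2$. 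The second, and the main obstacle, is showing that every $t \in \Lambda^{\Le}(r^\ast)$ is not merely $\Me_1$-prunable but fully $\Me$-prunable, so that the composition identity applies and $t \in \Ze_{\Me}^{\Le}$: the external $\Me_2$-edges of $t$ merge under $\Psi_{\Le_1}$ into the zero-length external edges of $r^\ast \in \Ze_{\Me_2}^{\Le_1}$, and since an $L^2$ sum vanishes only when every summand does, those edges of $t$ must be zero; and each internal edge of $t$ maps under $\Psi_{\Le_1}$ to an internal edge of $r^\ast$, which maps to an internal split under $\Psi_{\Le'}$, so no internal edge of $t$ is sent to an external split or lost by the full reduction. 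Verifying this prunability propagation carefully is the delicate step; once it is in hand we obtain $t \in \Ze_{\Me}^{\Le}$ with $\psi(t,\Me) = T_2$, i.e.\ $t \in \Lambda^{\Le}(T_2)$, establishing the inclusion and hence the bound $L \ge d_{\text{BHV}}(T_1,\Lambda^{\Le}(T_2))$ claimed by \eqref{Eq:DirectDistance}.
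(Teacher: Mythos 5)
Your proof is correct and follows essentially the same route as the paper's: two successive applications of Lemma~\ref{lemma:DirectPruning} (first to collapse the $\Me_2$-pruning onto a tree $r^{\ast}\in\Lambda^{\Le_1}(T_2)$, then to collapse the $\Me_1$-pruning relative to that fixed target), followed by the observation that the resulting pruning point lies in $\Lambda^{\Le}(T_2)$. Your careful verification that every tree in $\Lambda^{\Le}(r^{\ast})$ is fully $\Me$-prunable and composes to $T_2$ is a step the paper asserts in a single line, so your write-up is if anything more complete.
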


Lemmas \ref{lemma:DirectPruning} and \ref{lemma:OnePruningRatherThanTwo} show that any path between two trees $T_1$ and $T_2$ in $\Te^{\mathbf{P}(\Ne)}$ can be shortened by consolidating portions of the path involving only pruning leaves into a single pruning operation, allowing a direct jump to a tree at the lowest level of that portion. 
Similarly, sections of a path that exclusively regraft leaves can be combined into a single regrafting operation. This culminates in the following theorem relating to a prune-and-regraft operation.

\begin{theorem}
	\label{Thm:shorterPathsBelow}
	Consider two trees $T_1 \in \mathcal{T}^{\mathcal{L}_1}$ and $T_2 \in \mathcal{T}^{\mathcal{L}_2}$, with $\mathcal{L}_1 = \mathcal{L}' \sqcup \mathcal{M}_1$ and $\mathcal{L}_2 = \mathcal{L}' \sqcup \mathcal{M}_2$ for some subset $\mathcal{L}' \subseteq \mathcal{L}_1 \cap \mathcal{L}_2$. Among the paths from $T_1$ to $T_2$ where the leaves in $\mathcal{M}_1$ are all pruned, followed by regraftings of leaves in $\mathcal{M}_2$, the shortest paths have length
	\begin{equation}
		\label{Eq:ShortestLowerPath}
		\sqrt{\left[||P^{\downarrow \mathcal{M}_1}(T_1)|| + ||P^{\downarrow \mathcal{M}_2}(T_2)|| \right]^2 + d^2_{\text{BHV}}(T'_1,T'_2)},
	\end{equation}
	where $T'_i$ is the pruning of $\mathcal{M}_i$ from $T_i^{\perp \mathcal{M}_i}$, for $i = 1,2$. 
\end{theorem}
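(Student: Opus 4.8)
The plan is to cut any admissible path at the level-$\mathcal{L}'$ tree reached immediately after the final pruning, bound the two resulting pieces with the single-prune estimates, and then optimize over this junction tree using the geodesic structure of $\mathcal{T}^{\mathcal{L}'}$. Throughout I write $a = \|P^{\downarrow \mathcal{M}_1}(T_1)\|$, $b = \|P^{\downarrow \mathcal{M}_2}(T_2)\|$, and $D = d_{\text{BHV}}(T'_1, T'_2)$.

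First I would fix an admissible path (all of $\mathcal{M}_1$ pruned, then all of $\mathcal{M}_2$ regrafted) and let $S \in \mathcal{T}^{\mathcal{L}'}$ be the tree it occupies right after its last pruning. The sub-path from $T_1$ to $S$ only prunes (subsets of) $\mathcal{M}_1$, so consolidating its prunings via Lemmas~\ref{lemma:DirectPruning} and~\ref{lemma:OnePruningRatherThanTwo} bounds its length below by $\sqrt{a^2 + d^2_{\text{BHV}}(T'_1, S)}$. The sub-path from $S$ to $T_2$ only regrafts (subsets of) $\mathcal{M}_2$; reversing it produces a path from $T_2$ to $S$ that only prunes $\mathcal{M}_2$ (BHV motion within $\mathcal{T}^{\mathcal{L}'}$ occurring between the last prune and the first regraft is simply absorbed here), and since length is reversal-invariant and $d_{\text{BHV}}$ is symmetric, the same lemmas bound this piece below by $\sqrt{b^2 + d^2_{\text{BHV}}(T'_2, S)}$. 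Hence every admissible path has length at least $f(S) := \sqrt{a^2 + d^2_{\text{BHV}}(T'_1, S)} + \sqrt{b^2 + d^2_{\text{BHV}}(T'_2, S)}$.

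Next I would minimize $f$ over $S$. Setting $x = d_{\text{BHV}}(T'_1, S)$ and $y = d_{\text{BHV}}(T'_2, S)$, the triangle inequality gives $x + y \geq D$, and $f$ sees $S$ only through $g(x,y) = \sqrt{a^2 + x^2} + \sqrt{b^2 + y^2}$, which is increasing in each variable; therefore $\inf_S f(S) \geq \min\{g(x,y) : x, y \geq 0,\ x + y = D\}$. Reading $\sqrt{a^2 + x^2} + \sqrt{b^2 + (D-x)^2}$ as the length of a planar path from $(0,a)$ through $(x,0)$ to $(D,-b)$, reflection shows this minimum is the straight-line distance $\sqrt{(a+b)^2 + D^2}$, attained at $x^* = \tfrac{a}{a+b}D \in [0,D]$, which is exactly \eqref{Eq:ShortestLowerPath}. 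For the matching construction I would invoke that $\mathcal{T}^{\mathcal{L}'}$ is geodesic: choosing $S^*$ on the geodesic from $T'_1$ to $T'_2$ with $d_{\text{BHV}}(T'_1, S^*) = x^*$ realizes $x^* + y^* = D$, and part (ii) of Theorem~\ref{Thm:First_GeodesicBest}, applied to the pair $(T_1, S^*)$ with $\mathcal{M} = \mathcal{M}_1$ and to the pair $(T_2, S^*)$ with $\mathcal{M} = \mathcal{M}_2$, supplies trees in $\Lambda^{\mathcal{L}_1}(S^*)$ and $\Lambda^{\mathcal{L}_2}(S^*)$ realizing the two pieces exactly; concatenating gives an admissible path of length $f(S^*) = \sqrt{(a+b)^2 + D^2}$, so the infimum is attained.

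The main obstacle I anticipate is the bookkeeping of the lower-bound step rather than the optimization: an arbitrary admissible path may split its prunings of $\mathcal{M}_1$ across many operations, interleave them with BHV motion, and linger in $\mathcal{T}^{\mathcal{L}'}$ between the last prune and the first regraft, so care is needed to apply Lemmas~\ref{lemma:DirectPruning} and~\ref{lemma:OnePruningRatherThanTwo} inductively and to absorb the intermediate $\mathcal{T}^{\mathcal{L}'}$ segment into the reversed regrafting piece, so that a single junction tree $S$ governs both bounds simultaneously. Once this reduction is secured, the reflection minimization and the achievability via Theorem~\ref{Thm:First_GeodesicBest} are routine.
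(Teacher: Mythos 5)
Your proposal is correct and follows essentially the same route as the paper's proof: cut the path at a junction tree $X \in \mathcal{T}^{\mathcal{L}'}$, lower-bound each half via Lemmas~\ref{lemma:DirectPruning} and~\ref{lemma:OnePruningRatherThanTwo} (with the regrafting half handled by reversal/symmetry), minimize $\sqrt{a^2+y_1^2}+\sqrt{b^2+y_2^2}$ subject to $y_1+y_2\ge D$ by the planar reflection argument, and realize the bound at the proportional point on the geodesic from $T'_1$ to $T'_2$ using Theorem~\ref{Thm:First_GeodesicBest}. Your explicit attention to paths that interleave prunings with BHV motion and linger in $\mathcal{T}^{\mathcal{L}'}$ is a welcome refinement of bookkeeping the paper leaves implicit, but it does not change the argument.
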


Therefore, the shortest path through a common lower level depends entirely on the $L^2$ norm of the edges that prevent leaf sets from being prunable in each tree, along with the BHV distance between the two trees at the lower level resulting from pruning the relevant leaves. A consequence of Theorem \ref{Thm:First_GeodesicBest} is that the shortest path between $T_1$ and $T_2$ via one common lower level is comprised of only two BHV geodesics, one completely contained in the BHV space of $T_1$ and the second completely contained in the BHV space of $T_2$  (see Figure \ref{fig:ShortPathsB}). These path segments are connected by pruning all leaves in $\mathcal{M}_1$ from a tree in the sprouting space $\Lambda^{\Le_1}(X)$ for some tree $X \in \Te^{\Le'}$ and regrafting all leaves in $\mathcal{M}_2$ onto the same tree to obtain a tree in $\Lambda^{\Le_2}(X)$. Moreover, the optimal $X \in \mathcal{T}^{\Le'}$ falls on the BHV geodesic from $T'_1$ to $T'_2$, and it is the tree located at the position proportional to the distances from the original trees $T_1$ and $T_2$ to the respective prunable spaces. Specifically, $X$ satisfies $$d_{\text{BHV}}(T'_1,X) = \frac{||P^{\downarrow \Me_1}(T_1)||}{||P^{\downarrow \Me_1}(T_1)|| + ||P^{\downarrow \Me_2}(T_2)||} d_{\text{BHV}}(T'_1,T'_2).$$ We provide a detailed proof and a method for finding $X$ in Supplementary Section 1. 

\begin{figure}
    \centering
    \subfigure[]{\includegraphics[width = 0.43\textwidth]{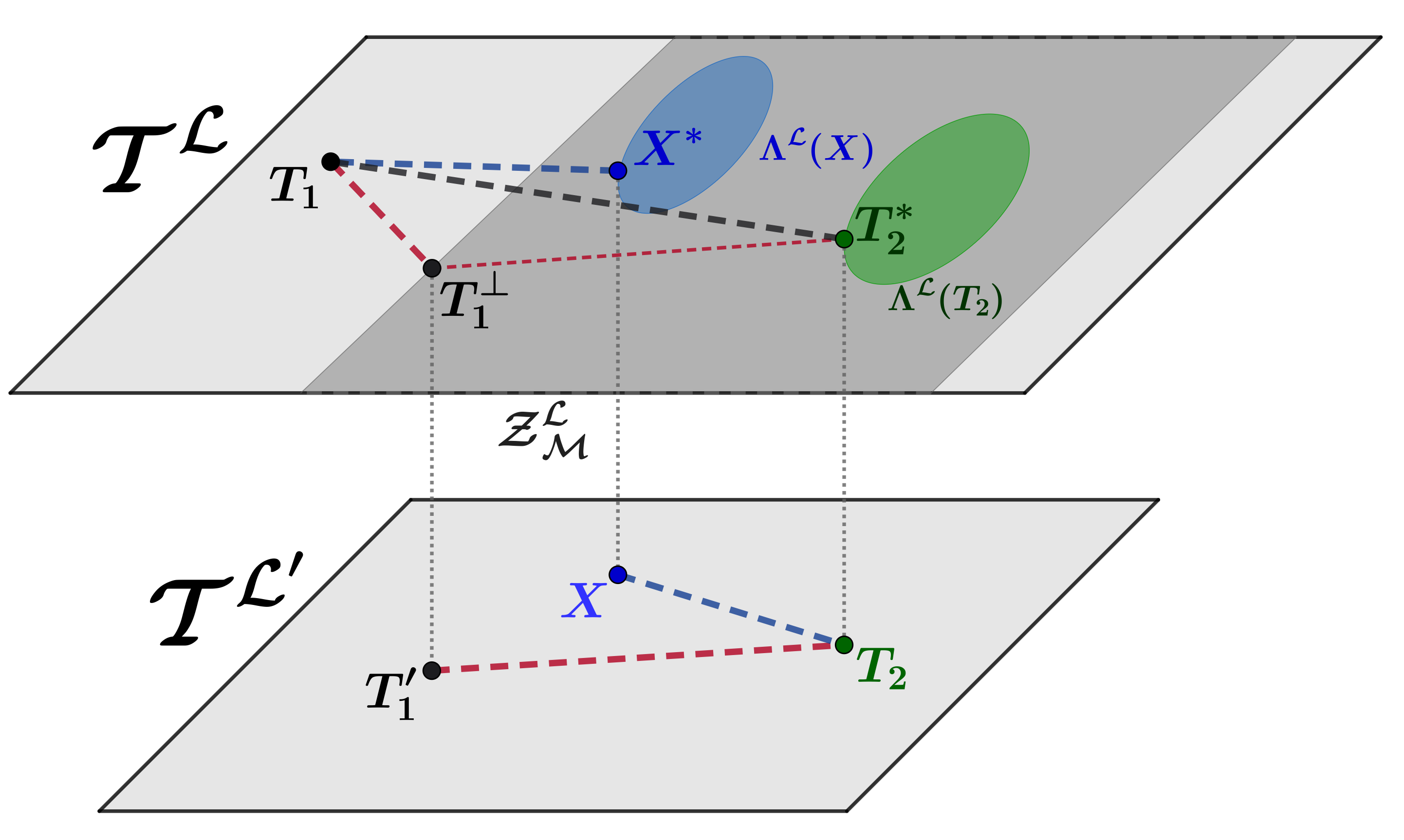}\label{fig:ShortPathsA}} 
    \subfigure[]{\includegraphics[width = 0.56\textwidth]{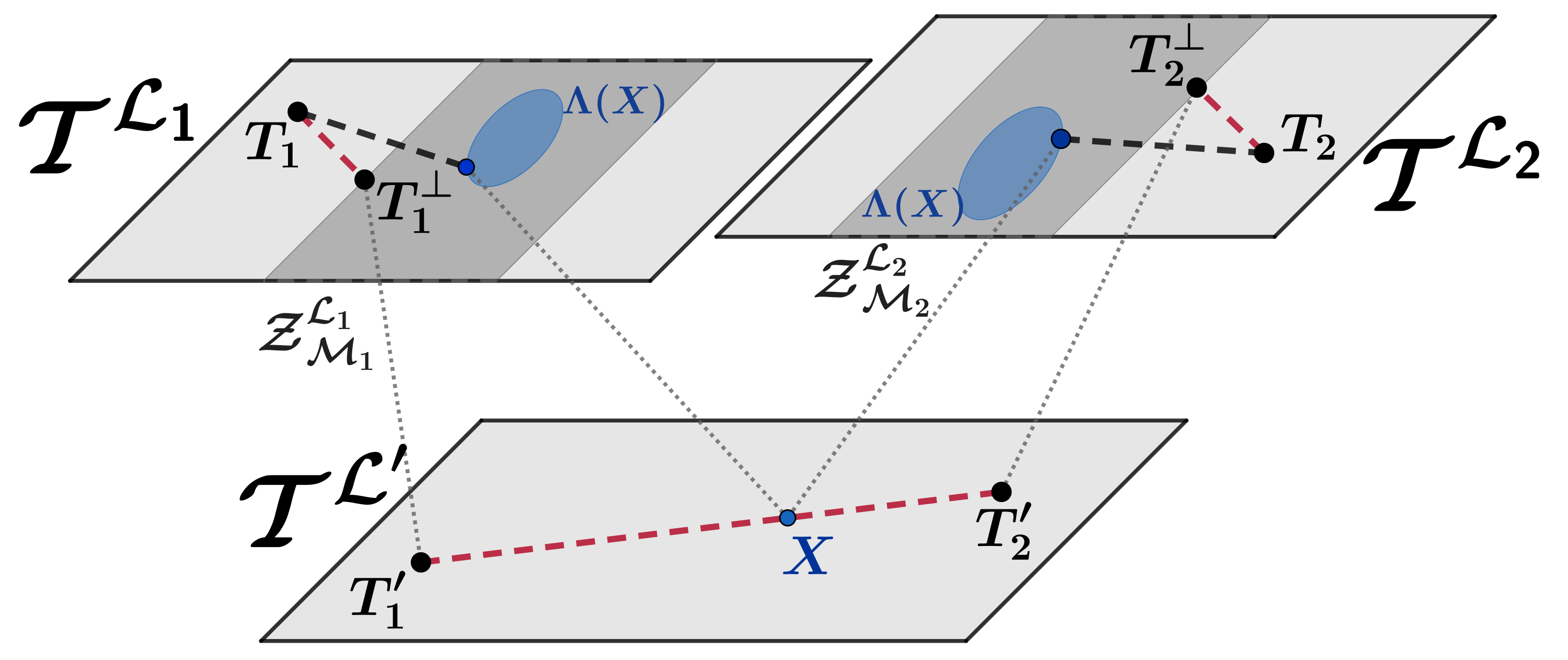}\label{fig:ShortPathsB}}
    \caption{Shortest paths via a single pruning and regrafting through a lower level. Shortest paths are denoted by black dashed lines. 
    (a) The shortest path from $T_1 \in \mathcal{T}^{\Le}$ to $T_2 \in \Te^{\Le'}$ through a single pruning of $\Me = \Le \setminus \Le'$ is through a tree $T^{*}_2 \in \Lambda^{\Le}(T_2)$ (green shading). 
    The length of this path is the $L^2$ norm of the distance from $T_1$ to its projection $T_1^{\perp}\in \Ze^{\Le}_{\Me}$ and the distance from the pruned tree $T'_1$ to $T_2$ (the hypotenuse of the ``triangle'' shown by red dashed lines). 
    Any path through a different leaf pruning (blue dashed line) is longer than the path through $T^{*}_2$. 
    (b) The shortest path from $T_1 \in \Te^{\Le_1}$ to $T_2 \in \Te^{\Le_2}$ through a single prune-and-regraft operation traverses a top level space to reach $\Lambda^{\mathcal{L}_1}(X)$ (blue shading) for an optimal tree $X \in \Te^{\Le'}$, then prunes and immediately regrafts to $\Lambda^{\mathcal{L}_2}(X)$ (blue shading), before again traversing the top level space.
    $T_1^{\perp}$ and $T_2^{\perp}$ represent the projections of the endpoint trees to the trimmable spaces (gray shading), and the optimal tree $X$ is on the geodesic between the prunings of these trees ($T'_1$ and $T'_2$). While the path through these projections (red dashed lines) is not efficient, the length of its segments are combined to obtain the length of the optimal path (Theorem \ref{Thm:First_GeodesicBest}).} 
    \label{fig:ShortPaths}
\end{figure}


\begin{figure}
    \centering
    \subfigure[]{\includegraphics[width = 0.9\textwidth]{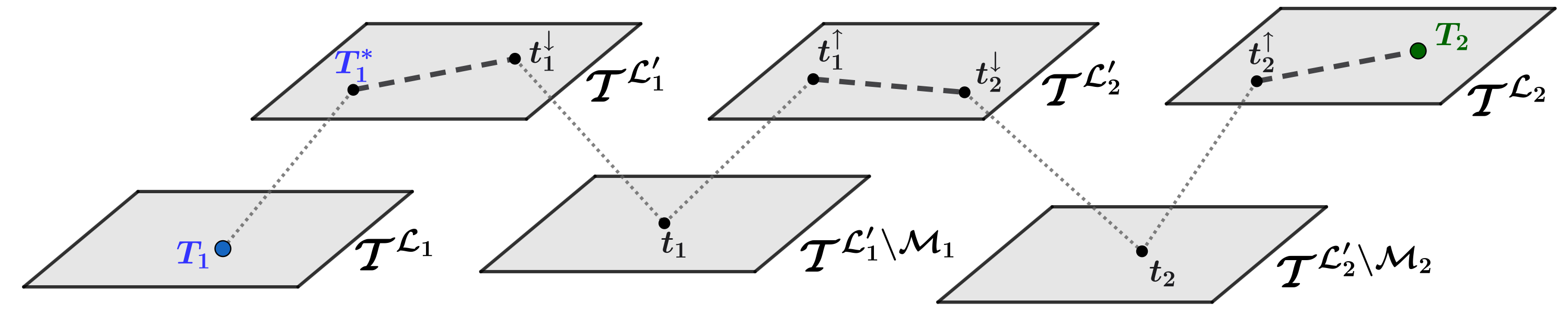}\label{fig:JumpingPathA}}
    \subfigure[]{\includegraphics[width = 0.9\textwidth]{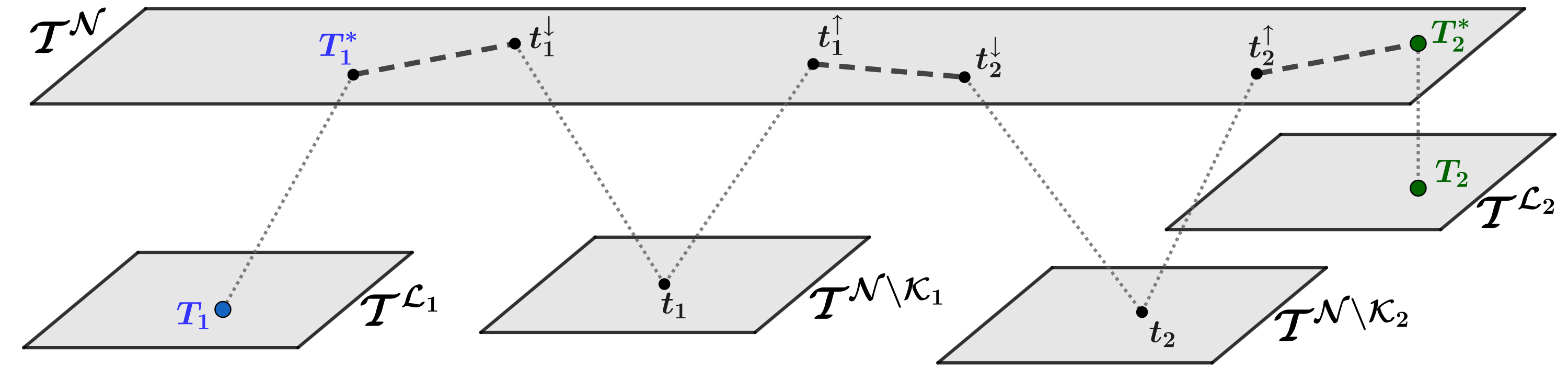}\label{fig:JumpingPathB}}
    \caption{Efficient paths from $T_1 \in \Te^{\Le_1}$ to $T_2 \in \Te^{\Le_2}$. (a) The path length from $T_1$ to $T_2$ can be reduced by consolidating consecutive prunings into single operations, and likewise for regraftings. Different BHV spaces on this path are connected through one prune-and-regraft operation at a time. (b) The path length is minimized by mapping all geodesic segments to the common upper level $\Te^{\Ne}$. This path passes through trees $T^{*}_1 \in \Lambda^{\Ne}(T_1)$ and $T^{*}_2 \in \Lambda^{\Ne}(T_2)$ by a sequence of segments in $\Te^{\Ne}$ that are connected only by prune-and-regraft operations. Notably, these paths do not traverse positive distances through the lower BHV levels.} 
    \label{fig:JumpingPath}
\end{figure}

\subsection{Characterizing geodesics}
\label{sec:ShortPathsHighBHVspaces}



While Theorem \ref{Thm:shorterPathsBelow} provides insight into the geometry of paths in Towering space, 
in practice, geodesics will often involve repeated prunings and regraftings. From Lemmas \ref{lemma:DirectPruning} and \ref{lemma:OnePruningRatherThanTwo}, we know that for each transition between BHV levels, a candidate geodesic path may pass through up to three trees within the same equivalence class --- one prior to a leaf pruning, one following a leaf regrafting, and one in the lowest BHV level. We can therefore describe a path from $T_1$ to $T_2$ by considering trees $t_1, \hdots, t_r$ where transitions between BHV levels occur. For each $i \in \{1, \hdots, r\}$, we denote by $t_i^{\downarrow}$ the tree prior to a pruning, and by $t_i^{\uparrow}$ the tree following the corresponding regrafting (see Figure \ref{fig:JumpingPathA}).

We now refine the structure of these candidate paths and show that at least one geodesic between any pair of trees consists of segments fully contained in the upper level $\mathcal{T}^{\Ne}$. We refine further by applying Theorem \ref{Thm:shorterPathsBelow} to restrict which trees allow optimal connections between the segments, thereby characterizing Towering space geodesics. We proceed first by demonstrating that BHV geodesics can be matched in length by geodesics in higher BHV levels. This result allows paths between any pair of trees trees to be lifted to an equally long path in the upper level $\Te^{\Ne}$. 

\begin{lemma}
    \label{lemma:LiftingGeodesics}
    Consider $T'_1,T'_2 \in \mathcal{T}^{\Le'}$ and a superset of their leaves $\Le \supset \Le'$. There exists $T_1 \in \Lambda^{\Le}(T'_1)$ and $T_2 \in \Lambda^{\Le}(T'_2)$ such that $d_{\text{BHV}}(T_1,T_2) = d_{\text{BHV}}(T'_1,T'_2)$.     
\end{lemma}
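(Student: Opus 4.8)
### Proof proposal for Lemma~\ref{lemma:LiftingGeodesics}

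\textbf{Overall approach.}
The plan is to construct explicit lifts $T_1 \in \Lambda^{\Le}(T'_1)$ and $T_2 \in \Lambda^{\Le}(T'_2)$ that ``parallel'' the structure of $T'_1$ and $T'_2$ in the extra leaves $\Me = \Le \setminus \Le'$, so that the geodesic between the lifts uses exactly the same support pairs and common-edge length differences as the geodesic between the originals. The inequality $d_{\text{BHV}}(T_1,T_2) \geq d_{\text{BHV}}(T'_1,T'_2)$ is already guaranteed by Theorem~\ref{thm:NoUpperShortcuts} (no upper shortcuts), so the entire task reduces to producing a matching \emph{upper bound}, i.e. lifts whose BHV distance is no larger than $d_{\text{BHV}}(T'_1,T'_2)$. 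Combining the two inequalities yields the equality.

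\textbf{Key steps.}
First I would reduce to the case of adding a single leaf, $\Me = \{m\}$, and obtain the general statement by induction: Lemma~\ref{lemma:LiftingGeodesics} for a chain $\Le' \subset \Le'' \subset \Le$ follows by composing single-leaf lifts, using that prunings compose consistently (the commutativity/associativity of the $L^2$ merging noted after Definition~\ref{Def:TDRmap_splits}). Second, for the single-leaf case, I would attach $m$ to \emph{both} $T'_1$ and $T'_2$ in the ``same'' way: pick a common internal edge (or, more carefully, a common location expressible as the same split refinement on $\Le'$) at which to graft $m$ via a new internal split, and assign the new external edge to $m$ length zero and the new internal edge length zero as well, so that $m$ remains prunable and the pruning recovers $T'_i$ exactly. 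The crucial point is that grafting $m$ in a compatible, parallel way means the internal splits of $T_1$ and $T_2$ are in natural bijection with those of $T'_1$ and $T'_2$ (each split of $\Le'$ lifts to a unique split of $\Le$ putting $m$ on a fixed side), so that compatibility relations, and hence the common splits $C$ and the support $(\mathcal{A},\mathcal{B})$ of the geodesic, are preserved. Third, I would invoke the Owen--Provan distance formula \eqref{eq:FirstDistance}: under this parallel lifting, each support pair $(A_i, B_i)$ and each common edge in $K = C \cup H$ has the same length contribution upstairs as downstairs (the added zero-length edge to $m$ contributes nothing, and the lifted external edge of $m$ is common with length zero in both trees), giving $d_{\text{BHV}}(T_1,T_2) \leq d_{\text{BHV}}(T'_1,T'_2)$.

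\textbf{Main obstacle.}
The delicate step is choosing the grafting location for $m$ so that the lifted trees have the \emph{same} path space / support structure as the original pair, rather than merely attaching $m$ arbitrarily. Attaching $m$ to edges that lie on different sides of the down-stairs geodesic's topological transition could create an extra support pair and inflate the distance. The safe construction is to graft $m$ onto a leaf edge shared by $T'_1$ and $T'_2$ (an edge whose split is common, which always exists since external splits $H$ are shared), introducing a new internal split of length zero that separates $m$ together with that leaf; this guarantees the new internal edge is a common split with length zero in both lifts, contributing nothing to \eqref{eq:FirstDistance}. One must then verify that this graft keeps $m$ prunable in the sense of Definition~\ref{Def:prunable_All} (the new external edge to $m$ has length zero, and every lifted internal split genuinely maps back to an internal split of $\Le'$ under $\Psi_{\Le'}$), and that pruning indeed returns $T'_i$ — both of which follow directly from the definitions since all edges involving $m$ have length zero and are thus invisible to the $L^2$ merge. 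I expect the bookkeeping of the split bijection and the invariance of the support to be the only real content; the length computation itself is then immediate from \eqref{eq:FirstDistance}.
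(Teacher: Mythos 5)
Your proposal is correct, but it takes a genuinely different route from the paper. The paper's proof is a two-line corollary of Theorem \ref{Thm:First_GeodesicBest}(ii): take $T_1 \in \Lambda^{\Le}(T'_1)$ to be \emph{any} regraft of $\Me = \Le \setminus \Le'$ onto $T'_1$; since $T_1$ already lies in $\Ze^{\Le}_{\Me}$, one has $\|P^{\downarrow \Me}(T_1)\| = 0$, so the tree $T_2 \in \Lambda^{\Le}(T'_2)$ produced by that theorem satisfies $d_{\text{BHV}}(T_1,T_2) = \sqrt{0 + d_{\text{BHV}}^2(T'_1,T'_2)}$. You instead construct both lifts explicitly and in parallel (grafting each leaf of $\Me$ at the node adjacent to a fixed common leaf $\ell$ in both trees), argue that the split bijection $\mathcal{G} \mapsto \mathcal{G}$-with-$m$-on-$\ell$'s-side preserves compatibility and hence lifts the Owen--Provan support with unchanged norms, and combine the resulting upper bound with the lower bound from Theorem \ref{thm:NoUpperShortcuts}. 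This is sound: your reduction to a single added leaf composes correctly (prunability and the $L^2$ merge are compatible with composition of TDR maps, cf.\ Lemma \ref{lemma:LeafOperationOne}), your identification of the danger (grafting at locations that straddle a topological transition downstairs, which would create extra support pairs) is exactly the right concern, and your fix of anchoring $m$ to the same common external edge in both trees avoids it. What the comparison buys: the paper's argument is shorter because it delegates all of the support-lifting bookkeeping to the (already heavy) proof of Theorem \ref{Thm:First_GeodesicBest}(ii), which performs essentially your construction in greater generality via independent maximal sets; your argument is self-contained and only needs \eqref{eq:FirstDistance} and Theorem \ref{thm:NoUpperShortcuts}, at the cost of re-proving a special case of that machinery. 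One small point of hygiene: under Remark \ref{Remark:dropInternalEdgesZero} the ``new internal split of length zero'' separating $\{m,\ell\}$ is simply not an element of $\Se(T_i)$, so it is cleaner to say $m$ is attached at the interior node incident to $\ell$'s external edge; this also makes prunability and $\psi(T_i,\Me) = T'_i$ immediate, since no edge is subdivided and no $L^2$ merging occurs.
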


Thus, any path can be mapped to a path where all positive-length segments are contained in the uppermost BHV level $\Te^{\Ne}$ (Figure \ref{fig:JumpingPathB}). Therefore, in order to find a geodesic, it is sufficient to find the path with minimum length among all paths of this shape. This result is summarized in the following theorem. 

\begin{theorem} \label{Thm:4}
	Given $T_1 \in \Te^{\Le_1}$ and $T_2 \in \Te^{\Le_2}$, consider the set of common leaves $\Le' = \Le_1 \cap \Le_2$. A geodesic between $T_1$ and $T_2$ passes through trees $T^{*}_1 = t^{\uparrow}_{0}$, $t^{\downarrow}_1$, $t^{\uparrow}_1$, ..., $t^{\downarrow}_r$, $t^{\uparrow}_r$, $t^{\downarrow}_{r+1} = T^{*}_2 \in \Te^{\Le}$, where
	\begin{itemize}
		\item $T^{*}_1 \in \Lambda^{\Ne}(T_1)$ and $T^{*}_2 \in \Lambda^{\Ne}(T_2)$, and
		\item For each $i \in \{1,\hdots,r\}$, $t^{\downarrow}_i \simeq t^{\uparrow}_i$ with $t^{\downarrow}_i, t^{\uparrow}_i \in \Lambda^{\Ne}(t_{i})$ for some tree $t_i \in \Te^{\Ne \setminus \Ke_{i}},$
	\end{itemize}
	for a partition $\Le' = \Ke \sqcup \Ke_1 \sqcup \hdots \Ke_{r}$. 
	The length of the geodesic is given by $$\sum_{i=1}^{r+1} d_{\text{BHV}}(t^{\uparrow}_{i-1},t^{\downarrow}_{i}).$$ 
\end{theorem}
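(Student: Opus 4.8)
The plan is to work from the definition of the Towering distance (Definition~\ref{Def:ToweringDistance}) as an infimum, over finite sequences of BHV geodesic segments joined by zero-distance leaf operations, and to show that any competing path can be transformed---without increasing its length---into a path of the asserted shape. Since there are finitely many BHV levels and each segment-minimization is a genuine BHV geodesic problem whose minimizer is attained, this transformation will exhibit an actual geodesic of the stated form, and its length will be read off directly.

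First I would take an arbitrary path from $T_1$ to $T_2$ and decompose it into maximal BHV geodesic segments separated by leaf operations. Applying Lemma~\ref{lemma:DirectPruning} and Lemma~\ref{lemma:OnePruningRatherThanTwo} repeatedly, I would collapse every run of consecutive prunings into a single pruning and every run of consecutive regraftings into a single regrafting, so that between two consecutive segments the transition is one prune-then-regraft. This produces precisely the skeleton $T_1 \simeq t^{\uparrow}_{0}, t^{\downarrow}_1, t^{\uparrow}_1, \ldots, t^{\downarrow}_r, t^{\uparrow}_r, t^{\downarrow}_{r+1} \simeq T_2$ from the paragraph preceding the theorem, with each junction $t^{\downarrow}_i \simeq t^{\uparrow}_i$ realized by pruning a leaf set and regrafting it.

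Next I would lift every segment to the top level $\Te^{\Ne}$. By Lemma~\ref{lemma:LiftingGeodesics}, the segment from $t^{\uparrow}_{i-1}$ to $t^{\downarrow}_{i}$ can be replaced by a BHV geodesic of equal length in $\Te^{\Ne}$, between trees in $\Lambda^{\Ne}(t^{\uparrow}_{i-1})$ and $\Lambda^{\Ne}(t^{\downarrow}_{i})$. The consistency check is that the lifted segments still stitch: since $t^{\downarrow}_i \simeq t^{\uparrow}_i$ and equivalence is preserved under sprouting, the lifted endpoints remain in one equivalence class and lie in a common $\Lambda^{\Ne}(t_i)$ for some $t_i \in \Te^{\Ne \setminus \Ke_i}$, so each junction is still a zero-distance prune-and-regraft, now entirely inside $\Te^{\Ne}$. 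Theorem~\ref{thm:NoUpperShortcuts} guarantees the lift preserves length rather than lowering it below the sum of the original BHV distances, so the lifted path has length exactly $\sum_{i=1}^{r+1} d_{\text{BHV}}(t^{\uparrow}_{i-1},t^{\downarrow}_{i})$, the endpoints are $T^{*}_1 = t^{\uparrow}_0 \in \Lambda^{\Ne}(T_1)$ and $T^{*}_2 = t^{\downarrow}_{r+1} \in \Lambda^{\Ne}(T_2)$, and no positive distance is travelled in a lower level.

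It remains to establish the partition structure. The non-common leaves $\Le_1 \setminus \Le'$ and $\Le_2 \setminus \Le'$ are consumed entirely by the opening regrafting into $T^{*}_1$ and the closing pruning from $T^{*}_2$, so each interior junction prunes only common leaves $\Ke_i \subseteq \Le'$; the $\Ke_i$ can be taken disjoint (with $\Ke$ the common leaves never pruned interiorly) by merging any two interior operations acting on the same leaf, exactly in the spirit of Lemma~\ref{lemma:OnePruningRatherThanTwo}. I expect this last step to be the main obstacle: rigorously showing that the interior prune-and-regraft operations can be rearranged so that each common leaf is touched at most once---the disjointness of the $\Ke_i$---while checking that this rearrangement remains compatible with the stitching of the lifted segments. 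The earlier pieces (consolidation, lifting, and the length formula) follow fairly mechanically from the cited results, with the length formula being immediate once all positive-length travel is confined to $\Te^{\Ne}$ and all junctions are zero-distance.
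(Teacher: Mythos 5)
Your proposal follows essentially the same route as the paper's own (very terse) proof: consolidate consecutive prunings and consecutive regraftings via Lemmas~\ref{lemma:DirectPruning} and~\ref{lemma:OnePruningRatherThanTwo} so that each junction is a single prune-and-regraft, then lift every resulting segment to $\Te^{\Ne}$ at equal length via Lemma~\ref{lemma:LiftingGeodesics}, and read off the length as the sum of the lifted segment lengths. The disjointness of the $\Ke_i$ that you flag as the main remaining obstacle is likewise left implicit in the paper's argument, so your writeup is, if anything, more explicit about where the care is needed.
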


Furthermore, Theorem \ref{Thm:shorterPathsBelow} constrains our search for $t_1, \hdots, t_r$. Specifically, fixing the trees $t_{i-1}^{\uparrow}$ and $t_{i+1}^{\downarrow}$, we know $t_i$ is on the BHV geodesic from $\psi(t_{i-1}^{\uparrow \perp \Ke_i}, \Ke_i)$ to $\psi(t_{i+1}^{\downarrow \perp \Ke_i},\Ke_i)$. This indicates that along the path from tree $t^{\uparrow}_{i-1}$ to $t^{\downarrow}_{i+1}$, the edges outside $P^{\downarrow \Ke_i}(t^{\uparrow}_{i-1})$ and $P^{\downarrow \Ke_i}(t^{\downarrow}_{i+1})$ change at the same rate as their counterparts along the geodesic at the lower level. This implies the rate of change of edges remains the same along the path from one BHV space to the next, until that edge is (possibly) dropped. Thus the path is a constant-rate morph of one tree into the other. 

From this, and the optimized expression shown in Theorem \ref{Thm:shorterPathsBelow}, we conclude the length of a geodesic will directly depend on the optimal partition of common leaves to be pruned and regrafted through the edges preventing these leaves from being pruned, similar to the way $P^{\downarrow\Me_i}(T_i)$ contribute to the expression in \eqref{Eq:ShortestLowerPath}. More explicitly, if we fix the endpoint trees $T^{*}_1 \in \Lambda^{\Ne}(T_1)$, $T^{*}_2 \in \Lambda^{\Ne}(T_2)$ and the partition $\Le' = \Ke \sqcup \Ke_1 \sqcup \hdots \sqcup \Ke_r$, the length of the shortest path through the endpoint trees 
is 
\begin{equation}
	\label{Eq:GeodesicExpression}
	\sqrt{\sum_{i=1}^{r} \left[P^{*\downarrow \Ke_i}(T_1) + P^{*\downarrow \Ke_i}(T_2)\right]^2 + d^2_{\text{BHV}}(T'_1, T'_2)},
\end{equation}
where $P^{*\downarrow \Ke_i}(T_1)$ and $P^{*\downarrow \Ke_i}(T_2)$ are modified versions of the sets of edges preventing $\Ke_i$ from being prunable, and $T'_1$ and $T'_2$ are the underlying trees resulting from pruning all leaves $\Le'\setminus \Ke$ after reducing all edges in $\bigcup_{i=1}^{r} P^{*\downarrow \Ke_i}(T_1)$ and $\bigcup_{i=1}^{r} P^{*\downarrow \Ke_i}(T_2)$ to zero. Note $P^{*\downarrow \Ke_i}(T_1)$ and $P^{*\downarrow \Ke_i}(T_2)$ may differ slightly from $P^{\downarrow \Ke_i}(T_1)$ and $P^{\downarrow \Ke_i}(T_2)$ depending on where other common leaves are regrafted. 

\subsection{An algorithm to compute geodesics} \label{subsec:specific_alg} 

Based on the above results, we propose an iterative algorithm for computing geodesics in Towering space. The algorithm consists of two main steps: initializing candidate paths between $T_1$ and $T_2$; and optimizing these paths using Theorems \ref{Thm:First_GeodesicBest} and \ref{Thm:shorterPathsBelow}. We begin by detailing the optimization step. 

Choose any partition $\Le' = \Ke \sqcup \Ke_1 \sqcup \cdots \sqcup \Ke_r$. Initialize a candidate path with arbitrary initial trees $T^{*}_1(0) \in \Lambda^{\mathcal{N}}(T_1)$ and $T^{*}_2(0) \in \Lambda^{\mathcal{N}}(T_2)$, and transition trees $t^{\downarrow}_{i}(0)$, $t_{i}(0)$, $t^{\uparrow}_{i}(0)$, where $t_{i}(0) \in \mathcal{N} \setminus \mathcal{K}_i$ and $t^{\downarrow}_{i}(0), t^{\uparrow}_{i}(0) \in \Lambda^{\mathcal{N}}(t_{i}(0))$. The algorithm iteratively improves the path length by updating the transition trees. For iteration $k$:

\begin{itemize}
    \item \textbf{If $k$ is odd} (forward pass from $T_1$ to $T_2$):
    \begin{itemize}
        \item Apply Theorem \ref{Thm:First_GeodesicBest} to update the initial tree $t^{\uparrow}_{0}(k) = T^{*}_1(k) \in \Lambda^{\mathcal{N}}(T_1)$, selecting the closest tree to $t^{\downarrow}_{1}(k-1)$.
        \item For each $j = 1, \hdots, r$, apply Theorem \ref{Thm:shorterPathsBelow} to optimize the path from $t^{\uparrow}_{j-1}(k)$ to $t^{\downarrow}_{j+1}(k-1)$ via pruning and regrafting $\mathcal{K}_j$, obtaining updated trees $t^{\downarrow}_{j}(k)$, $t_{j}(k)$, and $t^{\uparrow}_{j}(k)$.
    \end{itemize}
    
    \item \textbf{If $k$ is even} (backward pass from $T_2$ to $T_1$):
    \begin{itemize}
        \item Apply Theorem \ref{Thm:First_GeodesicBest} to update the final tree $t^{\downarrow}_{r+1}(k) = T^{*}_2(k) \in \Lambda^{\mathcal{N}}(T_2)$, selecting the closest tree to $t^{\uparrow}_{r}(k-1)$.
        \item For each $j = r, \dots, 1$, apply Theorem  \ref{Thm:shorterPathsBelow} to optimize the path from $t^{\downarrow}_{j+1}(k)$ to $t^{\uparrow}_{j-1}(k-1)$ via pruning and regrafting $\mathcal{K}_j$, obtaining updated trees $t^{\downarrow}_{j}(k)$, $t_{j}(k)$, and $t^{\uparrow}_{j}(k)$.
    \end{itemize}
 
    \item Repeat until no further improvements in the path length are observed.
\end{itemize}

The above algorithm conditions on a fixed partition $\Le' = \Ke \sqcup \Ke_1 \sqcup \cdots \sqcup \Ke_r$. To ensure convergence to a geodesic (rather than a local mimimum), as a preliminary approach, we suggest constructing a candidate path for each possible partition. Choosing appropriate partitions and candidate paths can be done as follows:
\begin{itemize}
    \item For each possible topology in $\Lambda^{\mathcal{N}}(T_1)$ and $\Lambda^{\mathcal{N}}(T_2)$, choose initial trees $T^{*}_1(0) \in \Lambda^{\mathcal{N}}(T_1)$ and $T^{*}_2(0) \in \Lambda^{\mathcal{N}}(T_2)$ .
    \item For each $j = 1, \dots, r$:
    \begin{itemize}
        \item Project $t^{\uparrow}_{j-1}(0)$ onto the trimmable space $\mathcal{Z}^{\mathcal{N}}_{\mathcal{K}_j}$, and set $t_{j}(0) = \psi\left(t^{\uparrow \perp \mathcal{K}_j}_{j-1}(0), \mathcal{K}_j\right)$. 
        \item For each possible topology in $\Lambda^{\mathcal{N}}(t_{j}(0))$, choose a representative regrafting $t^{\uparrow}_{j}(0)$.
    \end{itemize}
\end{itemize}

By the completeness of Towering space, Theorem \ref{Thm:shorterPathsBelow} and Theorem \ref{Thm:4}, this procedure is guaranteed to find at least one geodesic. While this algorithm is guaranteed to find a solution, it is not efficient. We leave further refinements --- especially in selecting efficient partitions of $\Le'$ to reduce the number of candidate paths --- to future work.

\section{Discussion} \label{sec:discussion} 

The primary contribution of this paper is the introduction of Towering space, a complete metric space that is defined over \textit{all} possible phylogenetic trees on a collection of leaves, regardless of whether their leaf sets are identical. This framework enables comparisons between any pair of trees, 
laying a foundation for extending data-analytic tools to fully general phylogenetic comparisons. 

Towering space shares many desirable geometric properties of BHV space, including path traversal via smooth topological and edge-length deformations. This was made possible through two subtle but important innovations: ensuring that internal edges map to internal edges under prunings (Definition \ref{Def:prunable_All}), and using the $L^2$ norm for merging edge-lengths (Definition \ref{Def:LeafPrune}). Edge-length merging via $L^2$ has a particularly interesting interpretation. 
In BHV space, the origin tree is identical for every topology, corresponding to the tree with all branches of zero length. This tree represents no structured evolutionary divergence among the organisms labeling its leaves. The BHV distance of a tree to the origin represents the cumulative divergence represented by all the branches in the tree, thus, trees further from the origin reflect evolutionary processes with more substantial changes than those closer to the origin. The use of the $L^2$ norm to combine branch lengths in leaf prunings makes the trees connected by them equidistant from the origin (in their respective BHV space): $d_{BHV}(0, T) = d_{BHV}(0, \psi(T,\Me))$ for any $T \in \Ze^{\Le}_{\Me}$, where $0$ denotes the origin tree. Thus, a leaf pruning produces a new tree with a similar topology and reflecting a comparable amount of overall evolutionary change among the organisms in the leaf set. This is an appealing biological interpretation of distances in Towering space. 
Furthermore, the BHV distance between $T_1, T_2 \in \Te^{\Le}$ can be directly expressed in terms of the distance of each tree to the origin tree and differences among their topologies \citep[Section 4.2]{BILLERA2001} by 
\begin{equation}
\label{eq:BHVLinkOrigin}
    d^2_{\text{BHV}}(T_1, T_2) = ||T_1||^2 + ||T_2||^2 - 2||T_1|| ||T_2|| \cos \left[\min\{\pi, \measuredangle (T_1,T_2)\}\right],
\end{equation}
where $||T|| = d_{\text{BHV}}(0,T)$ is the distance from a tree to the origin tree and $\measuredangle (T_1,T_2)$ is the angle between $T_1$ and $T_2$ via the origin tree. This angle is primarily a reflection of differences in their topologies. 
Since all trees in a sprouting space $\Lambda^{\Le}(T)$ for a given tree $T \in \Te^{\Le'}$ with $\Le'\subset \Le$ are equidistant from the origin tree, \eqref{eq:BHVLinkOrigin} implies that the closest tree in the sprouting space to any other tree $T' \in  \Te^{\Le}$ will mainly depend on the similarity of their topologies. This concurs with the intuition that, when comparing trees with different leaf sets, trees with similar topologies should lie close to each other --- for example, because they share an evolutionary history, but with some organisms unobserved. 

We also showed that the BHV distance between two trees after pruning gives a lower bound on the Towering distance between their pre-images (Theorem \ref{thm:NoUpperShortcuts}). This monotonicity property ensures that our metric remains coherent across different BHV levels by guaranteeing that the distance between trees is only as close as the most favorable distance in upper levels, after accounting for the loss of information through pruning. 
In addition, in Towering space, edges preventing leaf prunings are grouped and their contribution to the path length is identical to that of uncommon edges through support pairs in BHV space. Once such edges are accounted for, the remaining path proceeds through a lower-dimensional BHV space, where the geodesic length behaves identically to BHV distances.

Our results characterizing geodesics led naturally to an algorithm for distance computation. While the proposed algorithm will provably find a geodesic, it is a preliminary algorithm, and future work is required before the distance can be adopted for large-scale use. The structural similarity with BHV geodesics suggests the potential for adapting \cite{OwenMegan2011}'s algorithm to efficiently classify edges into those preventing prunings and those that belong to the underlying lower trees. This would give the efficient partition of $\Le'$. We see this as the most promising direction to pursuing a faster algorithm. We leave further algorithmic development to future work. 

Another key feature of Towering space is that it is a \textit{complete length space}, meaning all tree pairs are connected through continuous paths. However, in contrast to BHV space, 
Towering space does not guarantee unique geodesic paths. We conjecture that geodesics will be unique almost surely (e.g., for trees drawn from an absolutely continuous measure with full support on $\Te^{\mathbf{P}(\Ne)}$), and believe that the completeness and path-connectedness of the space will facilitate the development of analytical tools in Towering space (such as Fr\'echet mean and principal path construction). 

From a computational standpoint, Towering space also addresses limitations of BHV space generalizations that embedded trees in subspaces of the highest-dimension BHV space \citep{ren2017combinatorial, GrindstaffGillian2019RoPL}. While BHV distances can be minimized between pairs of these subspaces, the resulting solution is a dissimilarity measure, not a true metric, and the fastest algorithms for computing the dissimilarity are computationally expensive \citep{VW_Extensions}. In contrast, Towering distance is a true metric, and avoids high-dimensional embeddings by transitioning between BHV spaces of different dimensions through pruning and regrafting operations. While no computational implementation yet exists for Towering distances, we conjecture that an implementation of even the preliminary algorithm given in Section \ref{subsec:specific_alg} will be faster than the current fastest algorithms for computing distances between Extension spaces. Both approaches require outer loops over all topologies in $\Lambda^{\mathcal{N}}(T_1)$ and $\Lambda^{\mathcal{N}}(T_2)$, but Towering distances do not require computing gradients, and each local improvement is done through a closed-form operation (Theorem \ref{Thm:shorterPathsBelow}) instead of an approximate solution. This dramatically accelerates computation.

Finally, from a biological standpoint, Towering space geodesics provide a natural interpretation for gene gain and loss: pruning corresponds to the loss of a gene by a leaf organism (i.e., genome streamlining), while regrafting corresponds to the gain of of gene (e.g., via horizontal gene transfer). By formalizing these transformations within a metric structure, we advance tools to study evolution in organisms that frequently experience these events. 
Trees with non-identical leaf sets pose a particular challenge for reconstructing ancient evolutionary relationships, and we anticipate that the greatest potential benefit of Towering space is in the study of the emergence of complex life from its single-celled origins. 

\section*{Acknowledgements}

This work was supported by NIH NIGMS R35 GM133420 and NSF 2415614. 

\newpage

\section{Supplementary Material: Proofs of key results}

\begin{proof}[of Lemma \ref{lemma:Trimmable_convex}]

    \textbf{(Closure)} Consider $T \in \mathcal{T}^{\mathcal{L}}$ such that $T \notin \mathcal{Z}_{\mathcal{M}}^{\mathcal{L}}$. Since $\mathcal{M}$ is not mutually prunable from $T$ then either some external edge $e$ to a leaf in $\mathcal{M}$ has a positive length, or there is an internal split $s$ such that $\Psi_{\mathcal{L}\setminus \mathcal{M}}(s)$ is not an internal split on the leaves $\mathcal{L} \setminus \mathcal{M}$. In the first case, given a value $\epsilon < |e|_{T}$, is enough to guarantee $|e|_{T'}>0$ for any tree $T'$ such that $d_{\text{BHV}}(T,T') < \epsilon$, which implies $T' \notin \mathcal{Z}_{\mathcal{M}}^{\mathcal{L}}$. In the second case, if $\epsilon < |s|_T$ then similarly $|s|_{T'}>0$ (so that $s \in \mathcal{S}(T')$) for any tree $T'$ such that $d_{\text{BHV}}(T,T')< \epsilon$, implying $T'\notin \mathcal{Z}_{\mathcal{M}}^{\mathcal{L}}$. Thus, there is always a neighborhood of $T$ not intersecting $\mathcal{Z}_{\mathcal{M}}^{\mathcal{L}}$. Therefore the subspace is closed. 
	
	\textbf{(Convexity)} Consider two trees $T_1, T_2 \in \mathcal{Z}_{\mathcal{M}}^{\mathcal{L}}$ and $T$ on the geodesic from $T_1$ to $T_2$. Along the geodesic, the lengths of external edges change gradually from the lengths they have in $T_1$ to the lengths they have in $T_2$. Since all external edges to leaves in $\mathcal{M}$ are of length zero in both trees, then these external edges are also of length zero in $T$. Given any internal split $s$ of the leaves $\mathcal{L}$ such that $\Psi_{\mathcal{L}\setminus \mathcal{M}}(s)$ does not map to an internal split,  $s$ is not a part of the internal edges of $T_1$ nor the internal edges of $T_2$, which implies it is not in $\mathcal{S}(T)$. Thus,  $\mathcal{M}$ is mutually prunable from $T$, and $\mathcal{Z}_{\mathcal{M}}^{\mathcal{L}}$ is convex.
\end{proof}

\begin{proof}[of Lemma \ref{lemma:ProjectionCharacterize}]

    By definition, all external edges to leaves in $\mathcal{M}$ in $T^{\perp \mathcal{M}}$ are of size zero, and every internal edge maps to an internal split under the TDR map $\Psi_{\mathcal{L}\setminus \mathcal{M}}$, implying $T^{\perp \mathcal{M}} \in \mathcal{Z}_{\mathcal{M}}^{\mathcal{L}}$. It is straightforward to see $d_{\text{BHV}}(T, T^{\perp \mathcal{M}}) = ||P^{\downarrow \mathcal{M}}(T)||_{T}$, since all edges in $P^{\downarrow \mathcal{M}}$ are of size zero (or not present) in $T^{\perp \mathcal{M}}$ and those are the only edges where the two trees differ. Finally, for any tree $t'\in \mathcal{T}^{\mathcal{L}}$ such that $d_{\text{BHV}}(T,t') < ||P^{\downarrow \mathcal{M}}(T)||_{T}$ at least one of the edges in $P^{\downarrow \mathcal{M}}(T)$ is of positive length, thus not belonging to the $\mathcal{M}$-trimmable subspace.
    
\end{proof}


\begin{proof}[ of Theorem \ref{thm:NoUpperShortcuts}]
	Given two trees $T^{\uparrow}_1 \in \Lambda^{\Le}(T_1)$ and $T^{\uparrow}_2 \in \Lambda^{\Le}(T_2)$, we know both $T^{\uparrow}_1, T^{\uparrow}_2 \in \Ze^{\Le}_{\Me}$ for $\Me = \Le \setminus \Le$. Since $\Ze^{\Le}_{\Me}$ is convex and closed, the geodesic from $T^{\uparrow}_1$ to $T^{\uparrow}_2$ is fully contained in $\Ze^{\Le}_{\Me}$. Let $t^{\uparrow}_{1},\hdots, t^{\uparrow}_{k}$ be the trees along this geodesic at topology orthant boundaries, and take $T^{\uparrow}_1 = t^{\uparrow}_{0}$ and $ t^{\uparrow}_{k+1} = T^{\uparrow}_2$, so that the segment going from $t^{\uparrow}_{i-1}$ to $t^{\uparrow}_{i}$ is fully contained in a single topology orthant for $i = 1, \hdots, k+1$. 
	For each of these trees, take $t_{i} = \psi(t^{\uparrow}_i, \Me)$. The segment from each $t_{i-1}$ to $t_{i}$ is also fully contained in a single orthant. Moreover, the reverse triangle inequality implies $d_{\text{BHV}}(t_{i-1},t_{i}) \leq d_{\text{BHV}}(t^{\uparrow}_{i-1},t^{\uparrow}_{i})$. Since the lower trees describe a valid path from $T_1$ to $T_2$, and it is already a shorter path than the geodesic in the upper BHV level, we have that $d_{\text{BHV}}(T_{1},T_{2}) \leq d_{\text{BHV}}(T^{\uparrow}_{1},T^{\uparrow}_{2})$.
	
\end{proof}

To construct shortest paths in Towering Space, leaf operations along these paths should be performed in optimal positions. To select points where prunings and regraftings take place, we introduce a structure that divides the edges of a tree $T$ in $\Pe^{\downarrow \Me}(T)$ (those edges preventing $\Me$ from being prunable from $T$) into separate independent subtrees. This division plays an important role in finding the best trees described in Theorems 2 and 3, and will later be used in distance computation. 

\begin{definition}
	\label{Def:independentSets}
	Given a tree $T \in \mathcal{T}^{\mathcal{L}}$ and a subset of leaves $\mathcal{M} \subset \mathcal{L}$, consider the natural partition of $\mathcal{M}$ into subsets that group into a single node when $T$ is projected onto the $\mathcal{M}$-trimmable space. This can be achieved by a partition $\mathcal{M} = \mathcal{M}_1 \sqcup \hdots \sqcup \mathcal{M}_{r}$, called \textbf{the independent maximal sets} of $\mathcal{M}$ in $T$, such that for each $\mathcal{M}_i$:
	
	 \begin{enumerate}
	 	\item There is a leaf $\ell \in \mathcal{L} \setminus \mathcal{M}$ such that $s_i = \mathcal{M}_{i} \cup \{\ell\} \big| \mathcal{L} \setminus \left(\mathcal{M}_{i} \cup \{\ell\}\right) \in \mathcal{S}(T)$ and there is no $\mathcal{M}'\subseteq \mathcal{M}$ such that $\mathcal{M}_i \subset \mathcal{M}'$ and $\mathcal{M}' \cup \{\ell\} \big| \mathcal{L} \setminus \left(\mathcal{M}' \cup \{\ell\}\right) \in \mathcal{S}(T)$. In this case, $\mathcal{M}_i$ is  \textbf{the independent maximal set neighboring $\ell$}; or 
	 	\item  $s_i = \mathcal{M}_{i} \big| \mathcal{L} \setminus \mathcal{M}_{i} \in \mathcal{P}(T)$ and there is no $\mathcal{M}'\subseteq \mathcal{M}$ such that $\mathcal{M}_i \subset \mathcal{M}'$ and either $\mathcal{M}' \big| \mathcal{L} \setminus \mathcal{M}' \in \mathcal{S}(T)$ or $\mathcal{M}' \cup \{\ell\} \big| \mathcal{L} \setminus \left(\mathcal{M}' \cup \{\ell\}\right) \in \mathcal{S}(T)$ for some leaf $\ell \in \mathcal{L}\setminus \mathcal{M}$. We refer to $\mathcal{M}_i$ as \textbf{an independent maximal set attached to internal edges}, since the edge $s_i$ must be adjacent to two internal edges in $T$ or incident on a node with degree higher than 3. 
	 \end{enumerate}
	 
	 Given an independent maximal set $\mathcal{M}_i$, consider the set
	 $$P^{\downarrow \mathcal{M}}_i(T) = \left\{s \in \mathcal{S}(T) \mid s = \mathcal{M}' \big| \mathcal{L}\setminus \mathcal{M}' \text{ or } \mathcal{M}' \cup \{\ell\} \big| \mathcal{L}\setminus (\mathcal{M}' \cup \{\ell\}) \text{ for } \mathcal{M}'\subseteq \mathcal{M}_i \right\}$$
	 which we call \textbf{the edges in $T$ belonging to} $\mathcal{M}_i$. Note $P^{\downarrow \mathcal{M}}_i(T) \subset P^{\downarrow \mathcal{M}}(T)$ (Figure \ref{fig:Example_IndependentSubTrees}).
\end{definition}

\begin{figure}
	\centering
	\includegraphics[width = 0.60\textwidth]{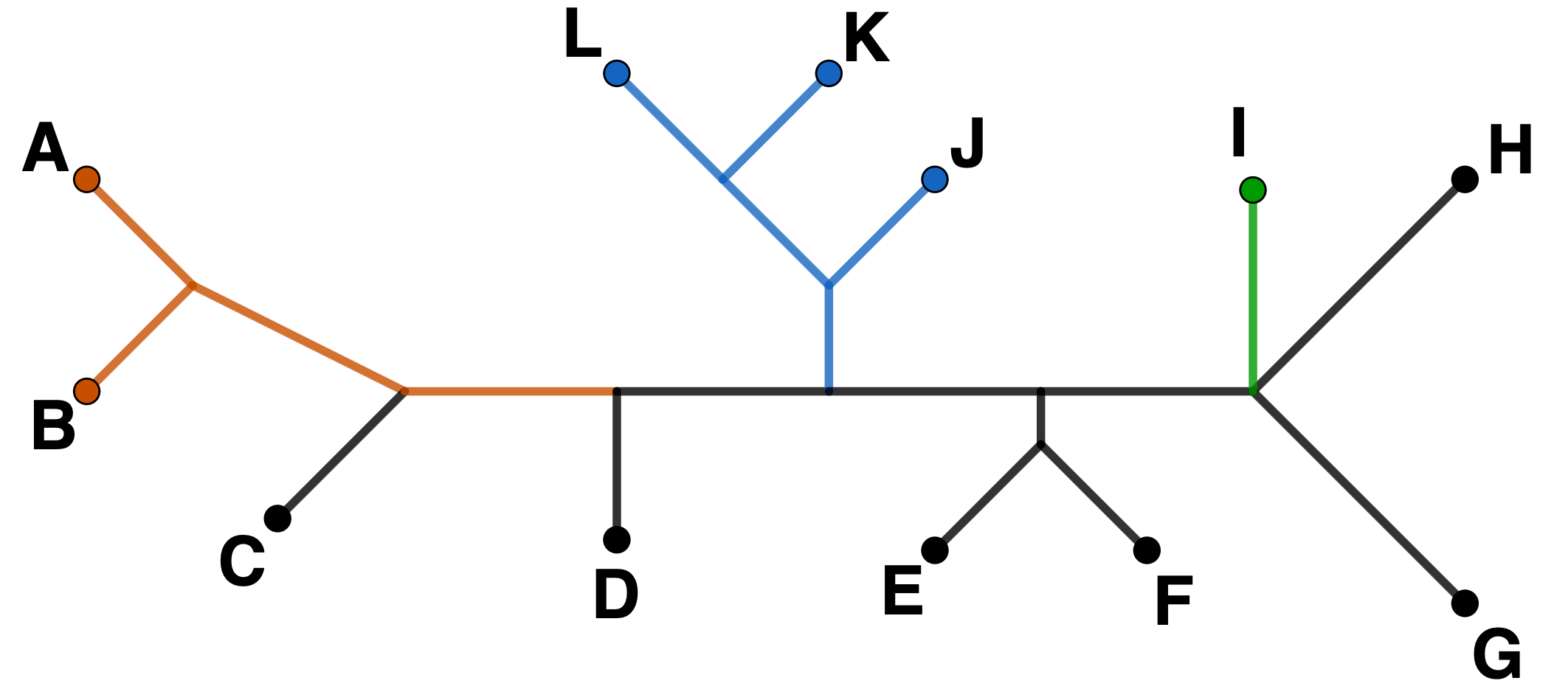}
    \caption{
    For the above tree, consider the subset of leaves $\mathcal{M} = \{A,B,I,J,K,L\}$. The independent maximal sets are $\mathcal{M}_1 = \{A,B\}$, $\mathcal{M}_2 = \{I\}$  and  $\mathcal{M}_3 = \{J, K, L\}$. $\mathcal{M}_1$ neighbours $C$, and $\mathcal{M}_2$  and  $\mathcal{M}_3$ are attached to internal edges. The set of edges belonging to each independent set is highlighted in orange, green, and blue, respectively.
    } 
    \label{fig:Example_IndependentSubTrees}
\end{figure}

This partition of the edges on $P^{\downarrow \mathcal{M}}(T)$ plays a key role in the finding the optimal tree $T^{*}$ described in Theorem \ref{Thm:First_GeodesicBest}. In the following proof this becomes apparent. 


\begin{proof}[of Theorem \ref{Thm:First_GeodesicBest}]
(i) By Corollary \ref{cor:ProjectionGood}, $$d_{\text{BHV}}^2(T_1,t) \geq d_{\text{BHV}}^2(T_1,T_1^{\perp \Me}) + d_{\text{BHV}}^2 (T_1^{\perp \Me},t)$$ for any tree $t \in \Lambda^{\Le}(T_2)$.
	Since $T_1^{\perp \mathcal{M}}, t \in \Ze_{\Me}^{\Le}$,  $d_{\text{BHV}}^2 (T_1^{\perp \mathcal{M}},t) \geq d_{\text{BHV}}^2 (T'_1,T_2)$ by Theorem \ref{thm:NoUpperShortcuts}. Therefore,
	$$d_{\text{BHV}}^2(T_1,t) \geq d_{\text{BHV}}^2(T_1,T_1^{\perp \mathcal{M}}) + d_{\text{BHV}}^2 (T'_1,T_2).$$
	
	(ii)  To construct the optimal tree $T^{*} \in \Lambda^{\mathcal{L}}(T_2)$, the positions where leaves in $\mathcal{M}$ are regrafted on $T_2$ are based on the partition of these leaves into independent maximal sets $\mathcal{M} = \mathcal{M}_1 \cup \hdots \cup \mathcal{M}_r$ in $T_1$ and the geodesic from $T'_1$ to $T_2$. The process is as follows:
	 
	   \begin{enumerate}
	   	\item For every leaf $\ell \in \mathcal{L}'$ for which there is a neighboring independent maximal set $\mathcal{M}_{i}$, regraft these at the same node in $T_2$ to which the external edge to $\ell$ is incident (see Figure \ref{fig:Mi_neighbor_leaf}).
	   	
	   	\begin{figure}
    \centering
    	\includegraphics[width = 0.45\textwidth]{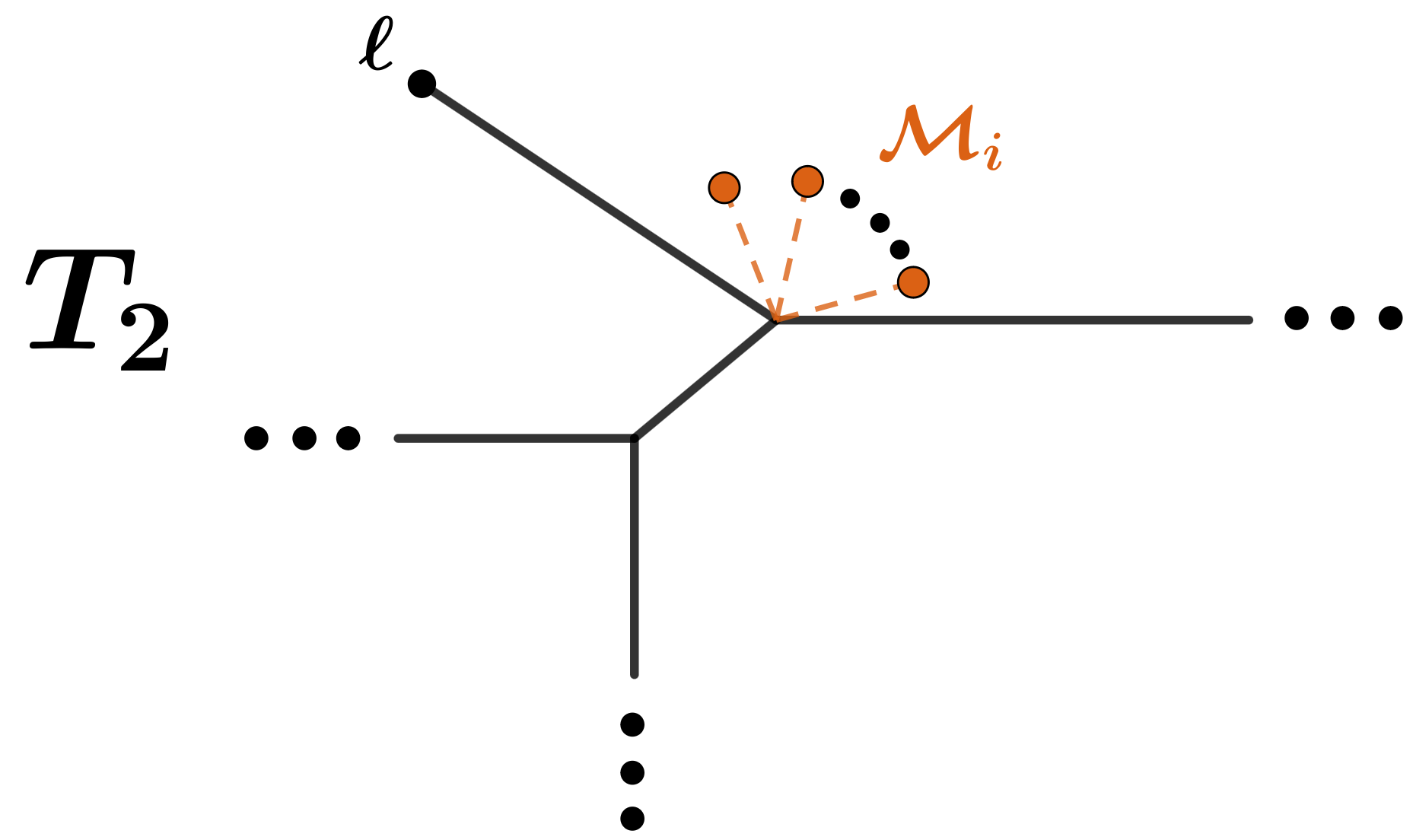}
    \caption{Regrafting an independent maximal set neighboring a leaf $\ell \in \mathcal{L}'$. $T_2 \in \mathcal{T}^{\mathcal{L}'}$ is partially shown. The new leaves $\mathcal{M}_i$ (orange) form an independent maximal set neighboring $\ell$ in another tree $T_1$, so they are regrafted to be incident to the same node as the external edge to $\ell$. The external edges of the new leaves are of length zero (dashed lines).} 
    \label{fig:Mi_neighbor_leaf}
\end{figure}
	   	
	   	\item For a common edge $p = \mathcal{L}'_1 \big| \mathcal{L}'_2 \in \mathcal{S}(T'_1) \cap \mathcal{S}(T_2)$, consider all internal edges of $T_1^{\perp \Me}$ that map to $p$ under the TDR map $\Psi_{\mathcal{L}'}$. Consider a new partition of $\mathcal{M}$ given by $\mathcal{M} = \widehat{\mathcal{M}}^p_{1} \sqcup \mathcal{M}^{p}_1 \sqcup \hdots \sqcup \mathcal{M}^{p}_{r_{p}} \sqcup \widehat{\mathcal{M}}^p_{2}$, where $\mathcal{M}^{p}_1, \hdots, \mathcal{M}^{p}_{r_{p}}$ are all the independent maximal sets that could be considered attached to $p$ in $T_1$ (including attached to the endpoint nodes of the edge) and the two sets $\widehat{\mathcal{M}}^p_{i}$ are the rest of the leaves in $\mathcal{M}$ that are on the same side of $p$ as $\mathcal{L}'_i$ in $T_1$; i.e. every edge in $T^{\perp \Me}_1$ that maps to $p$ under the TDR map is of the form 
	   	$q_{j}^{p} = \left(\mathcal{L}'_1 \cup \widehat{\mathcal{M}}^p_{1} \cup \mathcal{M}^{p}_1 \cup \hdots, \mathcal{M}^{p}_{j} \right) \big| \left(\mathcal{M}^{p}_{j+1} \cup \hdots, \mathcal{M}^{p}_{r_{p}} \cup \widehat{\mathcal{M}}^p_{2} \cup \mathcal{L}'_2\right).$
	   	When regrafting $\mathcal{M}$ onto $T_2$ to create $T^{*}$, leaves in $\widehat{\mathcal{M}}^p_{i}$ are regrafted on the same side of the edge $p$, and each $\mathcal{M}^{p}_j$ is regrafted at the same position (proportional wise) inside the edge $p$ in $T_2$. Thus, every $q_{j}^{p}$ as given above will be part of the interior edges $\mathcal{S}(T^{*})$ of $T^{*}$, with lengths given by $\left|q_{j}^{p}\right|_{T^{*}} = \frac{\left|q_{j}^{p}\right|_{T_1^{\perp \Me}}}{|p|_{T'_1}}|p|_{T_2}$ (see Figure \ref{fig:Mi_commonEdge}).
	   	\begin{figure}
    \centering
    	\includegraphics[width = 0.95\textwidth]{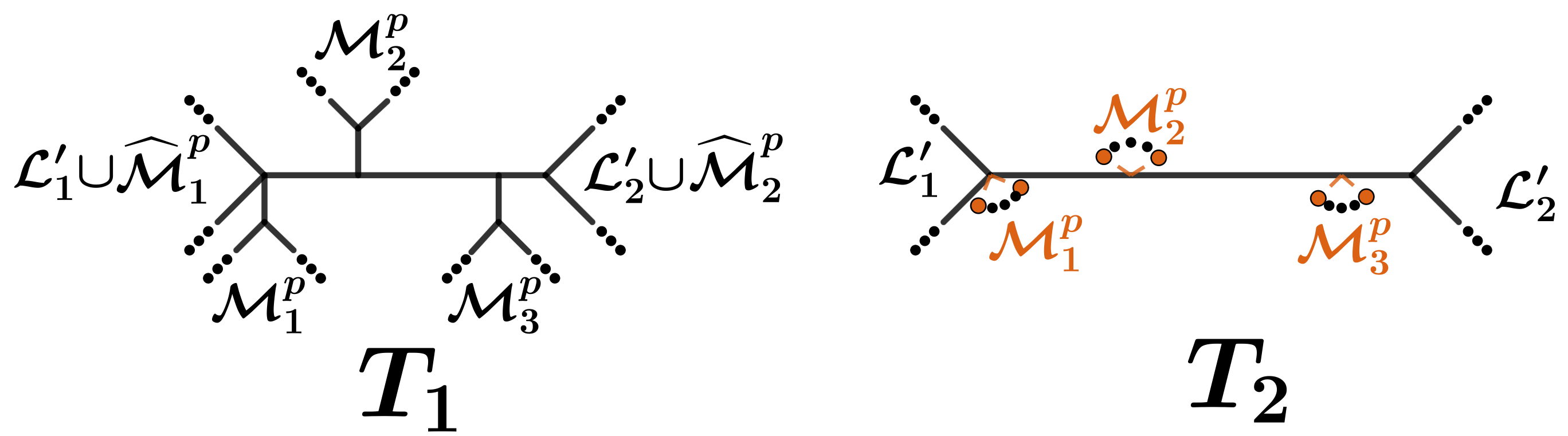}
    \caption{Regrafting independent maximal sets attached to a common edge $p \in \mathcal{S}(T'_1) \cap \mathcal{S}(T_2)$. (Left) $T_1$, containing the edges mapping to $p$ under the TDR map, featuring three independent maximal sets attached to these edges. (Right) The common edge $p$ in $T_2$, with the new leaves in the three independent maximal sets regrafted in the correct position (orange; same proportional position as their counterparts in $T_1$). Dashed lines indicate external edges of length zero.} 
    \label{fig:Mi_commonEdge}
\end{figure}
	   	
	   	\item For any independent maximal set $\mathcal{M}_{i}$ of leaves that do not fall in one of the two previous scenarios, the split $\mathcal{M}_{i} \big| \left(\mathcal{L}\setminus \mathcal{M}_{i} \right)$ is adjacent to at least one edge $q_{i} \in \mathcal{S}(T_1)$ that under $\Psi_{\mathcal{L}'}$ maps to an internal edge $p_{i} \in \mathcal{S}(T'_1)$ that is uncommon with edges in $\mathcal{S}(T_2)$. Given a support for the path space of the geodesic from $T'_1$ to $T_2$, there must be a support pair $(A,B)$  for which $p_{i} \in A$ and there exists $p'_{i} \in B$ that is the product of a nearest neighbor interchange from the node resulting from collapsing $p_{i}$ to zero. Regraft the external edges to leaves in $\mathcal{M}_{i}$ on the center of $p'_{i}$ (see Figure \ref{fig:Mi_uncommonEdge}).
	   	
	   	\begin{figure}
    \centering
    	\includegraphics[width = 0.95\textwidth]{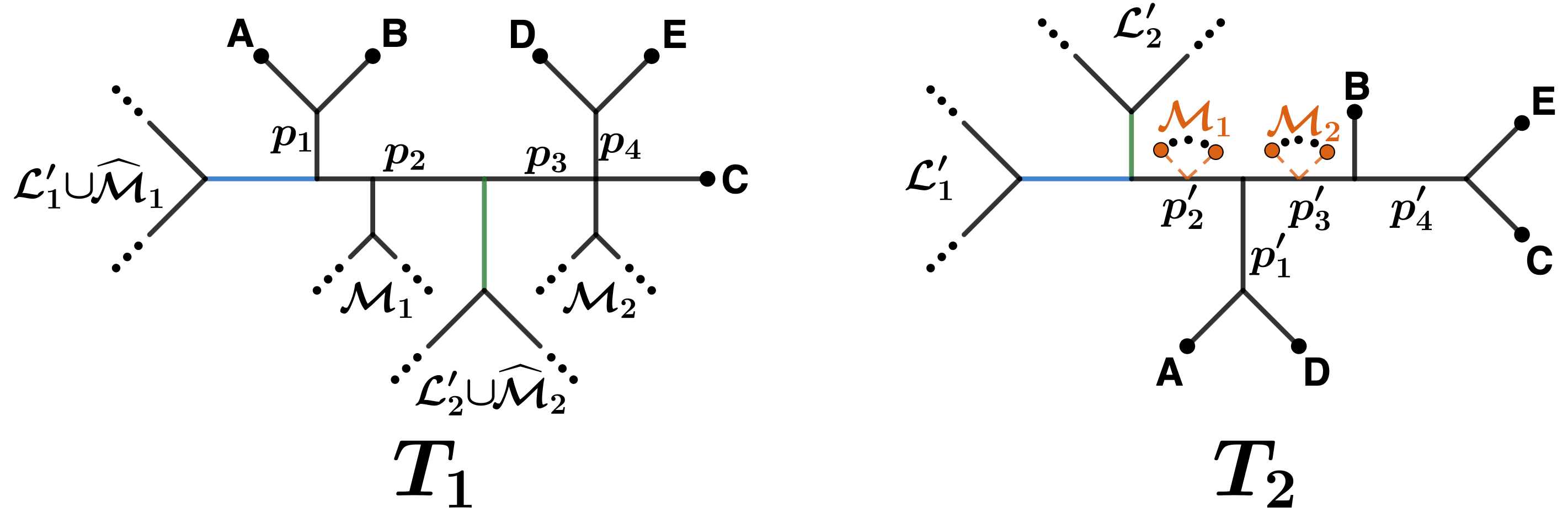}
    \caption{Regrafting independent maximal sets attached to uncommon edges for $T_1 \in \mathcal{T}^{\mathcal{L}}$ (left) and $T_2 \in \mathcal{T}^{\mathcal{L}'}$ (right). 
    After pruning $\mathcal{M}$ from $T_1^{\perp}$, the edges $\left\{A, B, C, D, E\right\} \cup \mathcal{L}'_2 \big| \mathcal{L}'_1$ (blue) and $\left\{A, B, C, D, E\right\} \cup \mathcal{L}'_1 \big| \mathcal{L}'_2$ (green) are common, while the edges $p_1, p_2, p_3, p_4 \in \mathcal{S}(T'_1)$ and $p'_1, p'_2, p'_3, p'_4 \in \mathcal{S}(T_2)$ are uncommon. In $T_1$, the independent maximal set $\mathcal{M}_1$ is adjacent to two edges mapping to $p_2$ under the TDR map, and $\mathcal{M}_2$ is adjacent to edges that map to $p_3$ and $p_4$. Assuming the support for the geodesic from $T'_1$ to $T_2$ includes the support pairs $\left(\{p_2\},\{p'_2\}\right)$, $\left(\{p_4\}, \{p'_4\}\right)$, and $\left(\{p_1, p_3\}, \{p'_1, p'_3\}\right)$, $p'_2$ can be selected to regraft $\mathcal{M}_1$ and $p'_3$ to regraft $\mathcal{M}_2$. These leaves are represented in orange in the correct position for the regraft, with dashed lines indicating the external edges of length zero.} 
    \label{fig:Mi_uncommonEdge}
\end{figure}
	   \end{enumerate}

       The tree $T^{*}$ created through these regraftings contains all leaves in $\mathcal{M}$ and falls in $\Lambda^{\mathcal{L}}(T_2)$. Moreover, the BHV distance from $T_1$ to $T^{*}$ will precisely be $\sqrt{d^2_{\text{BHV}}(T_1, T_1^{\perp \mathcal{M}}) + d^2_{\text{BHV}}(T'_1, T_2)}$. To see this, consider again the edges $P^{\downarrow \mathcal{M}}(T_1)$. If an edge $s_{\mathcal{M}'} \in P^{\downarrow \mathcal{M}}(T_1)$ takes the form $\mathcal{M}'\big| \mathcal{L}\setminus \mathcal{M}'$, by definition $\mathcal{M}' \subseteq \mathcal{M}_i$ for some $i = 1, \hdots r,$ and since all leaves in the maximal independent sets are regrafted together when constructing $T^{*}$, then $s_{\mathcal{M}'}$ is pairwise compatible with $\mathcal{S}(T^{*})$. Similarly, for an edge $s^{\ell}_{\mathcal{M}'} \in P^{\downarrow \mathcal{M}}(T_1)$ of the form $\mathcal{M}' \cup \{\ell\} \big| \mathcal{L}\setminus \left(\mathcal{M} \cup \{\ell\}\right)$, we have $\mathcal{M}'\subseteq \mathcal{M}_i$ for the independent maximal set $\mathcal{M}_i$ neighboring $\ell$. In $T^{*}$ all external edges to leaves in $\mathcal{M}_i$ are adjacent to the external edge to $\ell$ (as in Figure \ref{fig:Mi_neighbor_leaf}). This implies $s^{\ell}_{\mathcal{M}'}$ is compatible with every edge in $T^{*}$. Since $P^{\downarrow \mathcal{M}}(T_1)$ is pairwise compatible with $\mathcal{S}(T^{*})$, the lengths all edges $s_{\mathcal{M}'}$ and $s^{\ell}_{\mathcal{M}'}$ will gradually diminish to zero along the geodesic. Thus, the contribution of these edges to the geodesic length is $||P^{\downarrow \mathcal{M}}(T_1)|| = d_{\text{BHV}}(T_1, T_1^{\perp \mathcal{M}})$.

    We now focus on the remaining internal edges ($\mathcal{S}(T_1) \cup \mathcal{S}(T^{*}) \setminus P^{\downarrow \mathcal{M}}(T_1)$) and their respective contribution to the length of the geodesic. All internal edges in $\mathcal{S}(T_1)$ that are not in $P^{\downarrow \mathcal{M}}(T_1)$ belong to $\mathcal{S}(T_1^{\perp})$, and have the same lengths. Step 2 of the construction of $T^{*}$ guarantees $q \in \mathcal{S}(T_1^{\perp}) \cap \mathcal{S}(T^{*})$ if and only if $\Psi_{\mathcal{L}'}(q) \in \mathcal{S}(T'_1) \cap \mathcal{S}(T_2)$. The contribution to the length of the geodesic of all common edges $q^{p}_1,...,q^{p}_{r}$ in $\mathcal{S}(T_1^{\perp}) \cap \mathcal{S}(T^{*})$  that map into a common edge $p$ is given by
	   \begin{equation*}
	   		\begin{split}
	   		\sum_{k = 1}^{r} (|q^{p}_k|_{T_1^{\perp}} - |q^{p}_k|_{T^{*}})^2 &= \sum_{k = 1}^{r} \left(|q^{p}_k|_{T_1^{\perp}} - \frac{\left|q_{k}^{p}\right|_{T_1^{\perp}}}{|p|_{T'_1}}|p|_{T_2}\right)^2\\
	   		 &= \left(1 - \frac{|p|_{T_2}}{|p|_{T_1^{\perp}}}\right)^2 \sum_{k=1}^{r} |q^{p}_{k}|^2_{T_1^{\perp}} \notag \\
			 &= \left(|p|_{T'_1} - |p|_{T_2}\right)^2.
	   		\end{split}
		\end{equation*}	 
		The contribution of these edges to the length of the geodesic from $T_1$ to $T^{*}$ is the same as the contribution of $p$ in the length of the geodesic from $T'_1$ to $T_2$. 
		
        Finally, consider the contribution of uncommon edges between $T_1$ and $T^{*}$. 
        Denote the support of the geodesic from $T'_1$ to $T_2$ by $\mathcal{A} = \{A_1, \hdots, A_k\}$ and $\mathcal{B} = \{B_1, \hdots, B_k\}$. 
        For each $A_i$, consider the set of edges in $T_1^{\perp}$ that map to an edge in $A_i$, $\left. \Psi_{\mathcal{L}'}^{-1}(A_i) \right|_{\mathcal{S}(T_1^{\perp})}$, and similarly, the set of edges in $T^{*}$ that map to edges in $B_i$, $\left. \Psi_{\mathcal{L}'}^{-1}(B_i) \right|_{\mathcal{S}(T^{*})}$. 
        The positions for the independent maximal sets in Step 3 are selected to ensure that $\left(\left. \Psi_{\mathcal{L}'}^{-1}(A_1) \right|_{\mathcal{S}(T_1^{\perp})}, \ldots, \left. \Psi_{\mathcal{L}'}^{-1}(A_k) \right|_{\mathcal{S}(T_1^{\perp})} \right)$ and $\left(\left. \Psi_{\mathcal{L}'}^{-1}(B_1) \right|_{\mathcal{S}(T^{*})}, \ldots, \left. \Psi_{\mathcal{L}'}^{-1}(B_k) \right|_{\mathcal{S}(T^{*})}\right)$ satisfy the pairwise compatible condition for path space supports.
        Since $||A_i||_{T'_1} = ||\left. \Psi_{\mathcal{L}'}^{-1}(A_i) \right|_{\mathcal{S}(T_1^{\perp})}||_{T_1^{\perp}}$ and $||B_i||_{T_2} = ||\left. \Psi_{\mathcal{L}'}^{-1}(B_i) \right|_{\mathcal{S}(T^{*})}||_{T^{*}}$ the condition (P2) is also satisfied. Consequently, the contribution of the support pair $\left(\left. \Psi_{\mathcal{L}'}^{-1}(A_i) \right|_{\mathcal{S}(T_1^{\perp})}, \left. \Psi_{\mathcal{L}'}^{-1}(B_i) \right|_{\mathcal{S}(T^{*})}\right)$ to the length of the geodesic from $T_1$ to $T^{*}$ is the same as the contribution of the edges $(A_i, B_i)$ to the length of the geodesic from $T'_1$ to $T_2$. Thus, the internal edges in $\mathcal{S}(T_1) \cup \mathcal{S}(T^{*}) \setminus P^{\downarrow \mathcal{M}}(T_1)$ contribute to the length of the geodesic from $T_1$ to $T^{*}$ an amount exactly equal to that of the internal edges in $\mathcal{S}(T'_1) \cup \mathcal{S}(T_2)$'s contribution to the geodesic from $T'_1$ to $T_2$. The result follows.

\end{proof}

\begin{proof}[of Lemma \ref{lemma:DirectPruning}]
Consider a general path from $T_1$ to $T_2$ where the pruning of $\mathcal{M}$ occurs in the sprouting space $\Lambda^{\mathcal{L}}(X)$ of a tree $X \in \mathcal{T}^{\mathcal{L}'}$. By Theorem \ref{Thm:First_GeodesicBest}, the shortest the section of the path from $T_1$ to $X$ is $\sqrt{d^2_{\text{BHV}}(T_1,T_1^{\perp \mathcal{M}}) + d^2_{\text{BHV}}(T_1',X)}$, where $T_1'= \psi(T_1^{\perp \mathcal{M}},\mathcal{M})$. The shortest path from $X$ to $T_2$ completely contained in $\mathcal{T}^{\mathcal{L}'}$ is simply the geodesic with length $d_{\text{BHV}}(X,T_2)$. On the other hand, again by Theorem \ref{Thm:First_GeodesicBest}, there is a path from $T_1$ to $T_2$ where the pruning of $\mathcal{M}$ is at a tree in $\Lambda^{\mathcal{L}}(T_2)$. This path has length $\sqrt{d^2_{\text{BHV}}(T_1,T_1^{\perp \mathcal{M}}) + d^2_{\text{BHV}}(T_1',T_2)}$. 

	By the triangle inequality, $d_{\text{BHV}}(T'_1,X) + d_{\text{BHV}}(X,T_2) \geq d_{\text{BHV}}(T'_1, T_2)$, and thus 
	$$d^2_{\text{BHV}}(T'_1,X) + 2 d_{\text{BHV}}(T'_1,X)d_{\text{BHV}}(X,T_2) + d^2_{\text{BHV}}(X,T_2) \geq d^2_{\text{BHV}}(T'_1, T_2).$$ 
	It is clear that 
	$$\sqrt{d^2_{\text{BHV}}(T_1,T_1^{\perp \mathcal{M}}) + d^2_{\text{BHV}}(T_1',X)} \geq d_{\text{BHV}}(T'_1,X),$$ 
	and so
	$$d^2_{\text{BHV}}(T'_1,X) + 2 d_{\text{BHV}}(X,T_2)\sqrt{d^2_{\text{BHV}}(T_1,T_1^{\perp \mathcal{M}}) + d^2_{\text{BHV}}(T_1',X)} + d^2_{\text{BHV}}(X,T_2) \geq d^2_{\text{BHV}}(T'_1, T_2).$$
	Adding $d^2_{\text{BHV}}(T_1,T_1^{\perp \mathcal{M}})$ to both sides, we have that 
	$$\sqrt{d^2_{\text{BHV}}(T_1,T_1^{\perp \mathcal{M}}) + d^2_{\text{BHV}}(T_1',X)} + d_{\text{BHV}}(X,T_2) \geq \sqrt{d^2_{\text{BHV}}(T_1,T_1^{\perp \mathcal{M}}) + d^2_{\text{BHV}}(T_1',T_2)}.$$
\end{proof}

\begin{proof}[of Lemma \ref{lemma:OnePruningRatherThanTwo}]
	Consider any path from $T_1$ to $T_2$ where a pruning of $\mathcal{M}_1$ is performed in the sprouting space $\Lambda^{\mathcal{L}}(X)$ for some tree $X \in \mathcal{T}^{\mathcal{L}_1}$, where $\mathcal{L}_1 = \mathcal{L}\setminus \mathcal{M}_1$ and then the pruning of $\mathcal{M}_2$ is performed in some other tree $Y \in \mathcal{T}^{\mathcal{L}_1}$. Assume $Y \in \Lambda^{\mathcal{L}_1}(T_2)$, so that the path from $X$ to $T_2$ is as short as possible (Lemma \ref{lemma:DirectPruning}). Lemma \ref{lemma:DirectPruning} also implies that going from $T_1$ to $Y$ by performing the pruning of $\mathcal{M}_1$ at a tree $Y^{\uparrow} \in \Lambda^{\mathcal{L}}(Y)$ produces a shorter path. 
    This implies $\mathcal{M}$ is prunable from $Y^{\uparrow}$ and we can conclude $\psi(Y^{\uparrow},\mathcal{M}) = \psi(Y, \mathcal{M}_2) = T_2$. 
\end{proof}

\begin{proof}[of Theorem \ref{Thm:shorterPathsBelow}] Any path of this form will pass through a tree $X \in \mathcal{T}^{\mathcal{L}'}$. Lemmas \ref{lemma:DirectPruning} and  \ref{lemma:OnePruningRatherThanTwo} imply that the shortest paths from $T_i$ to $X$ (for $i= 1, 2$) that exclusively perform leaf prunings are given by performing the pruning of $\mathcal{M}_i$ at trees in the sprouting space  $\Lambda^{\mathcal{L}_i}(X)$. The length of these paths are $\sqrt{d^2_{\text{BHV}}(T_i, T_i^{\perp \mathcal{M}_i}) + d^2_{\text{BHV}}(T_i',X)}$. Thus, the shortest paths from $T_1$ to $T_2$, going strictly to lower levels, passing through $X$ and then going strictly through higher levels to $T_2$ are of length
	\begin{equation*}
		\sqrt{d^2_{\text{BHV}}(T_1, T_1^{\perp \mathcal{M}_1}) + d^2_{\text{BHV}}(T_1',X)} + \sqrt{d^2_{\text{BHV}}(X,T_2') + d^2_{\text{BHV}}(T_2^{\perp \mathcal{M}_2},T_2)}. 
	\end{equation*}
The values $d^2_{\text{BHV}}(T_1, T_1^{\perp \mathcal{M}_1}) $ and $d^2_{\text{BHV}}(T_2^{\perp \mathcal{M}_2},T_2)$ are independent from the choice of $X \in \mathcal{T}^{\mathcal{L}'}$. For simplicity, denote $\rho_i = d_{\text{BHV}}(T_i, T_i^{\perp \mathcal{M}_i})$. Since $T_1', X$ and $T_2'$ are all in the same BHV space, $d_{\text{BHV}}(T_1',X) + d_{\text{BHV}}(X,T_2') \geq d_{\text{BHV}}(T_1',T_2')$. Let $y_1 = d_{\text{BHV}}(T_1',X)$, $y_2 = d_{\text{BHV}}(X,T_2')$ and $d = d_{\text{BHV}}(T_1',T_2')$. The problem of finding $X$ that produces the shortest path between $T_1$ and $T_2$ is now equivalent to finding values $y_1$ and $y_2$ that minimize $\sqrt{\rho_1^2 + y_1^2} + \sqrt{\rho_2^2 + y_2^2}$ subject to $y_1 + y_2 \geq d$. Consider a geometrical approach by considering points $\mathbf{a} = (z_1,\rho_1)$ and $\mathbf{b} = (z_2,-\rho_2)$ in $\mathbb{R}^2$, where $|z_1-z_2| = d$. Consider the piecewise linear path from $\mathbf{a}$ to $\mathbf{b}$ crossing the x-axis at a point $\mathbf{x} = (x,0)$. The length of this path is given by $f(x) = \sqrt{\rho_1^2 + (z_1 - x)^2} + \sqrt{\rho_2^2 + (x-z_2)^2}$. But the shortest of these paths is the direct line from $\mathbf{a}$ to $\mathbf{b}$, that crosses the x-axis at $\left( \frac{\rho_1 z_2 + \rho_2 z_1}{\rho_1 + \rho_2},0\right)$ (see Figure \ref{fig:LittleGeometry}). This implies the minimum of $f(x)$ is reached by $x^* = \frac{\rho_1 z_2 + \rho_2 z_1}{\rho_1 + \rho_2}$, where $|z_i - x^{*}| = \frac{\rho_i}{\rho_1 + \rho_2} |z_1 - z_2|$ for $i=1, 2$, and the minimum value is $f(x^*) = \sqrt{(\rho_1 + \rho_2)^2 + d^2}$. 
Equivalently, the value $\sqrt{\rho_1^2 + y_1^2} + \sqrt{\rho_2^2 + y_2^2}$ is lower-bounded by $\sqrt{(\rho_1 + \rho_2)^2 + d^2}$ and it is achieved when $y_1 = \frac{\rho_1}{\rho_1 + \rho_2} d$ and $y_2 = \frac{\rho_2}{\rho_1 + \rho_2} d$. This, in turn, is reached on the tree $X^{*}$ on the geodesic from $T'_1$ to $T'_2$, at a distance $\frac{\rho_1}{\rho_1 + \rho_2} d_{\text{BHV}}(T'_1, T'_2)$ from $T'_1$. 

\begin{figure}
    \centering
    	\includegraphics[width = 0.8\textwidth]{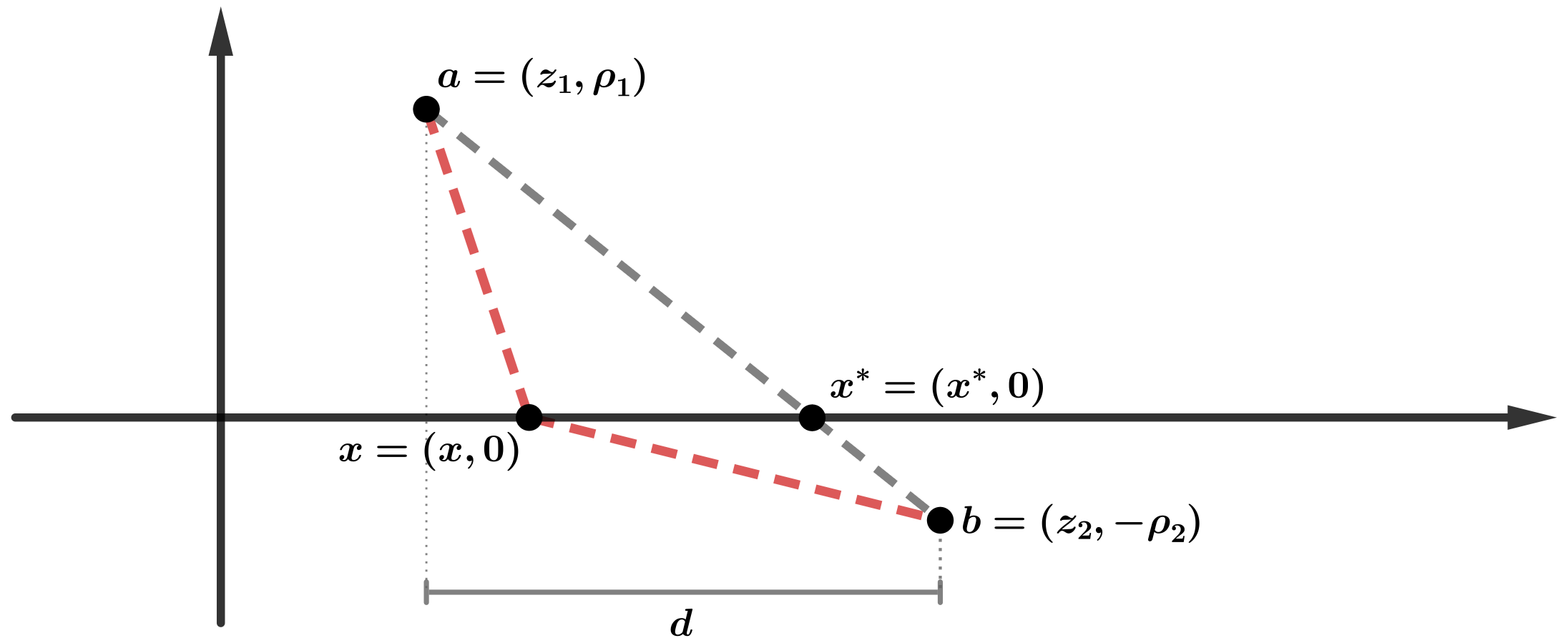}
    \caption{A two-dimensional representation of the minimizing function $f(x) = \sqrt{\rho_1^2 + (z_1 - x)^2} + \sqrt{\rho_2^2 + (z_2 - x)^2}$. The length of the path from $\mathbf{a} = (z_1,\rho_1)$ to $\mathbf{b} = (z_2, \rho_2)$ passing through $(0,x)$ (dashed red) coincides with $f(x)$, and any path is longer than the direct segment (dashed gray).}
    \label{fig:LittleGeometry}
\end{figure}
\end{proof}

\begin{proof}[of Lemma \ref{lemma:LiftingGeodesics}]
	This lemma follows from Theorem \ref{Thm:First_GeodesicBest}. To see this, take $T_1 \in \Lambda^{\Le}(T'_1)$ to be any regraft of $\Me = \Le \setminus \Le'$ onto $T'_1$. By Theorem \ref{Thm:First_GeodesicBest}, there exists a tree $T_2 \in \Lambda^{\Le}(T'_2)$ such that 
	$$d_{\text{BHV}}(T_1, T_2) = \sqrt{||P^{\downarrow \Me}(T_1)||^2 + d_{\text{BHV}}^2 (T'_1,T'_2)},$$
	but since $T_1 \in \Ze^{\Le}_{\Me}$ and thus $||P^{\downarrow \Me}(T_1)||^2 = 0$, and so $d_{\text{BHV}}(T_1, T_2) = d_{\text{BHV}}(T'_1, T'_2).$
\end{proof}

\begin{proof}[of Theorem \ref{Thm:4}]
	Any path constructed using Theorem \ref{Thm:shorterPathsBelow} and Lemmas \ref{lemma:DirectPruning} and \ref{lemma:OnePruningRatherThanTwo} consists of segments fully contained in some BHV level going from tree $t^{\uparrow}_{i}$ to $t^{\downarrow}_{i+1}$. By Lemma \ref{lemma:LiftingGeodesics}, each of these segments can be lifted to an equal length segment in $\Te^{\Ne}$. 
\end{proof}

\section{Supplementary Material: Proof that the Towering distance is a metric}

We prove that \textit{non-deformed trees} are correctly differentiated by the distance function in the Towering space. That is, if $T_1$ and $T_2$ are distinct trees with all edges of strictly positive length, then $d(T_1,T_2) > 0$.

\begin{lemma}
    \label{lemma:LeafOperationOne}
    Consider a tree $T \in \Ze_{\Me}^{\Le}$ and take $T' = \psi(T,\Me)$. Taking a proper subset $\Ke \subset \Le'$ of the leaves $\Le'= \Le \setminus \Me$ of $T'$, then $||P^{\downarrow \Ke}(T')||^2 = ||P^{\downarrow \Ke \cup \Me}(T)||^2$ and $\psi(T^{\perp \Ke \cup \Me},\Ke \cup \Me) = \psi(T^{'\perp \Ke}, \Ke)$.
\end{lemma}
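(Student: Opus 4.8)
The plan is to reduce the whole statement to two ingredients: the compositional behaviour of the TDR map, and the $L^2$ bookkeeping built into Definition~\ref{Def:LeafPrune}. Write $\Le'' = \Le \setminus (\Ke \cup \Me) = \Le' \setminus \Ke$. First I would record that for nested leaf sets $\Le'' \subseteq \Le' \subseteq \Le$ the TDR map composes, $\Psi_{\Le''}\circ\Psi_{\Le'} = \Psi_{\Le''}$; this is an immediate set computation from Definition~\ref{Def:TDRmap_splits}, using $\Le'' \cap (\Le' \cap \mathcal{G}) = \Le'' \cap \mathcal{G}$ and $\Le'' \setminus (\Le' \cap \mathcal{G}) = \Le'' \setminus \mathcal{G}$ for $\Le''\subseteq\Le'$. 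Next, since $T \in \Ze_\Me^\Le$, the pruning $T' = \psi(T,\Me)$ is benign in a precise sense: external edges to leaves of $\Me$ have length zero, each external edge to a leaf of $\Le'$ retains its length in $T'$, each internal edge of $T$ maps under $\Psi_{\Le'}$ to an internal edge of $T'$, and the $\Psi_{\Le'}$-preimage of any internal edge of $T'$ consists solely of internal edges of $T$. These remarks isolate exactly which edges carry which lengths.

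For the norm identity, I would split both $P^{\downarrow\Ke}(T')$ and $P^{\downarrow\Ke\cup\Me}(T)$ into external and internal parts. The external edges to $\Ke$ match one-to-one with identical lengths, while the external edges to $\Me$ occurring in $P^{\downarrow\Ke\cup\Me}(T)$ are zero-length and contribute nothing to the norm. For the internal part, the composition identity shows that an internal edge $s$ of $T$ satisfies the defining condition of $P^{\downarrow\Ke\cup\Me}(T)$ (that $\Psi_{\Le''}(s)$ is empty or external) if and only if its image $s' = \Psi_{\Le'}(s)$ satisfies the defining condition of $P^{\downarrow\Ke}(T')$, since $\Psi_{\Le''}(s) = \Psi_{\Le''}(s')$. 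Hence the internal part of $P^{\downarrow\Ke\cup\Me}(T)$ is precisely the $\Psi_{\Le'}$-preimage of the internal part of $P^{\downarrow\Ke}(T')$, and summing the merging identity $|s'|_{T'}^2 = \sum_{\Psi_{\Le'}(s)=s'}|s|_T^2$ over those edges gives $\|P^{\downarrow\Ke}(T')\|^2 = \|P^{\downarrow\Ke\cup\Me}(T)\|^2$.

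For the equality of the pruned trees, I would use Lemma~\ref{lemma:ProjectionCharacterize} to write $T^{\perp\Ke\cup\Me}$ and $T'^{\perp\Ke}$ explicitly (surviving internal edges kept at their original lengths, external edges in the relevant $P^{\downarrow}$ set zeroed), then apply $\psi$ to each. The topologies agree because the internal edges of $\psi(T^{\perp\Ke\cup\Me},\Ke\cup\Me)$ are $\Psi_{\Le''}(\Se(T)\setminus P^{\downarrow\Ke\cup\Me}(T))$ and those of $\psi(T'^{\perp\Ke},\Ke)$ are $\Psi_{\Le''}(\Se(T')\setminus P^{\downarrow\Ke}(T'))$, and $\Se(T') = \Psi_{\Le'}(\Se(T))$ together with $\Psi_{\Le''}=\Psi_{\Le''}\circ\Psi_{\Le'}$ identifies the two images. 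For lengths, an internal split $p$ of $\Le''$ receives, on the $\psi(T^{\perp\Ke\cup\Me},\Ke\cup\Me)$ side, squared length $\sum_{\Psi_{\Le''}(s)=p}|s|_T^2$, while on the $\psi(T'^{\perp\Ke},\Ke)$ side it receives $\sum_{\Psi_{\Le''}(s')=p}|s'|_{T'}^2$; expanding the latter with the merging identity and the composition rule collapses it to $\sum_{\Psi_{\Le''}(s)=p}|s|_T^2$. External splits of $\Le''$ are simpler: since both projected trees lie in their trimmable subspaces, all their internal edges map to internal splits, so the only preimage of an external split is the single surviving external edge to the leaf $\ell$, whose length is unchanged by the first pruning; both sides therefore return $|e_\ell|_T$.

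The step I expect to require the most care is the internal-length computation in the second part, namely the bookkeeping when several internal edges of $T$ collapse onto a single edge of the final tree: I must verify that the nested sum over $s'$ and then over $s$ with $\Psi_{\Le'}(s)=s'$ repartitions the set $\{s : \Psi_{\Le''}(s)=p\}$ exactly, with no edge double-counted or omitted, and that no external edge leaks into the preimage of an internal split. This is precisely where the prunability of $\Me$ from $T$, and of $\Ke\cup\Me$ (respectively $\Ke$) from the projected trees, is essential, and it is the only place the argument could break if a projected tree failed to lie in the appropriate trimmable subspace.
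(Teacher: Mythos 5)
Your proposal is correct and follows essentially the same route as the paper's proof: both rest on establishing the TDR-map correspondence between $P^{\downarrow \Ke\cup\Me}(T)$ and $P^{\downarrow \Ke}(T')$ (external and internal cases treated separately, with prunability of $\Me$ guaranteeing that external edges to $\Me$ are zero and that internal edges map to internal edges) and then invoking the $L^2$ merging rule. The only cosmetic difference is in the second claim, where the paper factors through the intermediate identity $\psi(T^{\perp\Ke\cup\Me},\Me)=T'^{\perp\Ke}$ and then prunes $\Ke$, whereas you verify the repartition of $\Psi_{\Le''}$-preimages directly; these amount to the same bookkeeping.
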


\begin{proof}
    By the definition of $T'$, we know that $\Se(T') = \Psi_{\Le'}(\Se(T)) = \{\Psi_{\Le'}(s) |  s\in \Se(T)\}$. We start by showing that there is a natural partition between the edges that prevent $\Ke$ from being prunable from $T$ and $T'$, and the edges that do not prevent these leaves from being prunable. That is, we prove $\Psi_{\Le'} (P^{\downarrow \Ke \cup \Me}(T)) = P^{\downarrow \Ke}(T')$ and $\Psi^{-1}_{\Le'} (P^{\downarrow \Ke}(T')) \cap \Pe(T) = P^{\downarrow \Ke \cup \Me}(T)$. 

    Consider a split $s = \mathcal{G} \big| \Le \setminus \mathcal{G} \in P^{\downarrow \Ke\cup \Me} (T)$. There are two cases: 
    \begin{enumerate}
        \item $s = \{\ell\} \big| \Le \setminus \{\ell\}$ for $\ell \in \Ke \cup \Me$ and with positive length. Since $T \in \Ze^{\Le}_{\Me}$, all external edges to leaves in $\Me$ are of size zero. This implies $\ell \in \Ke$. Thus, $\Psi_{\Le'}(s) = \{\ell\} \big| \Le' \setminus \{\ell\}$ and it belongs to $P^{\downarrow \Ke}(T')$. 
        \item $s$ is an internal edge and we can assume w.l.o.g. $|\mathcal{G}\setminus \left(\Ke \cup \Me\right)| < 2$. But since $\Se(T') = \Psi_{\Le'}(\Se(T))$, then we know $s' = \Psi_{\Le'}(s) = \mathcal{G}\setminus \Me \big| \Le' \setminus \left(\mathcal{G}\setminus \Me\right) \in \Se(T')$. We note that $\left(\mathcal{G}\setminus \Me\right) \setminus \Ke = \mathcal{G}\setminus \left(\Ke \cup \Me\right)$, and so conclude $s' \in P^{\downarrow \Ke}(T')$. 
    \end{enumerate}

    Similarly, take $s = \mathcal{G} \big| \Le \setminus \mathcal{G}$ such that $s'= \Psi_{\Le'}(s) = \mathcal{G}\setminus \Me \big| \Le' \setminus \left(\mathcal{G}\setminus \Me\right) \in P^{\downarrow \Ke}(T')$. Assume w.l.o.g that $|\left(\mathcal{G}\setminus \Me\right)\setminus \Ke| < 2$, which implies $s \in P^{\downarrow \Ke \cup \Me}(T)$. 

    Since we have this correspondence between the sets $P^{\downarrow \Ke \cup \Me}(T)$ and $P^{\downarrow \Ke}(T')$ through the TDR map, we conclude $||P^{\downarrow \Ke}(T')||^2 = ||P^{\downarrow \Ke \cup \Me}(T)||^2$. 

    Moreover, since $\Se(T^{\perp \Ke \cup \Me}) = \left\{s \in \Se(T) \mid s \notin P^{\downarrow \Ke \cup \Me}(T) \right\}$ and $\Se(T^{'\perp \Ke}) = \left\{s \in \Se(T') \mid s \notin P^{\downarrow \Ke}(T') \right\}$, all splits in $T$ mapping to edges in $T^{'\perp \Ke}$ under the TDR map $\Psi_{\Le'}$ are still preserved when projecting onto $\Ze^{\Le}_{\Ke \cup \Me}$ to give the tree $T^{\perp \Ke \cup \Me}$ with identical edge lengths to $T'$. This implies $\psi(T^{\perp \Ke \cup \Me}, \Me) = T^{'\perp \Ke}$, and in turn this implies $\psi(T^{\perp \Ke \cup \Me},\Ke \cup \Me) = \psi(T^{'\perp \Ke}, \Ke)$. 
\end{proof}

\begin{lemma}
    \label{lemma:EquivalenceSprouting}
    Consider a tree $T \in \Te^{\Le}$ with all external edges strictly positive. If $T' \in \mathcal{L'}$ satisfies $T' \simeq T$, then $\Le \subset \Le'$, $\mathcal{K} = \Le'\setminus \Le$ is prunable from $T'$, and $\psi(T', \Ke) = T$.
\end{lemma}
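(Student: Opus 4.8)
The plan is to exploit the symmetry of $\simeq$ and argue along an explicit sequence of leaf operations. A pruning $\psi(A,\Me)=B$ exhibits $A$ as a regrafting of $B$ (that is, $A\in\Lambda^{\Le(A)}(B)$) and conversely, so $\simeq$ is symmetric; I may therefore fix a finite chain of valid prunings and regraftings $T=S_0\to S_1\to\cdots\to S_m=T'$ \emph{starting} at the tree $T$ with strictly positive external edges, writing $\Le=\Le(T)$ and $\Le'=\Le(T')$. I would prove two statements jointly by induction on $j$: (a) every leaf $\ell\in\Le$ survives in $S_j$ and retains a strictly positive external edge; and (b) the extra leaves $\Me_j:=\Le(S_j)\setminus\Le$ are prunable from $S_j$ with $\psi(S_j,\Me_j)=T$. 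The base case $j=0$ is immediate since $\Me_0=\emptyset$, and evaluating at $j=m$ gives $\Le\subseteq\Le'$, $\Ke=\Le'\setminus\Le=\Me_m$ prunable from $T'$, and $\psi(T',\Ke)=T$, which is the claim.

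The crux, and the step I expect to be the main obstacle, is that a single valid leaf operation preserves the external-edge length of any leaf present in both endpoints, and that this rests essentially on the restriction in Definition \ref{Def:prunable_All}. For a pruning $S\to S'=\psi(S,\Pe)$ with $\ell\notin\Pe$, the external split $\{\ell\}\,|\,\Le(S)\setminus\{\ell\}$ maps under $\Psi_{\Le(S')}$ to the external split of $\ell$ in $S'$, so by the $L^2$ length rule the external edge of $\ell$ in $S'$ is at least its positive length in $S$. The delicate case is a regrafting $S\to S'$ adding leaves $\Ne$: here $\psi(S',\Ne)=S$, so the external length of $\ell$ in $S$ is the $L^2$-combination of the lengths in $S'$ of all splits mapping to $\{\ell\}\,|\,\Le(S)\setminus\{\ell\}$, namely the splits $\{\ell\}\cup\mathcal{A}\,|\,\cdots$ with $\mathcal{A}\subseteq\Ne$. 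If such a split with $\mathcal{A}\neq\emptyset$ were an \emph{internal} edge of $S'$, then under $\Psi_{\Le(S)}$ it would map to the \emph{external} split $\{\ell\}\,|\,\cdots$, contradicting the prunability of $\Ne$ (internal edges must map to internal edges). Hence the external edge of $\ell$ is the unique contributing split, and its length is exactly preserved. This is precisely the mechanism that blocks the leaf-swapping pathology of Figure \ref{fig:SeveralJumps}: a positive external edge can never be absorbed into an internal edge. Consequently, a leaf of $\Le$, having positive external edge at each $S_j$ by induction, can never be pruned (pruning demands a zero external edge), which proves (a) and in particular $\Le\subseteq\Le(S_j)$ for all $j$.

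For (b), the composition of prunings is controlled by Lemma \ref{lemma:LeafOperationOne}. In a regrafting step adding $\Ne_j$, I would apply the lemma to $S_{j+1}\in\Ze^{\Le(S_{j+1})}_{\Ne_j}$ with pruned set $\Ne_j$ (so $\psi(S_{j+1},\Ne_j)=S_j$) and residual set $\Ke=\Me_j$: since $\Me_j$ is prunable from $S_j$ we have $\|P^{\downarrow\Me_j}(S_j)\|=0$, so the lemma yields $\|P^{\downarrow(\Me_j\cup\Ne_j)}(S_{j+1})\|=0$, i.e. $\Me_{j+1}=\Me_j\sqcup\Ne_j$ is prunable from $S_{j+1}$, and its second identity gives $\psi(S_{j+1},\Me_{j+1})=\psi(S_j,\Me_j)=T$. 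In a pruning step removing $\Pe_j$, part (a) forces $\Pe_j\subseteq\Me_j$, so $\Me_{j+1}\cup\Pe_j=\Me_j$; applying Lemma \ref{lemma:LeafOperationOne} to $S_j\in\Ze^{\Le(S_j)}_{\Pe_j}$ with residual set $\Ke=\Me_{j+1}$ gives $\|P^{\downarrow\Me_{j+1}}(S_{j+1})\|=\|P^{\downarrow\Me_j}(S_j)\|=0$ and $\psi(S_{j+1},\Me_{j+1})=\psi(S_j,\Me_j)=T$. This establishes the inductive step in both cases and completes the argument.

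Thus the difficulty is concentrated entirely in the regrafting direction of the preservation argument, where Definition \ref{Def:prunable_All} (internal splits mapping to internal splits) is exactly what rules out an internal edge collapsing onto an external one; once that is in hand, the remainder is bookkeeping layered on top of Lemma \ref{lemma:LeafOperationOne}.
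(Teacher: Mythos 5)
Your proof is correct and follows essentially the same route as the paper's: induction along the chain of leaf operations, using preservation of positive external edge lengths to show that $\Le$ survives in every intermediate tree, and Lemma \ref{lemma:LeafOperationOne} to compose prunings across both pruning and regrafting steps. The only substantive difference is that you explicitly justify why a leaf operation cannot alter a positive external edge (via the internal-splits-map-to-internal-splits restriction of Definition \ref{Def:prunable_All}), a step the paper's proof simply asserts.
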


\begin{proof}
    Consider a sequence of trees connecting $T$ and $T'$, $T  \simeq t_1 \simeq \hdots \simeq t_k \simeq T'$ where each tree in the sequence is the result of a leaf operation in the previous tree, meaning $t_{i} = \psi(t_{i-1}, \Me_{i})$ or $t_{i-1} = \psi(t_{i}, \Me_i)$ for some set of leaves $\Me_i$ for all $i$. Note that leaf operations cannot modify the length of non-zero external edges, which implies that for each tree in the sequence the external edges to the leaves in $\Le$ are of positive length (they are same length as in $T$) and are not prunable. Thus, $\Le$ is part of the leaves for each tree in the sequence, including $T'$. We proceed by induction on $k$ to prove all other leaves are prunable and that their leaf prune returns $T$.

    For $k = 1$, we have either $T' = \psi(T, \Me_1)$ or $T = \psi(T', \Me_1)$. Since all external edges in $T$ are strictly positive, none of the leaves in $\Le$ are prunable from $T$. Thus $T' = \psi(T, \Me_1)$ is not possible and we conclude $\Me_1 = \Le'\setminus \Le$ is prunable from $T'$ and $T = \psi(T', \Me_1)$. 

    Now, assume the lemma is true for any tree connected through a sequence of size $k$. Thus, for $t_{k} \in \Te^{\Le_{k}}$,  $\Le \subset \Le_{k}$, $\Ke_{k} = \Le_{k}\setminus \Le$ is prunable from $t_{k}$ and $T = \psi(t_{k},\Ke_{k})$. There are two options: 

    \begin{enumerate}
        \item If $T' = \psi(t_{k}, \Me_k)$, since none of the leaves in $\Le$ are prunable from $t_{k}$, we know $\Me_{k} \subset \Le_{k}\setminus\Le$ and this implies $\Ke_{k} = \Ke \cup \Me_{k}$. Moreover, since $\Ke_{k}$ is prunable from $t_{k}$, $\Ke = \Ke_{k} \setminus \Me_{k}$ is prunable from $T'$, and by Lemma \ref{lemma:LeafOperationOne} we may conclude $\psi(T',\Ke) = \psi(t_{k},\Ke_{k}) = T$;

        \item If $t_{k} = \psi(T', \Me_k)$ then $\Ke = \Ke_{k} \cup \Me_{k}$, which implies $\Ke$ is prunable from $T'$ and $\psi(T',\Ke) = \psi(\psi(T',\Me_k),\Ke_{k}) = \psi(t_{k},\Ke_k) = T$.
    \end{enumerate}
\end{proof}

In other words, $T$ can serve as a ``representative'' of its equivalence class in the towering space, and we can say this equivalence class $[T]$ is the union of all sprouting spaces of $T$ in BHV level above $\Te^{\Le}$; i.e. $[T] = \bigcup_{\mathcal{L} \subset \Le' \subseteq \mathcal{N}} \Lambda^{\Le'}(T)$. For the purpose of the following results, we write $d_{\text{BHV}}(T_1, [T_2]) = \inf_{t \simeq T_2} d_{\text{BHV}}(T_1, t)$. Note this value will be infinite if there is no element of $[T_2]$ in the same BHV space as $T_1$. We use the notation $d_{\text{BHV}}([T_1],[T_2]) = \inf_{t_1 \simeq T_1, t_2 \simeq T_2} d_{BHV}(t_1,t_2)$.

\begin{lemma}
    \label{lemma:UnionOnEquivalence}
    Fix $T \in \Te^{\Le}$ with strictly positive external edge lengths and any trees satisfying $X_1 \simeq X_2$. Then there exists $\varepsilon >0$ such that $d_{\text{BHV}}([X_1],[T]) < \varepsilon$ implies $d_{\text{BHV}}(X_1, [T]) = d_{\text{BHV}}(X_2, [T])$.
\end{lemma}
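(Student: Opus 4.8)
The plan is to take $\varepsilon$ equal to the smallest external edge length of $T$ and to exploit the fact that every tree equivalent to $T$ carries \emph{exactly} the same external edges on the leaves of $\Le$. Set $\delta = \min_{\ell \in \Le}|e_{\ell}|_{T}$, where $e_{\ell}$ is the external edge to $\ell$; this is positive because $T$ has strictly positive external edges. Take $\varepsilon = \delta$. First I would record the structural fact, implicit in Lemma \ref{lemma:EquivalenceSprouting} and the representative description of $[T]$, that any $V \simeq T$ lying in a space $\Te^{\Le^{*}}$ satisfies $\Le \subseteq \Le^{*}$ and $V \in \Lambda^{\Le^{*}}(T)$. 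Moreover, since prunability forbids any internal edge of $V$ from mapping to an external split of a surviving leaf of $\Le$, the only edge of $V$ mapping to $e_{\ell}$ (for $\ell \in \Le$) is $e_{\ell}$ itself, so $|e_{\ell}|_{V} = |e_{\ell}|_{T}$. Hence every tree in $[T]$ has external $\Le$-edges of length at least $\delta$.

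Next, assuming $d_{\text{BHV}}([X_1],[T]) < \varepsilon$, I would pick $W \simeq X_1$ and $V \simeq T$ with $d_{\text{BHV}}(W,V) < \delta$ (possible since the infimum is strictly below $\delta$, and finiteness of the BHV distance forces $W$ and $V$ into a common BHV space). Because $d_{\text{BHV}}$ bounds each external-edge coordinate difference through the terms $\sum_{s\in K}(|s|_{T_1}-|s|_{T_2})^2$ of \eqref{eq:FirstDistance}, the external edge of every $\ell \in \Le$ in $W$ has length at least $\delta - d_{\text{BHV}}(W,V) > 0$. The key observation is then that leaf operations preserve the external length of every leaf they retain: pruning a prunable set cannot fold an internal edge into the external edge of a surviving $\Le$-leaf, and regrafting is its inverse. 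Thus the external length of each $\ell \in \Le$ is an invariant of the whole class $[X_1]$; since a leaf can only be pruned when its external edge has length zero, no leaf of $\Le$ is ever pruned (nor regrafted) along any chain realizing $X_1 \simeq X_2$. In particular $\Le \subseteq \Le(X_1) \cap \Le(X_2)$, so $[T]\cap\Te^{\Le(X_i)} = \Lambda^{\Le(X_i)}(T)$ is nonempty and both $d_{\text{BHV}}(X_1,[T])$ and $d_{\text{BHV}}(X_2,[T])$ are finite, with every leaf operation connecting $X_1$ to $X_2$ acting only on leaves outside $\Le$.

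It then suffices to prove the equality across a single leaf operation and to chain it along $X_1 = Y_0 \simeq \cdots \simeq Y_n = X_2$ (the same $\varepsilon$ works, since all $Y_j$ share the class $[X_1]$). For a step $Y_{j+1} = \psi(Y_j,\Me_0)$ with $\Me_0 \cap \Le = \emptyset$, Theorem \ref{Thm:First_GeodesicBest} gives $d_{\text{BHV}}(Y_j,[T]) = \sqrt{\|P^{\downarrow \Me^{(j)}}(Y_j)\|^2 + d_{\text{BHV}}^2(\psi(Y_j^{\perp \Me^{(j)}},\Me^{(j)}),T)}$ with $\Me^{(j)} = \Le(Y_j)\setminus\Le$, and similarly for $Y_{j+1}$. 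Applying Lemma \ref{lemma:LeafOperationOne} with $\Ke = \Le(Y_{j+1})\setminus\Le = \Me^{(j)}\setminus\Me_0$ (so that $\Ke \cup \Me_0 = \Me^{(j)}$) equates both the norm terms, $\|P^{\downarrow \Ke}(Y_{j+1})\|^2 = \|P^{\downarrow \Me^{(j)}}(Y_j)\|^2$, and the underlying prunings, $\psi(Y_{j+1}^{\perp \Ke},\Ke) = \psi(Y_j^{\perp \Me^{(j)}},\Me^{(j)})$, whence $d_{\text{BHV}}(Y_j,[T]) = d_{\text{BHV}}(Y_{j+1},[T])$. I expect the main obstacle to be the second paragraph: ruling out the degenerate scenario in which the chain prunes a leaf of $\Le$ — which would send exactly one of the two distances to infinity and break the claimed equality — and it is precisely there that the threshold $\varepsilon = \delta$, together with the invariance of external-edge lengths under leaf operations, is indispensable.
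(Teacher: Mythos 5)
Your proposal is correct and follows essentially the same route as the paper's proof: the same threshold based on the minimum external edge length of $T$, the same invariance of retained external-edge lengths to force $\Le$ into the leaf sets of every tree in $[X_1]$, the same reduction via Lemma \ref{lemma:EquivalenceSprouting} and Theorem \ref{Thm:First_GeodesicBest} to the closed-form expression for $d_{\text{BHV}}(X_i,[T])$, and the same chaining of Lemma \ref{lemma:LeafOperationOne} along the sequence of leaf operations. The only cosmetic difference is that the paper takes $\varepsilon$ to be half the minimum external length, whereas your $\varepsilon=\delta$ also suffices because the infimum hypothesis is strict.
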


\begin{proof}
    Let $\varepsilon = \frac{1}{2}\min \{|s|_{T} : s \text{ is a external edge of $T$}\}$. Since the lengths of non-zero external edges in trees cannot be modified by leaf operations, any tree $T' \in \Lambda^{\Le'}(T)$ has the same external edge lengths as $T$. For any such tree, $d_{\text{BHV}}(X',T') < \varepsilon$ would imply that $X'$ has strictly positive lengths for external edges of leaves in $\Le$ as well. This also implies that any tree $X \simeq X'$ would also have these leaves, and with strictly positive lengths. 

    Consider $X_1 \in \Te^{\Le_1}$ and $X_2 \in \Te^{\Le_2}$ with $\Le \subseteq \Le_1, \Le_2$. From Lemma \ref{lemma:EquivalenceSprouting} we have $d_{\text{BHV}}(X_i,[T]) = d_{\text{BHV}}(X_i,\Lambda^{\Le_i}(T))$, which by Theorem \ref{Thm:First_GeodesicBest} is 
    $$d_{\text{BHV}}(X_i,[T]) = \sqrt{||P^{\downarrow \Me_i}(X_i)||^2 + d_{\text{BHV}}^2 (X'_i,T)},$$
    where $\Me_i = \Le_i \setminus \Le$ and $X'_i = \psi(X_i^{\perp \Me_i},\Me_i)$. Since $X_1 \simeq X_2$, there is a sequence of trees $X_1 = t_0 \simeq t_1 \simeq \hdots \simeq t_{k-1} \simeq t_{k} = X_2$, where each pair $t_{i-1}$ and $t_i$ is connected by a direct leaf operation such that either $t_{i-1} = \psi(t_{i},\Ke_{i})$ or $t_{i} = \psi(t_{i-1},\Ke_{i})$ for some set of leaves $\Ke_{i}$.  

    Since all trees in the sequence are part of the equivalence class of $X_1$ and $X_2$, all have $\Le$ in their leaf sets, and these leaves have strictly positive external lengths. For each tree in the sequence, say $t_i \in \mathcal{T}^{\Le'_i}$ and take $\Me'_i = \Le'_i \setminus \Le$. For any pair $t_{i-1}$ and $t_i$ assume w.l.o.g $t_{i-1} = \psi(t_{i},\Ke_{i})$. In this case $\Me'_{i} = \Ke_i \cup \Me'_{i-1}$ and so by Lemma \ref{lemma:LeafOperationOne} we have $||P^{\downarrow \Me'_{i-1}}(t_{i-1})||^2 = ||P^{\downarrow \Me'_{i}}(t_i)||^2$ and $\psi(t^{\perp \Me'_{i-1}}_{i-1}, \Me'_{i-1}) = \psi(t^{\perp \Me'_{i}}_{i}, \Me'_{i})$. 

    Since this is true for each pair along the sequence between $X_1$ and $X_2$, we have that $||P^{\downarrow \Me_1}(X_1)||^2 = ||P^{\downarrow \Me_2}(X_2)||^2$ and $X'_1 = X'_2$, which implies $d_{\text{BHV}}(X_1, [T]) = d_{\text{BHV}}(X_2, [T])$.
\end{proof}

\begin{theorem}
Given a tree $T_1$ with all external edges of strictly positive length and $T_2$ such that $d(T_1,T_2) = 0$, $T_1 \simeq T_2$.  
\end{theorem}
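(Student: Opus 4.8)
The plan is to deduce from $d(T_1,T_2)=0$ that the set-distance $d_{\text{BHV}}(T_2,[T_1])=0$, and then to read off $T_2\simeq T_1$ from the explicit formula for that distance supplied by Theorem \ref{Thm:First_GeodesicBest}. The central object is the function $g(X)=d_{\text{BHV}}(X,[T_1])$, the BHV distance from $X$ to the representatives of $[T_1]$ lying in $X$'s own BHV level (with $g(X)=\infty$ when no such representative exists), in the notation already introduced in this section. Because the distance to a fixed nonempty set in a metric space is $1$-Lipschitz, $g$ changes by at most $d_{\text{BHV}}(t,t')$ along any BHV geodesic segment from $t$ to $t'$ contained in a single BHV level; and by \eqref{eq:FirstDistance} only same-level segments can appear in a finite-length path, since $d^{*}$ is infinite across leaf sets.

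First I would fix $\varepsilon=\tfrac12\min\{|s|_{T_1}:s\in\mathcal H(T_1)\}$, the threshold produced by Lemma \ref{lemma:UnionOnEquivalence} with $T=T_1$. For any $\delta\in(0,\varepsilon)$, the infimum in \eqref{Eq:ToweringDistance} yields a path $T_1\simeq t_1$, $t'_i\simeq t_{i+1}$, $t'_k\simeq T_2$ with $\sum_i d_{\text{BHV}}(t_i,t'_i)<\delta$. Writing $S_i$ for the $i$-th partial sum, I would show by induction that $g(t_i),g(t'_i)\le S_i<\delta$. The base case is $g(t_1)=0$, since $t_1\in[T_1]$. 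Each BHV segment raises $g$ by at most $d_{\text{BHV}}(t_i,t'_i)$ by the Lipschitz estimate. At each junction $t'_i\simeq t_{i+1}$ I would invoke Lemma \ref{lemma:UnionOnEquivalence}: its hypothesis $d_{\text{BHV}}([t'_i],[T_1])<\varepsilon$ holds because $d_{\text{BHV}}([t'_i],[T_1])\le g(t'_i)\le S_i<\delta<\varepsilon$, so $g$ is preserved, $g(t_{i+1})=g(t'_i)$. Applying the same step to the final junction $t'_k\simeq T_2$ gives $g(T_2)=g(t'_k)\le S_k<\delta$.

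Since $g(T_2)$ is a fixed quantity while the above holds for every $\delta\in(0,\varepsilon)$, I conclude $d_{\text{BHV}}(T_2,[T_1])=g(T_2)=0$; in particular $g(T_2)$ is finite, so a representative of $[T_1]$ sits in $T_2$'s level, forcing $\Le(T_1)\subseteq\Le(T_2)$ by Lemma \ref{lemma:EquivalenceSprouting}. Writing $\Me=\Le(T_2)\setminus\Le(T_1)$, the computation of $d_{\text{BHV}}(\,\cdot\,,[T_1])$ from the proof of Lemma \ref{lemma:UnionOnEquivalence} (which rests on Theorem \ref{Thm:First_GeodesicBest}) gives
$$ 0 = d_{\text{BHV}}(T_2,[T_1]) = \sqrt{||P^{\downarrow \Me}(T_2)||^2 + d_{\text{BHV}}^2(\psi(T_2^{\perp \Me},\Me),T_1)}. $$
Both summands must vanish: $||P^{\downarrow \Me}(T_2)||=0$ says $\Me$ is prunable from $T_2$ (so $T_2^{\perp \Me}=T_2$), and $d_{\text{BHV}}(\psi(T_2,\Me),T_1)=0$ says $\psi(T_2,\Me)=T_1$ because $d_{\text{BHV}}$ is a genuine metric. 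Hence a single pruning carries $T_2$ to $T_1$, i.e. $T_2\simeq T_1$.

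I expect the main obstacle to be the junctions. The naive approach of chaining $d_{\text{BHV}}([\cdot],[\cdot])$ along the path fails, since this set-to-set infimum is not a metric and admits no triangle inequality. The remedy is to track the genuinely Lipschitz function $g$ instead, and to guarantee via the running bound $g\le S_i<\varepsilon$ that every junction remains inside the neighborhood where Lemma \ref{lemma:UnionOnEquivalence} certifies $g$ to be locally constant on equivalence classes. Keeping this bookkeeping consistent --- in particular, verifying that $g$ stays finite throughout so that the Lipschitz estimate is meaningful --- is the delicate part of the argument.
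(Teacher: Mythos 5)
Your proposal is correct and follows essentially the same route as the paper's proof: the same induction along the path, using Lemma \ref{lemma:UnionOnEquivalence} at each junction and the triangle inequality (your Lipschitz estimate) along each BHV segment, to drive $d_{\text{BHV}}(T_2,[T_1])$ to zero. Your endgame is slightly more explicit --- you extract $T_2\simeq T_1$ from the closed-form distance to the sprouting space via Theorem \ref{Thm:First_GeodesicBest}, whereas the paper appeals to closedness of $[T_1]=\bigcup_{\Le}\Lambda^{\Le}(T_1)$ --- but this is a cosmetic difference, and if anything your version spells out a step the paper leaves terse.
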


\begin{proof}
    Let $\varepsilon_1 > 0$ be as in Lemma \ref{lemma:UnionOnEquivalence} for $T_1$. Since $d(T_1,T_2) = 0$, we can find a finite sequence $t_1, t'_1, t_2, t'_2, \hdots, t_k, t'_k$ with $T_1 \simeq t_1, t'_i \simeq t_{i+1} \forall i=1,\hdots, k-1, t'_k \simeq T_2$ such that 

        $$\sum_{i=1}^{k} d_{\text{BHV}}(t_i,t'_i) < \varepsilon_1.$$
   We upper-bound $d_{\text{BHV}}([T_1],[T_2])$ by $\sum_{i=1}^{k} d_{\text{BHV}}(t_i,t'_i)$ by proving $d_{\text{BHV}}(t'_k, [T_1]) \leq \sum_{i=1}^{k} d_{\text{BHV}}(t_i,t'_i)$. We proceed by induction. 

    Since $t_1 \in [T_1]$, we readily have $d_{\text{BHV}}(t'_1, [T_1]) \leq d_{\text{BHV}}(t_1,t'_1)$. Now assume that for some value $j < k$ it is true $d_{\text{BHV}}(t'_j, [T_1]) \leq \sum_{i=1}^{j} d_{\text{BHV}}(t_i,t'_i) < \varepsilon_1$. By Lemma \ref{lemma:UnionOnEquivalence} we have $d_{\text{BHV}}(t_{j+1},[T_1]) = d_{\text{BHV}}(t'_{j},[T_1])$, and so 

    \begin{equation*}
        \begin{split}
            d_{\text{BHV}}(t'_{j+1}, [T_1]) &\leq d_{\text{BHV}}(t_{j+1}, t'_{j+1}) + d_{\text{BHV}}(t_{j+1},[T_1]) \\
            &= d_{\text{BHV}}(t_{j+1}, t'_{j+1}) + d_{\text{BHV}}(t'_{j},[T_1]) \\ &\leq \sum_{i=1}^{j+1} d_{\text{BHV}}(t_i,t'_i).
        \end{split}
    \end{equation*}
    From this we have 
    $$d_{\text{BHV}}([T_1],[T_2]) \leq \sum_{i=1}^{k} d_{\text{BHV}}(t_i,t'_i).$$
    Since the sum can be made arbitrarily small, we have $d_{\text{BHV}}([T_1],[T_2]) = 0$. Since $[T_1] = \bigcup_{\Le} \Lambda^{\Le}(T_1)$, which is a closed set, this implies $T_2 \simeq T_1$. 
    
\end{proof}

\bibliographystyle{biometrika}

\bibliography{bibliography.bib}
\end{document}